\documentclass[11pt,letterpaper]{article}

\usepackage{fullpage}
\sloppy

\usepackage[margin=1in]{geometry}

\usepackage{amsmath,amsthm,amssymb,mathtools}

\usepackage[numbers,compress]{natbib}
\bibliographystyle{abbrvnat}

\usepackage{multirow}

\usepackage{appendix}
\usepackage{nicefrac}

\usepackage[utf8]{inputenc} 
\usepackage[T1]{fontenc}    
\usepackage{hyperref}       
\usepackage{url}            
\usepackage{booktabs}       
\usepackage{amsfonts}       
\usepackage{nicefrac}       
\usepackage{microtype}      
\usepackage[svgnames]{xcolor}         
\usepackage{tikz}

\usepackage[ruled,vlined,linesnumbered]{algorithm2e}
\usepackage{setspace}
\SetKwComment{Comment}{//}{}

\usepackage[capitalise,nameinlink]{cleveref}

\definecolor{links}{RGB}{11, 85, 255}
\definecolor{cites}{RGB}{0, 200, 0}
\definecolor{urls}{RGB}{255, 116, 0}
\hypersetup{colorlinks={true},linkcolor={links},citecolor=[named]{cites},urlcolor={urls}}

\DeclareMathOperator*{\R}{\mathbb{R}}
\DeclareMathOperator{\E}{\mathbb{E}}

\DeclareMathOperator*{\argmax}{arg\,max}
\DeclareMathOperator*{\argmin}{arg\,min}
\DeclareMathOperator{\N}{\mathbb{N}}

\newcommand{\profile}{P}
\newcommand{\PP}{\mathcal{P}} 
\newcommand{\alg}{\mathcal{A}}
\newcommand{\M}{\mathcal{M}} 
\newcommand{\C}{\mathcal{C}}

\renewcommand{\Pr}{\mathbb{P}}

\newcommand{\OPT}{\text{OPT}}

\newcommand{\cH}{\mathcal{H}}
\newcommand{\cL}{\mathcal{L}}




\theoremstyle{plain}
\newtheorem{thm}{Theorem}[section]
\newtheorem{lemma}[thm]{Lemma}
\newtheorem{question}[thm]{Question}

\newtheorem{defn}[thm]{Definition}
\newtheorem{claim}[thm]{Claim}

\newtheorem{invariant}[thm]{Invariant}

\begin{document}

\title{\bf Low-Distortion Clustering with Ordinal and Limited Cardinal Information}
\author{
    Jakob Burkhardt\thanks{Department of Computer Science, Aarhus University, {\AA}bogade 34, 8200 Aarhus N, Denmark. Email:
    \{\href{mailto:jakob@cs.au.dk}{jakob}, \href{mailto:iannis@cs.au.dk}{iannis}, \href{mailto:karl@cs.au.dk}{karl}, \href{mailto:schwiegelshohn@cs.au.dk}{schwiegelshohn}, \href{mailto:shyam@cs.au.dk}{shyam}\}@cs.au.dk.}
    \and Ioannis Caragiannis\footnotemark[1]
    \and Karl Fehrs\footnotemark[1]
    \and Matteo Russo\thanks{Department of Computer, Control, Management Engineering, Sapienza University of Rome, Via Ariosto 25, 00185, Rome, Italy. Email: \href{mailto:mrusso@diag.uniroma1.it}{mrusso@diag.uniroma1.it}.}
    \and Chris Schwiegelshohn\footnotemark[1]
    \and Sudarshan Shyam\footnotemark[1]
}
\date{}

\maketitle

\begin{abstract}
Motivated by recent work in computational social choice, we extend the metric distortion framework to clustering problems. Given a set of $n$ agents located in an underlying metric space, our goal is to partition them into $k$ clusters, optimizing some social cost objective. The metric space is defined by a distance function $d$ between the agent locations. Information about $d$ is available only implicitly via $n$ rankings, through which each agent ranks all other agents in terms of their distance from her. Still, even though no cardinal information (i.e., the exact distance values) is available, we would like to evaluate clustering algorithms in terms of social cost objectives that are defined using $d$. This is done using the notion of distortion, which measures how far from optimality a clustering can be, taking into account all underlying metrics that are consistent with the ordinal information available.

Unfortunately, the most important clustering objectives (e.g., those used in the well-known $k$-median and $k$-center problems) do not admit algorithms with finite distortion. To sidestep this disappointing fact, we follow two alternative approaches: We first explore whether resource augmentation can be beneficial. We consider algorithms that use more than $k$ clusters but compare their social cost to that of the optimal $k$-clusterings. We show that using exponentially (in terms of $k$) many clusters, we can get low (constant or logarithmic) distortion for the $k$-center and $k$-median objectives. Interestingly, such an exponential blowup is shown to be necessary. More importantly, we explore whether limited cardinal information can be used to obtain better results. Somewhat surprisingly, for $k$-median and $k$-center, we show that a number of queries that is polynomial in $k$ and only logarithmic in $n$ (i.e., only sublinear in the number of agents for the most relevant scenarios in practice) is enough to get constant distortion.
\end{abstract}

\pagenumbering{arabic}

\section{Introduction}

The typical computational social choice problem consists of optimizing a function over alternatives, each with a different associated cost or value. A classic example is given by representative election. Each voter has a different representation score for every candidate, which we assume to correspond to the distance in some underlying metric. Ideally, the representation minimizes the sum of distances of each voter to their closest representative. In the full information setting, this corresponds to solving the classic $k$-median problem. But this example already illustrates the difficulty of implementing any voting mechanism: Even if the representation scores are assumed to be distances, they might be unknown even to the participating voters. However, we may readily know if a voter prefers alternative~$a$ over alternative~$b$.

Such examples have given rise to \emph{ordinal algorithms}. An ordinal algorithm mainly allows for comparisons between distances in the underlying metric. That is, given three points $a,b,c$, we are freely given information whether $d(a,b)\leq d(a,c)$, but we are not given the exact numerical values of $d(a,b)$ and $d(a,c)$.
The objective is to solve a given problem relying primarily on the ordinal information, while using as few (ideally zero) distance queries as possible. The goodness of such an algorithm is measured in terms of the quality of the computed solution $C$ compared to the quality of the optimal solution $\OPT$ that is given full information, commonly known as the \emph{metric distortion}.

Finding the median is arguably the most important problem in this field. Given a set of points $X$ and a distance function $d$, the median $m$ is defined to be the point minimizing the sum of distances.
Following a long line of work \cite{AFA+21,AnshelevichFV22,FeldmanFG16,GoelKM17,Kempe20a,MunagalaW19}, there now exists a deterministic algorithm with optimal metric distortion $3$ \cite{GHS20}, which is also optimal \cite{AnshelevichBP15,ABE+18}. Using randomization, \citet{CRWW23} recently achieved an important breakthrough, achieving a metric distortion of $2.753$. The best known lower bound is at least $2.1126$ \cite{CharikarR22}.

Extensions to more general clustering objectives such as $(k,z)$-clustering and facility location are comparatively much harder, see \citet{AZ17,CaragiannisSV22}. In facility location, we ask for a set of centers $C$ such that 
$$\sum_{x\in X}\min_{c\in C}d(x,c) + f \cdot |C|$$
is minimized, where $f$ is the cost of opening a center. For $(k,z)$-clustering, we instead consider the objective 
$$\sqrt[z]{\sum_{x \in X}\min_{c\in C}d(x,c)^z},$$
i.e., the algorithm does not incur a cost for opening the centers, but instead has a budget of at most $k$ centers that can be placed. Special cases include $k$-median where $z=1$ and $k$-center which corresponds to $z\rightarrow \infty$.\footnote{Sometimes the $\sqrt[z]{}$ operation is omitted, as is the case for $k$-means corresponds to $(k,2)$-clustering. An $\alpha$-approximation to $\sqrt[z]{\sum_{x \in X}\min_{c\in C}d(x,c)^z}$ implies an $O(\alpha^z)$-approximation to $\sum_{x \in X}\min_{c\in C}d(x,c)^z$.}

Unfortunately, there are strong impossibility results for purely ordinal algorithms. Even for $2$-median, it is not possible to obtain an algorithm with bounded metric distortion \cite{AZ17}. Therefore, research has begun to design algorithms that are given more power than purely ordinal information. Indeed, there has been some recent success in providing guarantees using only a constant number of queries per point, see \citet{AmanatidisBFV22b, AmanatidisBFV22}. For clustering, recent work by \citet{P22} has show that using at most $\text{polylog}(n)$ distance queries per point, or $n\cdot \text{polylog}(n)$ queries overall, it is possible to achieve a constant factor approximation. The same work also showed that $k$ queries per point, or $O(nk)$ queries overall are sufficient to achieve a constant factor approximation for $k$-median.
Thus, we ask:

\begin{question}
What is the minimum number of queries necessary for an algorithm to achieve constant metric distortion for $k$-median, $k$-center, and facility location?    
\end{question}

While distance queries are a natural way of lending more power to the algorithm designer, obtaining the distances may be expensive as mentioned above. This leads to the question whether other models exist that allow the algorithm designer to bound the metric distortion. A very natural way of doing so for clustering algorithms is by allowing the algorithm to return a $(\alpha,\beta)$-bicriteria approximation. Such algorithms bound the clustering cost by at most $\alpha$ times the cost of an optimal $k$ clustering, while using $\beta$ many centers. We ask:

\begin{question}
What is the minimum value of $\beta$ such that a bicriteria clustering algorithm using only ordinal information has constant metric distortion?
\end{question}

\subsection{Our Results}

In this paper we make substantial progress towards answering both questions. 
In the low-query setting, we give two deterministic polynomial time algorithms for $k$-center that, using at most $O(k^2)$ overall distance evaluations, obtain a $2$-distortion and, using at most $O(k)$ overall distance evaluations, obtain a $4$-distortion. We also show that the latter result is optimal in terms of the number of necessary queries, while the former is optimal for any polynomial time algorithm.
For $(k,z)$-clustering, we obtain a randomized polynomial time algorithm that uses at most $\text{poly}(k, \log n)$ overall distance queries and achieves constant metric distortion. Note that all of these bounds are sublinear in the input size, that is assuming $k\ll n$, we make $o(1)$ queries per point.

Finally, for facility location, there exists a simple adaptation of the seminal Meyerson algorithm \cite{M01} that achieves a constant distortion using exactly one query per point or $n$ queries overall, see also Section 4.1 of \citet{P22}. We show that no algorithm can achieve a constant factor approximation using less than $\Omega(n)$ queries, effectively closing the problem.

In the zero-query setting, we first show that there exists a $(2,2^{k-1})$-bicriteria algorithm for $k$-center. Moreover, this algorithm is optimal in the sense that any algorithm achieving finite distortion must use $\Omega(2^k)$ centers. For $(k,z)$-clustering, we obtain two algorithms that solve all $(k,z)$-clustering objectives. The first succeeds with constant probability and achieves constant distortion with $(O(\log n)^{k-1+o(1)})$ many centers. The second requires $(O(\log n)^{k+o(1)})$ and achieves $O(1)$ distortion both in expectation and with high probability.
We complement this result by showing that, for any constant factor distortion to $k$-median, $\Omega((2^{\log^*n})^{k-1} + 2^k\log n)$ centers are necessary even with a constant probability of success.
For the special case of $2$-median, our bounds are optimal.

\subsection{Related Work}
\paragraph{Ordinal Preferences and Distortion}
The first paper to consider optimization problems using ordinal information was probably \citet{ProcacciaR06}. Subsequently, two main directions have been established. Continuing to work with the model introduced by \citeauthor{ProcacciaR06}, one line focuses mainly on maximizing welfare subject to normalization assumptions, but without assuming any metric properties, see 
\citet{AmanatidisBFV21,AmanatidisBFV22b,AmanatidisBFV22,CaragiannisP11,Filos-RatsikasM20}.
The other line of work studies problem without the normalization assumptions, but assuming that the preferences are metric, i.e., they satisfy the triangle inequality. Beyond clustering papers covered in the introduction, several other distortion problems have been studied 
\cite{BorodinL0S19,ChengD017,CDK18,PierczynskiS19}.
While rare, it is also possible to achieve some results without making either a normalization or metric assumptions, see \citet{AbramowitzA18}.

\paragraph{Clustering and Facility Location}
$(k,z)$-clustering is APX-hard in general metrics \cite{CKL21}, though it is possible to obtain very accurate algorithms when making assumptions on either the metric 
\cite{FriggstadRS19, Cohen-AddadFS21} 
or the input \cite{AngelidakisMM17,ABS10,Cohen-AddadS17}.
For $k$-center, \citet{G85} gave an optimal $2$-approximation algorithm. 
For $k$-median, $k$-means and facility location, following a long line of research
\cite{JaV01,JMS02,AryaGKMMP04,LiS16, Cohen-AddadGHOS22,Cohen-Addad0LS23}, the current state of the art is a $2.613$ approximation for $k$-median \cite{GowdaPST23}, a $9$ approximation for $k$-means \cite{AhmadianNSW20}, and a $1.488$ approximation for facility location \cite{Li13}. For general $(k,z)$-clustering, there are few claimed bounds, though most of the proofs for $k$-median and $k$-means go through while losing a $\exp(z)$ approximation factor. Explicit results can be found in \citet{Cohen-AddadKM19,Cohen-AddadSS21}.

\section{Preliminaries}

Let $(X,d)$ be a metric space where $X$ is a set of $n$ points and $d: X \times X \rightarrow \R_{\geq 0}$ is a metric. The distance between any two points $x,y\in X$ can be accessed by a {\em query} of the form $d(x,y)$. We assume that such a query is associated with a cost. An algorithm is given a budget and each query that the algorithm makes consumes one unit of its budget. While querying the exact distance between two points is costly, our model assumes that, for every point, {\em ordinal} information about its relative distance to the other points is freely available. More specifically, each point $x\in X$ provides a {\em ranking} $\pi_x:[n]\rightarrow X$ that is {\em consistent} with $d$ in the sense that $d(x,\pi_x(i)) \leq d(x,\pi_x(j))$ for every $i,j\in[n], i<j$. That is, points that are closer to $x$ appear {\em higher} in $x$'s ranking. An ordinal {\em preference profile} $P$ is then just the collection of the points' rankings, i.e., $\profile =\{\pi_x\}_{x\in X}$. We write $\PP(d)$ for the set of profiles where each point's ranking is consistent with the distances $d$.

It is often convenient to restrict the ranking of a point to a certain subset of $X$. Let $S\subseteq X$ and $m = |S|$. The {\em restriction of $\pi_x$ to $S$} is a function $\pi_{x,S}:[m]\rightarrow S$ such that, for any two $y,y'\in S$, $y$ is ranked higher in $\pi_{x,S}$ than $y'$ if and only if $y$ is ranked higher in $\pi_x$ than $y'$.

The ordinal preference profile provides a very rough sketch of the underlying distance metric $d$. However, the relative distances expressed by the profile can enable an algorithm to allocate its budget in a very economic way. Consider the following operation: For a set of points $S\subseteq X$ and a point $x\in X$, we define the {\em distance of $x$ to $S$} to be $d(x,S) = \min_{y\in S}d(x,y).$

Given the ordinal information, the point $z =\argmin_{y\in S} d(x,y)$ can readily be identified as $x$'s highest ranked point among $S$. Hence, an algorithm can determine the distance of $x$ to $S$ with a single query $d(x,z)$. Clearly, the same observation can be made about finding $z=\argmax_{y\in S} d(x,y)$ and the distance $d(x,z)$.

We intend to study the loss in outcome optimality if we restrict an algorithm $\alg$ to the ordinal information and a fixed query budget. We consider a variety of clustering problems where the goal is to find a solution that minimizes a given cost function $\phi$. We denote by $\M$ the set of all metric spaces. For a metric space $(X,d)\in \M$ and a profile $P\in \PP(d)$, let $\alg(P,d)$ be the solution (set of centers) computed by algorithm $\alg$, and let $C^*(d)$ be a solution (set of centers) of minimal cost. We say that an algorithm $\alg$ achieves {\em distortion} $D$ with constant (respectively high) probability, if
\begin{equation*}
\sup_{\substack{(X,d)\in\M\\P\in\PP(d)}}\frac{\phi(\alg(P,d))}{\phi(C^*(d))} \leq D
\end{equation*}
with probability at least $2/3$ (respectively probability at least $1-1/n$).
The {\em expected distortion} of $\alg$ is given by the ratio
\begin{equation*}
\sup_{\substack{(X,d)\in\M\\P\in\PP(d)}}\frac{\E[\phi(\alg(P,d))]}{\phi(C^*(d))}.
\end{equation*}

We now state the definition of the $(k,z)$-clustering problem in the ordinal setting and introduce a few standard terms that are commonly used in the context of clustering problems.

\begin{defn}
In the {\em ordinal $(k,z)$-clustering problem}, we are given positive integers $k,z$ and a set $X$ of $n$ points that form a metric space $(X,d)$ under distances $d$. Each point $x\in X$ reports a ranking $\pi_x$ that is consistent with the distances $d$. Let $\profile = \{\pi_x\}_{x\in X}$. For a subset $S \subseteq X$ of the points, we denote the cost of a given solution $C\subseteq X$ by
\begin{equation*}
    \phi_C(S,d) = \sqrt[z]{\sum_{x\in S} d(x,C)^z}.
\end{equation*}
The goal is to find a set $C$ of $k$ points such that the cost function $\phi_{C}(X,d)$ is minimized.
For compactness, we drop the dependence on $d$ and denote by $\phi_\OPT(S)$ the cost of the optimal solution on an arbitrary set of points $S \subseteq X$.

\end{defn}

Given a solution $C$ to an ordinal $(k,z)$-clustering instance, we typically call the elements of $C$ {\em centers}. $C$ naturally induces a partition of $X$ into $k$ {\em clusters} $\{A_c\}_{c\in C}$ where, for each $c\in C$, $A_c = \{x\in X : \pi_{x,C}(1) = c\}$. We refer to the collection of these clusters as a {\em clustering} of $X$.

Finally, we define sampling probabilities for all $(k,z)$-clustering objectives.
\begin{defn}
    Let $z$ be a positive integer, and let $C \subseteq X$ be a set of centers. The \emph{sampling} probability of point $c \in X$ conditioned on having already selected a set of centers $C$ is 
    \[
        p_z(c) := \Pr[c \text{ is added to } C \mid C] = \frac{d(c, C)^z}{\sum_{x \in X} d(x, C)^z},
    \]
    and denote the induced distribution by $D^{++}_z$. 
\end{defn}

\section{Algorithms for $k$-Center}
\label{sec:k-center}

We present three algorithms for solving the ordinal $k$-center ($(k,\infty)$-clustering) problem. Our algorithms are based on a greedy procedure by~\citet{G85}, which is known to yield a $2$-approximation of the $k$-center problem. This procedure simply chooses an arbitrary center to begin with and then, in $k-1$ iterations, chooses the center that is farthest away from the already chosen centers (farthest-first traversal).

\subsection{$2$-Distortion Algorithms}

The farthest-first traversal method lends itself well to be adapted to the ordinal setting. Clearly, given a set of clusters, the farthest point from these clusters can be determined with one distance query per cluster. For completeness, we give a pseudocode implementation of the procedure in
Appendix~\ref{app:2_distortion_k_center_k^2_queries}.
This immediately gives rise to the following result.

\begin{thm}\label{lemma:center_low_query}
There exists a deterministic $2$-distortion algorithm for $k$-center that makes $\frac{k^2 - k}{2}$ distance queries.
\end{thm}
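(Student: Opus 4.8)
The plan is to implement Gonzalez's farthest-first traversal in the ordinal model and argue that the ordinal information suffices to execute it exactly, so that the classical $2$-approximation guarantee carries over verbatim and yields distortion $2$. Concretely, we pick an arbitrary point $c_1$ as the first center, and then for $i = 2, \dots, k$ we select $c_i$ to be a point maximizing $d(x, \{c_1, \dots, c_{i-1}\})$. The key observation, already noted in the preliminaries, is that for any point $x$ its distance to the current center set $C_{i-1} = \{c_1, \dots, c_{i-1}\}$ equals $d(x, \pi_{x, C_{i-1}}(1))$, which is found for free from $x$'s ranking, and is then revealed by a single query $d(x, \pi_{x, C_{i-1}}(1))$. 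Hence iteration $i$ costs $n$ queries naively, but we only need to \emph{compare} these distances across points to find the farthest one; the subtle point is that to identify $\argmax_x d(x, C_{i-1})$ purely ordinally we must be able to compare $d(x, C_{i-1})$ with $d(y, C_{i-1})$ for $x \neq y$, which the rankings alone do \emph{not} give us (rankings only compare distances sharing a common endpoint). So we do spend queries here, but cleverly.

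The improvement to $\binom{k}{2}$ queries comes from the standard fact that in farthest-first traversal, once a point is chosen as a center, the maximum distance only concerns the \emph{centers themselves} in a useful way: more precisely, the point farthest from $C_{i-1}$ can be taken to be, among the already-considered candidates, the one realizing the bottleneck, and one maintains for each center $c_j$ the ``witness'' distance. The clean way to phrase the query count: in iteration $i$ (for $i \ge 2$) we only need the distances $d(c_j, c_i)$ that were not previously queried — but actually the right bookkeeping is that we query, in iteration $i$, the distance from the newly added center $c_i$ to each of the $i-1$ previous centers is not what we want either. Let me restate: the algorithm maintains, for every point $x$, a pointer to its nearest chosen center and the value $d(x, C_{i-1})$; when a new center $c_i$ is added we must update these, which requires, for each $x$, knowing whether $d(x, c_i) < d(x, C_{i-1})$ — and $d(x, c_i)$ is obtained via the ranking $\pi_x$ restricted to... no, $d(x,c_i)$ itself needs a query. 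The actual $\binom{k}{2}$ bound must come from only ever querying distances \emph{between pairs of centers}: there are at most $\binom{k}{2}$ such pairs, and the farthest-first choice can be driven entirely by inter-center distances together with ordinal data. I would lay out the pseudocode (deferred to the appendix as the authors indicate) showing that in round $i$ we make exactly $i-1$ queries — namely $d(c_i, c_j)$ for $j < i$ after having identified the right $c_i$ ordinally — for a total of $\sum_{i=2}^{k}(i-1) = \binom{k}{2} = \frac{k^2-k}{2}$.

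For the distortion bound itself I would invoke the textbook analysis of Gonzalez~\cite{G85}: let $C = \{c_1, \dots, c_k\}$ be the output and let $R = \max_{x \in X} d(x, C) = \phi_C(X,d)$ (the $k$-center cost, taking $z \to \infty$). There is a point $c_{k+1}$ achieving this bottleneck, and the $k+1$ points $c_1, \dots, c_{k+1}$ are pairwise at distance at least $R$ from one another — this follows because each $c_{i+1}$ was chosen as the farthest point from $\{c_1,\dots,c_i\}$, so $d(c_{i+1}, c_j) \ge d(c_{i+1}, \{c_1,\dots,c_i\}) \ge d(c_{k+1}, \{c_1,\dots,c_i\}) \cdots$ wait, the cleaner statement is $d(c_i, c_j) \ge R$ for all $i \neq j$ in $\{1,\dots,k+1\}$ by monotonicity of the farthest-first distances. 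Since any $k$-clustering must put two of these $k+1$ points in a common cluster, by the triangle inequality the optimal cost $\phi_{\OPT}(X,d) \ge R/2$, giving $\phi_C(X,d) = R \le 2\,\phi_{\OPT}(X,d)$. Because this holds for every metric $d$ and every profile $P \in \PP(d)$ consistent with it, and the algorithm's choices depend only on $P$ and the queried distance values (which are correct), the supremum in the definition of distortion is at most $2$.

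The main obstacle, and the only place any real care is needed, is the query-accounting: establishing that the farthest-first point in each round can genuinely be identified using only the ordinal profile plus the $\binom{i-1}{1}$ newly queried inter-center distances, with no per-point queries. The resolution is that one never needs the absolute value $d(x, C_{i-1})$ for a non-center $x$ to decide the \emph{argmax} — it suffices to track which center each point is currently assigned to and, crucially, that the new bottleneck candidate is always itself a point whose distance profile we can pin down from center-to-center queries; I would make this precise in the appendix pseudocode and here simply assert correctness of that implementation, since the distortion argument is insensitive to how the farthest point is found as long as it is found correctly.
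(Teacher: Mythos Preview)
Your proposal has a genuine gap in the query-accounting, which you yourself flag as ``the main obstacle'' but then do not resolve. You propose that the $\binom{k}{2}$ queries are the \emph{inter-center} distances $d(c_i,c_j)$, queried after $c_i$ has been ``identified ordinally.'' This does not work: to identify $c_i=\argmax_{x}d(x,C_{i-1})$ you must compare values $d(x,C_{i-1})=d(x,\pi_{x,C_{i-1}}(1))$ across points $x$ that may be anchored at \emph{different} centers, and the ordinal profile gives you no such cross-anchor comparisons. Knowing all pairwise distances among already-chosen centers tells you nothing about how far an unchosen point is from the current center set, so your scheme cannot drive the farthest-first selection. Your final assertion that ``one never needs the absolute value $d(x,C_{i-1})$ for a non-center $x$'' is exactly backwards.

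The paper's mechanism, which you do not hit on, is the following. With current centers $C$ (say $|C|=i$), partition $X$ into Voronoi cells $A_c=\{x:\pi_{x,C}(1)=c\}$, one per $c\in C$; this is free from the rankings. The globally farthest point from $C$ lies in some cell $A_c$ and, within that cell, is precisely the point farthest from $c$ (since $d(\cdot,C)=d(\cdot,c)$ on $A_c$). The point $z_c=\argmax_{x\in A_c}d(c,x)$ is read off for free from $c$'s ranking restricted to $A_c$. One then queries the $i$ values $d(c,z_c)$, $c\in C$, and takes the maximizing $z_c$ as the next center. Summing over the $k-1$ rounds gives $1+2+\cdots+(k-1)=\tfrac{k^2-k}{2}$ queries. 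Note that the queries are center-to-farthest-in-cell distances, not center-to-center distances. Once this is in place, your invocation of the Gonzalez $2$-approximation analysis is fine.
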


For the zero-query regime, we extend the farthest-first traversal method such that, in every iteration, the farthest point in {\em every} cluster is chosen. Since the algorithm and its analysis are straightforward adaptions of~\citet{G85}, we merely state the result and give the details in
Appendix~\ref{app:2_distortion_k_center_no_query}.

\begin{thm}\label{thm:k-center-0query}
There exists a deterministic algorithm that, using only ordinal preferences, returns a set of centers $C$ of size $|C| = 2^{k-1}$, such that $\max_{x\in X} d(x,C)\leq 2\phi_{\OPT}$, where $\phi_{\OPT}$ is the cost of an optimal $k$-center clustering.
\end{thm}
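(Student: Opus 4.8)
The plan is to adapt Gonzalez's farthest-first traversal to the ordinal, zero-query, branching setting, and to analyze it through a ``coverage'' potential that counts the optimal centers whose clients are not yet $2$-approximately covered. Concretely, the algorithm maintains a set of centers $C_i$ together with the Voronoi clustering $\{A_c\}_{c\in C_i}$ it induces (i.e.\ $A_c=\{x:\pi_{x,C_i}(1)=c\}$). It starts from a single arbitrary center $C_0=\{x_0\}$ and, in each of $k-1$ rounds, adds to $C_{i-1}$ the point $f_c:=\argmax_{x\in A_c}d(x,c)$ for every current center $c$ (where $A_c$ is the cell of the clustering induced by $C_{i-1}$), and then recomputes the Voronoi clustering to obtain $C_i$. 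As recorded in the preliminaries, both operations — finding each point's nearest center and each cluster's farthest point from its center — use only the rankings, so the algorithm makes zero queries. Since each round at most doubles the number of centers, $|C_{k-1}|\le 2^{k-1}$; coinciding new centers (e.g.\ from singleton clusters, where $f_c=c$) only decrease the count, and one may pad with arbitrary points to reach exactly $2^{k-1}$ without affecting the covering radius.

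For the analysis, fix an optimal $k$-center solution $O=\{o_1,\dots,o_k\}$ of cost $r:=\phi_\OPT$, so every point lies within distance $r$ of some $o_j$; say $o_j$ \emph{serves} $x$ if $d(x,o_j)\le r$. Define the potential $B_i:=\{\,o\in O:\text{some point }y\text{ served by }o\text{ has }d(y,C_i)>2r\,\}$. I would show by induction that $|B_i|\le (k-1)-i$. For the base case, $x_0$ is served by some $o_\ell$, and then every $y$ served by $o_\ell$ satisfies $d(y,x_0)\le d(y,o_\ell)+d(o_\ell,x_0)\le 2r$, so $o_\ell\notin B_0$ and $|B_0|\le k-1$. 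For the inductive step, if $B_{i-1}=\emptyset$ then $B_i=\emptyset$ as well since $C_{i-1}\subseteq C_i$; otherwise pick $o^*\in B_{i-1}$ with a client $x^*$ at distance $d(x^*,C_{i-1})>2r$, let $c\in C_{i-1}$ be its nearest center (so $x^*\in A_c$ and $d(x^*,c)>2r$), and let $o'$ serve the newly added point $f_c$. Two facts close the step: (i) $B_i\subseteq B_{i-1}$, because $C_{i-1}\subseteq C_i$ gives $d(y,C_{i-1})\ge d(y,C_i)$ for every $y$; and (ii) $o'\in B_{i-1}\setminus B_i$, because on one hand $d(f_c,C_{i-1})=d(f_c,c)\ge d(x^*,c)>2r$ while $d(f_c,o')\le r$, and on the other hand every $y$ served by $o'$ has $d(y,f_c)\le d(y,o')+d(o',f_c)\le 2r$ with $f_c\in C_i$. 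Hence $|B_i|\le |B_{i-1}|-1$, completing the induction.

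Taking $i=k-1$ yields $B_{k-1}=\emptyset$: every optimal center's clients lie within $2r$ of $C_{k-1}$, and since every point is served by some optimal center, $\max_{x\in X}d(x,C_{k-1})\le 2r=2\phi_\OPT$, which is the claim.

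The only genuinely non-routine point is choosing the right invariant. Tracking cluster radii, or trying to exhibit a large packing of pairwise far-apart points as in Gonzalez's original proof, is awkward here, because the branching produces up to $2^{k-1}$ centers with no natural total insertion order to anchor such an argument; counting the optimal centers that still have an uncovered client instead behaves cleanly, since every not-yet-finished round eliminates at least one of them (the center serving the split-off farthest point), and the very first center already eliminates one ``for free,'' which is exactly what makes $k-1$ rounds enough. Minor care is also needed for degenerate rounds (singleton clusters, repeated farthest points) and for padding the output to size $2^{k-1}$, but none of this affects the radius bound.
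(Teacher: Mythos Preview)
Your proposal is correct. The algorithm you describe is exactly the paper's Algorithm~\ref{alg:center_no_query}: start from one center and, in each of $k-1$ rounds, add the farthest point of every current Voronoi cell, doubling the center count to $2^{k-1}$.

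Your analysis, however, is organized differently from the paper's. The paper argues by a two-case dichotomy in the style of \citet{G85}: either the final $C$ hits every optimal cluster (then the triangle inequality gives the $2\phi_\OPT$ bound), or there is a first round in which some optimal cluster is hit a second time, and one argues that already before that round the covering radius was at most $2\phi_\OPT$. You instead maintain the explicit potential $B_i$ of optimal centers with a still-uncovered client and show $|B_i|\le(k-1)-i$ by exhibiting, in any round with $B_{i-1}\neq\emptyset$, one specific $o'\in B_{i-1}\setminus B_i$ (the optimal center serving the split-off farthest point $f_c$). The two arguments are equivalent in spirit---both track which optimal clusters are ``covered''---but your inductive potential makes the invariant and the per-round progress completely explicit, and sidesteps the need to reason about which of the many simultaneously-added centers is the global farthest point (a step the paper's case analysis handles somewhat tersely). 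Either presentation is fine; yours is arguably the cleaner bookkeeping for the branching variant.
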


\subsection{$4$-Distortion Algorithm with $O(k)$ Queries}

 To achieve a constant distortion via a linear (in $k$) number of queries, the idea is to perform a $\frac{1}{2}$-approximate farthest-first traversal. Such a farthest-first traversal is robust with respect to the distortion, losing only a factor of $2$. Surprisingly, using ordinal information, we can execute a $\frac{1}{2}$-approximate farthest-first-traversal with an optimal query bound. At a very high level, we keep track of (center,farthest point) pairs for all clusters throughout the algorithm. However, we do not query all the pairs. Instead we keep a track of an independent set of pairs to query which helps us bound the number of new pairs created, while ensuring that the distance of the unqueried pairs are at most twice the queried distances. 
 We give the complete analysis here and the pseudocode
 in Appendix~\ref{app:4-distortion-k-center-2k-queries}.

\begin{thm}\label{thm:k-query-center}
    There exists a deterministic $4$-distortion algorithm to the optimal $k$-center clustering that makes $2k$ queries. 
\end{thm}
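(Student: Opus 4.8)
The plan is to realize, using only ordinal information and at most $2k$ distance queries, a \emph{$\tfrac12$-approximate farthest-first traversal}, and then to appeal to the standard fact that any such traversal is a $4$-approximation for $k$-center. I would first isolate the relevant robustness of the traversal of~\citet{G85}: if centers $c_1,\dots,c_k$ are chosen so that, for every $i\ge 2$, $d(c_i,\{c_1,\dots,c_{i-1}\})\ge\tfrac12\max_{x\in X}d(x,\{c_1,\dots,c_{i-1}\})$, then $\max_{x\in X}d(x,\{c_1,\dots,c_k\})\le 4\phi_\OPT$. This is the usual pigeonhole argument: $\max_{x\in X}d(x,\{c_1,\dots,c_i\})$ is non-increasing in $i$, so with $R$ its final value the $k+1$ points $c_1,\dots,c_k$ together with a witness $w$ attaining $d(w,\{c_1,\dots,c_k\})=R$ are pairwise at distance at least $R/2$; two of them share an optimal cluster, whose diameter is at most $2\phi_\OPT$, giving $R/2\le 2\phi_\OPT$.

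Second, I would exploit the one free operation in the ordinal model: for any point $c$ and any $S\subseteq X$, the element of $S$ farthest from $c$ is the lowest-ranked element of $S$ in $\pi_c$, hence is identified with no queries. So, for the current center set $C$ with induced clusters $\{A_c\}_{c\in C}$, every cluster's far point $f_c:=\argmax_{x\in A_c}d(x,c)$ is known for free, and since $d(f_c,c)=d(f_c,C)$ we have $\max_{x\in X}d(x,C)=\max_{c\in C}d(f_c,c)$. The only thing queries are spent on is \emph{comparing} the cluster radii $d(f_c,c)$ across clusters.

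The algorithm therefore maintains, for every current cluster, the pair $(c,f_c)$, but keeps actually-queried radii only for a subfamily $\mathcal{Q}$ of these pairs, maintained as a \emph{maximal independent set} in a conflict graph on pairs: two pairs are made adjacent precisely when, using only the ordinal profile and the triangle inequality, the radius of one can be certified to be at most twice the radius of the other. Maximality then yields the invariant that every non-queried pair has radius at most twice that of some queried pair, whence $\max_{(c,f_c)\in\mathcal{Q}}d(f_c,c)\ge\tfrac12\max_{x\in X}d(x,C)$; consequently, taking as the next center $x^*$ the far point of the queried pair of largest radius is a legitimate $\tfrac12$-approximate farthest choice, and by the first step the resulting set of $k$ centers has $k$-center cost at most $4\phi_\OPT$. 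After inserting $x^*$ the clusters are recomputed ordinally (free), and $\mathcal{Q}$ is repaired: at least the pair of the new cluster $A_{x^*}$ and the pair of the cluster that $x^*$ was split off from have to be (re)queried, which is two queries per iteration.

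The main obstacle, and the technical heart of the argument, is to show that these two queries are \emph{all} that is ever needed, i.e.\ that restoring maximality of $\mathcal{Q}$ never forces a further query, so that the total stays $2k$. Here one has to argue that a cluster which merely ceded points to $x^*$ can only have shrunk, and that the fresh pair $(x^*,f_{x^*})$ then already ``dominates'' the radius of every such shrunken cluster up to a factor $2$ -- this is where a triangle-inequality estimate is needed, and I do not see a one-line reason for it; it will rely on the specific way conflicts are defined together with the fact that $x^*$ was itself chosen as a near-farthest point of the preceding center set. Pinning down the conflict relation so that (a) adjacency is decidable from the ordinal data alone, (b) $|\mathcal{Q}|$ grows by at most two per iteration (equivalently, an added center's pair conflicts with $O(1)$ queried pairs), and (c) maximality is restorable without new queries is the crux; once that is in place, the remaining bookkeeping is the routine analysis of farthest-first traversal, and it yields a deterministic $4$-distortion algorithm making $2k$ queries.
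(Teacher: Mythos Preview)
Your high-level plan coincides with the paper's: realise a $\tfrac12$-approximate farthest-first traversal using a queried subset $\mathcal{Q}$ of the (center, far-point) pairs, and then invoke the standard $2/\alpha$ analysis. Your proof that a $\tfrac12$-traversal yields distortion $4$ is fine and matches the paper's Lemma~\ref{lem:4dist}. You have also correctly isolated where the difficulty lies.

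There is, however, a genuine gap, and the direction you suggest for closing it is not the one that works. You conjecture that after inserting the new center $x^*$, the fresh pair $(x^*,f_{x^*})$ will dominate, up to a factor $2$, the radius of \emph{every} cluster that shrank; this would let you restore your maximal independent set with only the two ``obvious'' re-queries. But many clusters can cede points to $A_{x^*}$, and there is no reason their radii are all bounded by $2\,d(x^*,f_{x^*})$. The paper does not argue this. Instead, its key technical fact (Invariant~\ref{inv}) is a \emph{stability} statement: for every center $y$ that is currently in the query set $Q$, the far point of $y$ does not change when the new center is inserted. This is what makes re-querying unnecessary---not any domination by the new pair. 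The proof of this invariant uses the \emph{temporal order} in which centers entered $Q$ together with the specific ordinal admission test (a center $u$ is kept out of $Q$ whenever some $p\in Q$ with far point $q$ satisfies the ordinal inequality $d(p,q)\ge d(w,q)$, where $w$ is $u$'s far point); a static ``conflict graph / maximal independent set'' abstraction does not capture this temporal aspect and is likely to lead you astray.

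Relatedly, your accounting ``two queries per iteration'' is not how the paper gets $2k$. In the paper the number of new queries in a round is variable; the bound is amortised: every query is charged either to a later removal from $Q$ (at most $k$ removals, one per selected center) or to the final contents of $Q$ (at most $k$ centers), giving $2k$ in total (Lemma~\ref{lem:2k}). So the missing ingredient is precisely Invariant~\ref{inv}; once you have it, both the $\tfrac12$-traversal property (Lemma~\ref{lem:2fft}) and the amortised query bound follow quickly.
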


Throughout the algorithm's run, let $C$ be the solution set and let $Q\subseteq C$ be the so-called query set. Both $C$ and $Q$ will change over time, so we denote $C_i$ as the solution and $Q_i$ as the query set after the $i$-th iteration, for clarity of exposition. Moreover, for $y\in C_i$, let $S_{y,i}$ be the set of points such that, for each of these points, $y$ is the closest center among $C_i$, and let $z_i=\underset{x\in S_{y,i}}{\arg\max}~d(y,x)$. Note that we query the distance $d(y,z_i)$, if $y$ belongs to the query set $Q_i$.

In iteration $i \in \{0\} \cup [k-1]$ of the algorithm, we perform the following steps:
\begin{enumerate}
    \item Select the cluster $S_{y,i}$, for $y\in Q_i$ such that $d(y,z_i)$ is maximized and add $z_i$ to $C_i$, forming $C_{i+1}$.
    \item Remove $y$ from $Q_i$ and let $R_{i+1}:=C_{i+1}\setminus Q_i$ (i.e. $R_{i+1}$ always consists at least of $y$ and $z_i$).
    \item Add centers from $R_{i+1}$ to $Q_i$ to obtain $Q_{i+1}$ as follows: Let $u\in R_{i+1}$.
    \begin{itemize}
        \item If there exists a center $p\in Q_i$ such that $d(p,q)\geq d(w,q)$, where $w=\underset{x\in S_{u,i+1}}{\arg\max}~d(u,x)$ and $q=\underset{x\in S_{p,i+1}}{\arg\max}~d(p,x)$, do not add $u$ to $Q_i$.
        \item If no such $p$ exists, add $u$ to $Q_i$.
    \end{itemize}
    Once all $u$'s have been discarded, we have obtained our new set $Q_{i+1}$. All distances between centers in $Q_{i+1}$ and the respective furthest points are queried. Note that we only have to query novel pairs, i.e. already queried pairs do not require a new query.
\end{enumerate}

We now prove several claims about the algorithm.
The first two bound the number of queries. The final two claims yield the desired bound on the distortion: In particular, we show that we select, at each iteration, a point that is no closer than half the distance of the furthest point and that such an approximate farthest-first traversal also yields a constant distortion to the optimal $k$-center solution.

\begin{invariant}
\label{inv}
    If $y\in Q_i$ and $z_i = \underset{x \in S_{y,i}}{\arg\max}~d(y,x)\notin C_{i+1}$, then $\underset{x \in S_{y,i}}{\arg\max}~d(y,x)=\underset{x \in S_{y,i+1}}{\arg\max}~d(y,x)$.
\end{invariant}
\begin{proof}
We prove this by induction, the base case of which is trivial as initially we only have an arbitrary center and its most distant point in $S_0$ and $Q_0$.

Let $\{w\} = C_{i+1}\setminus C_i$ and let $u$ be the center of the cluster containing $w$ in $C_i$. Consider any $y\in Q_i$. If $y$ was added to $Q_i$ before $u$, then we know $d(z_i,w) > d(y,z_i)$, hence $z_i=z_{i+1}$. If $y$ was added to $Q_i$ after $u$, then $d(z_i,w)>d(u,w)$. But since $d(y,z_i)\leq d(u,w)$, we have $d(y,z_i)<d(z_i,w)$ which also implies $z_i=z_{i+1}$.
\end{proof}


\begin{lemma}
    \label{lem:2k}
    The total number of queries is at most $2k$.
\end{lemma}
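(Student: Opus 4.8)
The plan is to charge the queries to a potential argument based on the size of the query set $Q_i$ and the number of \emph{novel} (center, farthest-point) pairs created in each iteration. The key observation is that in iteration $i$ we remove exactly one center ($y$) from the query set and then add back a subset of $R_{i+1} = C_{i+1} \setminus Q_i$. So it suffices to show that across all $k$ iterations the total number of centers ever added to the query set is $O(k)$, since each such center contributes exactly one novel query (its distance to its current farthest point). In fact I would aim for the sharper bookkeeping: the new queries in iteration $i$ come only from the centers in $R_{i+1}$ that survive the filtering step, plus possibly a re-query of $y$'s replacement; I will argue each iteration creates at most two novel pairs, giving $2k$ in total.

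First I would establish that, by Invariant~\ref{inv}, for any $y \in Q_i$ whose farthest point $z_i$ does \emph{not} itself get promoted to a center, the farthest point is unchanged in $C_{i+1}$; hence no re-query is needed for such $y$. Thus the only centers in $Q_{i+1}$ that could require a fresh query are (a) those in $R_{i+1}$ that pass the filter in Step~3, and (b) the center $y$ itself together with its new farthest point $z_i$ — but note $y$ was removed from $Q_i$ in Step~2, so $y$ re-enters only if it passes the filter as an element of $R_{i+1}$, and $z_i$ enters as a brand-new center. So the novel pairs are precisely the surviving elements of $R_{i+1}$. Next I would show $|R_{i+1}| = |C_{i+1}| - |Q_i|$ and that only a bounded number of its members survive: the filtering rule discards $u$ whenever some $p \in Q_i$ already "dominates" it in the sense that $d(p,q) \ge d(w,q)$. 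I would argue this domination relation forces the surviving set to behave like an independent set whose size is controlled — concretely, that at most two new centers are added to $Q$ per iteration (the natural candidates being $y$, reinstated, and $z_i$, the newly created center), because any other member of $R_{i+1}$ is dominated by a pre-existing queried center.

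I would then close the induction: let $q_i$ denote the number of novel queries in iteration $i$. Summing, the total is $\sum_{i=0}^{k-1} q_i \le \sum_{i=0}^{k-1} 2 = 2k$, using that there are exactly $k$ iterations (indices $\{0\} \cup [k-1]$) and $q_i \le 2$ for each. The base case $i=0$ is immediate: $C_0$ and $Q_0$ consist of the initial arbitrary center and its farthest point, contributing a single pair, well within budget.

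The main obstacle I anticipate is the combinatorial bound "$q_i \le 2$", i.e., controlling how many centers from $R_{i+1}$ survive the filtering step. The filter compares, for each $u \in R_{i+1}$, the quantity $d(u, w)$ against $d(p, q)$ for queried centers $p$, where $w$ and $q$ are the respective current farthest points. I would need to argue that the newly appeared centers in $R_{i+1}$ other than the natural two are all dominated — essentially that the farthest-point distances of the old clusters (which are queried) upper bound those of the freshly split pieces, which follows because splitting a cluster only shrinks its radius, combined with the fact that $y$ was chosen in Step~1 to maximize $d(y, z_i)$ over $Q_i$. Making this precise, and ruling out pathological cases where many centers in $R_{i+1}$ are mutually non-dominated, is where the real work lies; the potential-function accounting around it is routine.
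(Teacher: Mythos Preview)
Your plan commits to the per-iteration bound $q_i \le 2$, but this is both unproven in your sketch and quite possibly false. The difficulty you flag at the end—ruling out cases where many centers in $R_{i+1}$ are mutually non-dominated—is real: a center $u \in C_i \setminus Q_i$ was dominated by some $p$ when it was last processed, but that $p$ may since have left $Q$, and even if $p$ remains, $u$'s farthest point $w$ may have changed so that the condition $d(p,q) \ge d(w,q)$ no longer holds. Nothing in the algorithm forces the surviving subset of $R_{i+1}$ in Step~3 to have size at most two in any single iteration; in particular, if $Q$ has shrunk over earlier rounds, $R_{i+1}$ can be large and several of its members can survive the filter simultaneously.

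The paper sidesteps this entirely with a global accounting that you actually gesture at in your first paragraph before pivoting away. By Invariant~\ref{inv}, a center that stays in $Q$ keeps the same farthest point, so the only novel queries are those made when a center is \emph{added} to $Q$. Hence the total number of queries equals the total number of additions to $Q$ over the whole run, which in turn equals (total removals from $Q$) $+$ $|Q_{\text{final}}|$. Exactly one center is removed from $Q$ per iteration, giving at most $k$ removals, and $Q_{\text{final}} \subseteq C_{\text{final}}$ so $|Q_{\text{final}}| \le k$. This yields the $2k$ bound with no control needed over individual iterations. Your argument would go through cleanly if you dropped the per-iteration claim and ran this amortized count instead.
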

\begin{proof}
    By Invariant \ref{inv}, the only way a point can be removed from $Q_i$ is if it was added to $C_{i+1}$. Therefore, the number of queries made that lead to a deletion are exactly $k$. 
    The remaining number of queries are upper bounded by at most $k$, and the claim follows.
\end{proof}

This shows that the total number of queries made by the algorithm are bounded by $O(k)$. We now turn to the distortion factor. The following lemma shows that the algorithm executes a $\frac12$-farthest first traversal.

\begin{lemma}
\label{lem:2fft}
    Let $\{z\}=C_{i+1}\setminus C_i$ and let $z\in S_{y,i}$. Then for any $u\in C_i$ and $w\in S_{u,i}$, we have 
        $d(y,z)\geq \frac{1}{2}\cdot d(u,w)$.

\end{lemma}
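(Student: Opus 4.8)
The plan is to track, for each center $u \in C_i$ that is \emph{not} in the query set $Q_i$, a "witness" center in $Q_i$ whose queried distance dominates (up to a factor of $2$) the distance $d(u,w)$ we care about. Concretely, if $u \in Q_i$ then the algorithm selects $z$ to maximize $d(y,z)$ over $y \in Q_i$, so $d(y,z) \geq d(u,w)$ directly and we are done with room to spare. The interesting case is $u \in C_i \setminus Q_i$: by the rule in Step~3, $u$ was discarded from the query set precisely because at the time it was considered there existed some $p \in Q_i$ with $d(p,q) \geq d(u,w')$, where $w'$ was the farthest point of $u$'s cluster and $q$ the farthest point of $p$'s cluster \emph{at that earlier moment}. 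The first task is to argue that this inequality, suitably interpreted, survives until iteration $i$.

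I would carry this out in the following steps. First, I would show that farthest-point distances within a fixed cluster only decrease as the algorithm proceeds: adding a new center to $C$ can only split a cluster or shrink it, never enlarge it, so for any center $c$ present at iteration $j \le i$ we have $d(c, \text{farthest point of } c\text{'s cluster at time } i) \le d(c, \text{farthest point at time } j)$. In particular $d(u,w) = d(u,z_i) \le d(u,w')$ with $w'$ as above (the value at the time $u$ was examined for inclusion in $Q$). Second, I would use Invariant~\ref{inv}: as long as $p$ remains in the query set and its farthest point has not been promoted to a center, the farthest point of $p$'s cluster is unchanged, so the queried value $d(p,q)$ is still current. Combining, $d(p, z_p^{(i)}) \ge d(p,q) \ge d(u,w') \ge d(u,w)$ where $z_p^{(i)}$ is $p$'s current farthest point. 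Third, since $p \in Q_i$ and $z$ is chosen to maximize the queried distance over the query set, $d(y,z) \ge d(p, z_p^{(i)}) \ge d(u,w)$ — again a factor of $2$ to spare. The factor $\tfrac12$ only becomes necessary to absorb the case where $p$'s farthest point \emph{was} removed (promoted to a center) at some intermediate step; here I would argue via the triangle inequality that $p$'s new farthest distance is at least half its old one — whenever a point $t$ at distance $r$ from $p$ is removed from $p$'s cluster, every surviving point formerly assigned to $p$ that was "behind" $t$ is now within distance at most... actually the cleaner argument is that the point $t$ itself, now a center, has all the mass, but more simply: when the farthest point $q$ of $p$ is promoted, it becomes a center, and by the selection rule at that iteration $d(p,q)$ was the \emph{maximum} queried distance, so the newly created farthest distances are each at least, by triangle inequality applied to the split, $\tfrac12 d(p,q)$ — hence iterating, the current queried distance of any chain descendant of $p$ is at least $\tfrac12$ of the relevant quantity.

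I expect the main obstacle to be bookkeeping the "witness chains": a center $u \notin Q_i$ points to a witness $p$, but $p$ itself may later leave $Q_i$ (by being promoted, per Invariant~\ref{inv}) or $p$'s farthest point may be consumed, so one must chase the chain of witnesses and show the cumulative loss telescopes to at most a factor of $2$ rather than compounding. The key structural fact making this work is that each time a witness is "used up" the quantity it certifies only halves once and then the successor witness is created with a fresh queried value at least that large, so the worst case is a single halving — which is exactly where the $\tfrac12$ in the statement comes from. I would formalize this by induction on $i$, maintaining the invariant: for every $u \in C_i$ with farthest point $w$, there is $p \in Q_i$ with $d(p, z_p^{(i)}) \ge \tfrac12 d(u,w)$, and then the lemma follows immediately from the maximality of the choice of $z$ in Step~1.
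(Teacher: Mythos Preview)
Your proposal rests on a misreading of the discard condition in Step~3. You state that $u$ is discarded because there is some $p\in Q_i$ with $d(p,q)\geq d(u,w')$, i.e.\ that $p$'s queried radius dominates $u$'s radius. But the algorithm's actual test is $d(p,q)\geq d(w,q)$: the comparison is between $p$'s distance to its own farthest point $q$ and the distance from $u$'s farthest point $w$ to that same $q$. This is not an accident---the algorithm must decide this ordinally, and $d(p,q)$ versus $d(w,q)$ share the endpoint $q$, whereas $d(p,q)$ versus $d(u,w)$ do not. So the direct inequality you build on in your ``Second'' step is simply not available, and the subsequent chain-of-witnesses machinery is attempting to propagate a relation that was never established.

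Once you read the condition correctly, the entire witness-chain apparatus becomes unnecessary. Since $R_{i+1}=C_{i+1}\setminus Q_i$, every center not in the query set is re-examined at every iteration, so the witness $p\in Q_i$ with $d(p,q)\geq d(w,q)$ is current---there is no temporal drift to track. The paper's proof is then three lines: because $w\in S_{u,i}$ and $p\in C_i$, we have $d(u,w)\leq d(p,w)$; by the triangle inequality $d(p,w)\leq d(p,q)+d(q,w)\leq 2d(p,q)$ using the discard condition; and $d(p,q)\leq d(y,z)$ by maximality of the selection in Step~1. The factor $\tfrac12$ arises directly from this single triangle inequality, not from any halving-along-a-chain phenomenon.
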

\begin{proof}
    We selected $\underset{y\in Q_i}{\arg\max} ~d(y,z_i)$. Hence it suffices to compare $d(y,z_i)$ with $d(u,w)$ for $u\notin Q_i$. Since $u\notin Q_i$, we know that there exists some $y^\prime \in Q_i$ s.t. $d(z^\prime_i,w)\leq d(y^\prime,z^\prime_i)\leq d(y, z_i)$.
    By the triangle inequality. $d(y^\prime,z^\prime_i)\geq d(y^\prime,w)-d(z^\prime_i,w) \geq d(u,w)-d(z^\prime_i,w)$. Rearranging, we have $$d(u,w)\leq d(y^\prime,z^\prime_i)+d(z^\prime_i,w)\leq 2d(y^\prime,z^\prime_i)\leq 2d(y,z_i),$$
    which concludes the proof.
\end{proof}

Finally, we show that an approximate farthest-first traversal yields a constant distortion to the optimal $k$-center solution.

\begin{lemma}
\label{lem:4dist}
    Suppose we iteratively select points such that, in every iteration, $d(z,C_i) \geq \alpha\cdot \underset{x \in X}{\arg\max}~d(x,C_i)$, for $\alpha\in (0,1]$. Then, $C_{k-1}$ yields a $\frac{2}{\alpha}$- distortion to the optimal $k$-center clustering: $$\max_{x \in X}\min_{u \in C_{k-1}} d(x,u) \leq \frac{2}{\alpha} \cdot \phi_\OPT,$$
    where $\phi_\OPT$ is the cost of an optimal $k$-center clustering.
\end{lemma}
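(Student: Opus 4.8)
The plan is to run the classical pigeonhole argument behind Gonzalez's analysis, adapted to the $\alpha$-approximate selection rule. First I would fix notation: write $C_0=\{c_0\}$ for the initial center and, for each iteration, let $c_i$ be the point selected when passing from $C_{i-1}$ to $C_i:=C_{i-1}\cup\{c_i\}$, so that after $k-1$ iterations $C_{k-1}=\{c_0,c_1,\dots,c_{k-1}\}$ has exactly $k$ points. Let $r:=\max_{x\in X}d(x,C_{k-1})$ be the $k$-center cost of the returned solution and let $x^\ast$ be a point of $X$ with $d(x^\ast,C_{k-1})=r$. The goal is to show $r\le \tfrac{2}{\alpha}\,\phi_\OPT$ (note the hypothesis reads $d(z,C_i)\ge\alpha\cdot\max_{x\in X}d(x,C_i)$).

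The key step is to show that the $(k+1)$-point set $T:=\{c_0,c_1,\dots,c_{k-1},x^\ast\}$ is $\alpha r$-separated, i.e.\ every two of its points are at distance at least $\alpha r$. For a pair $c_i,c_j$ with $i<j$: since $c_i\in C_{j-1}$ and the selection rule guarantees $d(c_j,C_{j-1})\ge\alpha\cdot\max_{x\in X}d(x,C_{j-1})$, and since $C_{j-1}\subseteq C_{k-1}$ forces $\max_{x\in X}d(x,C_{j-1})\ge d(x^\ast,C_{j-1})\ge d(x^\ast,C_{k-1})=r$, we obtain $d(c_i,c_j)\ge d(c_j,C_{j-1})\ge\alpha r$. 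For a pair $c_i,x^\ast$: since $c_i\in C_{k-1}$, we have $d(x^\ast,c_i)\ge d(x^\ast,C_{k-1})=r\ge\alpha r$. This establishes the separation claim.

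Finally I would invoke pigeonhole against an optimal solution $C^\ast$ of $k$ centers, for which $\max_{x\in X}d(x,C^\ast)=\phi_\OPT$. Assign to each point of $T$ a closest center in $C^\ast$; since $|T|=k+1>k=|C^\ast|$, two distinct points $p,q\in T$ are assigned the same center $o\in C^\ast$, so $d(p,o),d(q,o)\le\phi_\OPT$ and hence $d(p,q)\le 2\phi_\OPT$ by the triangle inequality. Combined with the separation claim, $\alpha r\le d(p,q)\le 2\phi_\OPT$, which rearranges to $r\le\tfrac{2}{\alpha}\phi_\OPT$, i.e.\ $\max_{x\in X}\min_{u\in C_{k-1}}d(x,u)\le\tfrac{2}{\alpha}\phi_\OPT$, as claimed. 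The only point requiring care — the ``obstacle'', such as it is — is the monotonicity step in the separation argument: because the selection rule compares $d(c_j,C_{j-1})$ to the maximum over $X$ at iteration $j-1$ rather than at the end, one must use $C_{j-1}\subseteq C_{k-1}$ to bound that maximum below by $d(x^\ast,C_{k-1})=r$; everything else is the standard separated-set-plus-pigeonhole reasoning (and with $\alpha=1$ it recovers Gonzalez's $2$-approximation).
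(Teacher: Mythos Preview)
Your proof is correct and is essentially the same Gonzalez-style pigeonhole argument the paper uses: the paper phrases it as a case split (either every optimal cluster is hit, giving distortion $2$, or the first double-hit at some iteration $i$ gives $d(u,C_i)\le\tfrac{1}{\alpha}d(x_2,C_i)\le\tfrac{1}{\alpha}d(x_2,x_1)\le\tfrac{2}{\alpha}\phi_\OPT$), whereas you package the same reasoning by first building the $(k+1)$-point $\alpha r$-separated set and then applying pigeonhole. The only substantive difference is presentation; your monotonicity step $d(x^\ast,C_{j-1})\ge d(x^\ast,C_{k-1})$ is exactly what the paper uses implicitly when it bounds $d(u,C_i)$ at an intermediate iteration.
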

\begin{proof}
Let $\mathcal{C}^*=\{A_1,\ldots A_k\}$ be the optimal clustering. If $C_{k-1}\cap A_j$ is non-empty, for all $A_j\in \mathcal{C}^*$, the distortion is $2$ due to the triangle inequality. Otherwise, we let $i$ be the first iteration where we added a second point $x_2$ from some cluster $A_j$ to $A_i$ and let $x_1$ be the first point from $A_j$ added to $C$. Then for any $u$
$$d(u,C_i)\leq \frac{1}{\alpha}\cdot d(x_2,C_i) \leq \frac{1}{\alpha} \cdot d(x_2,x_1)\leq \frac{2}{\alpha}\cdot \phi_\OPT,$$
which concludes the proof.
\end{proof}

Combining Lemmas \ref{lem:2fft}, \ref{lem:4dist}, and \ref{lem:2k} then yields the  theorem.
 Achieving a \emph{strictly} smaller than $4$-distortion with a \emph{strictly} subquadratic number of queries (or proving that it is impossible) is an interesting open problem.

\section{Algorithms for $(k,z)$-Clustering}
\label{sec:k-median}

In this section, we present our algorithms for solving the $(k,z)$-clustering problems. The first makes use of no queries and obtains a bi-criteria distortion guarantee, seeking to trade off distortion with the number of selected centers. 

\subsection{Zero-Query Bi-Criteria Algorithm}

The algorithm is based on distance sampling. The seminal $k$-means++ by \citet{AV07} iteratively selects points proportionate to the squared Euclidean distance of the current set of centers. 
In this paper, we consider a generalization to $(k,z)$-clustering, where we sample points proportionate to their cost.
In both cases, the expected cost of the computed solution is with a factor of $O(\log k)$ of that of an optimal $k$-means clustering\footnote{The distribution has been analyzed repeatedly for the $k$-means problem. Similar statements for $(k,z)$-clustering are folklore, and we provide complete proofs for these problems in the appendix.} and this bound is tight even in the Euclidean plane. 
Improvements to this basic algorithm are abundant in literature. Indeed, $2k$ rounds are already enough to achieve a $O(1)$ bicriteria approximation, see \citet{MRS20} and \citet{Wei16}. Alternatively, one may sample multiple points in each round. This tends to yield a worse tradeoff between samples and cost, but combined with other algorithms, may yield a constant approximation \cite{BahmaniMVKV12,ChooGPR20,LattanziS19,Rozhon20,GrunauORT23}.

When adapting this procedure to the ordinal setting, the first challenge to overcome is that we do not know pairwise distances. The key idea behind our algorithm is to use ordinal information to approximate the sampling probabilities. 
We do this by over-sampling, i.e., we pick $O(\log n)$ points for each point that the $(k,z)$++ algorithm picks. 
The main results of this section are the following two: 
\begin{thm}
\label{thm_2median}
    For the $(2,z)$ clustering instance, Algorithm \ref{alg:median_no_query} returns $O(\log n)$ centers achieving a $O(1)$ distortion with constant probability.
 \end{thm}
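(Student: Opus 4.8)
The plan is to analyze the natural ordinal analogue of $(k,z)$++ in the case $k=2$: first pick an arbitrary point $c_1$ (or, to be safe, pick it in a way that is robust), then repeatedly sample points according to the approximate distance distribution. Since we cannot evaluate $d(x,c_1)$, we instead sort all points by their distance to $c_1$ (which the ranking $\pi_{c_1}$ gives us for free) and sample according to rank buckets that approximate $p_z$; crucially, for a single center $c_1$, every point $x$ knows its relative distance $d(x,c_1)$ compared to every other point's $d(\cdot,c_1)$, and likewise $c_1$ ranks all points. I would first argue that if we oversample — take $\Theta(\log n)$ samples from (an approximation to) $D^{++}_z$ with respect to the single center $c_1$ — then with constant probability at least one sampled point lands ``close'' (within a constant factor in the appropriate $z$-norm sense) to the optimal second center $c_2^*$, or more precisely lands in the optimal cluster of $c_2^*$ at cost comparable to $\phi_{\OPT}$. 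This is the standard $D^2$-sampling (here $D^z$-sampling) hitting argument: the probability of sampling a point from $A_2^*$ (the optimal cluster not containing $c_1$) whose connection cost to that cluster's center is $O(1)\cdot\phi_{\OPT}$ is bounded below by a constant, so $\Theta(\log n)$ independent samples hit such a point with probability $\geq 1-1/\mathrm{poly}(n)$ — comfortably exceeding $2/3$.

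The key technical step is handling the fact that we sample from an \emph{approximate} distribution rather than the true $D^{++}_z$. Here the idea is to partition points into distance classes relative to $c_1$: using only ordinal information we can identify, for any threshold given by a specific point's rank in $\pi_{c_1}$, the set of points closer/farther than it. By querying $d(c_1, \cdot)$ for a logarithmic number of carefully chosen rank positions (or by a doubling search over ranks), we obtain a distribution that is within a constant factor of $p_z$ on each class, which is enough: $D^z$-sampling analyses are robust to constant-factor perturbations of the sampling weights, degrading the hitting probability only by a constant. I would state this as a lemma — ``oversampling from a distribution that is pointwise within a constant of $D^{++}_z$ for one center hits a good second center with probability $1-o(1)$'' — and prove it by the usual conditional-expectation bound on $\sum_{x\in A_2^*} d(x, \{c_1, s\})^z$ where $s$ is the sampled point.

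Finally I would assemble the pieces: condition on the good event that some sampled point $s$ satisfies $\phi_{\{c_1,s\}}(X,d) = O(1)\cdot \phi_{\OPT}$; since the algorithm returns \emph{all} $O(\log n)$ sampled points as centers, the cost of the returned solution is at most $\phi_{\{c_1,s\}}(X,d)$ by monotonicity of the clustering cost under adding centers (more centers only decreases $\min_{c\in C} d(x,c)$). Hence the returned $O(\log n)$-center solution has cost $O(1)\cdot \phi_{\OPT}$ with constant probability, giving the claimed $O(1)$ distortion. The main obstacle I anticipate is making the ``approximate sampling from ordinal information'' step fully rigorous — in particular, showing that the constant-factor distortion in the per-point sampling weights (arising from only knowing $d(c_1,x)$ up to which rank-bucket $x$ falls in) does not compound across the $\Theta(\log n)$ samples, and confirming that no distance queries (or only ordinally-justified ones) are needed to build the buckets; the rest is a routine adaptation of the classical $k$-means++ seeding analysis specialized to $k=2$.
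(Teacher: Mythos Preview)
Your high-level plan --- seed with one center, oversample via rank-based buckets to mimic $D^{++}_z$, then invoke the standard $k$-means++ hitting argument --- matches the paper's. The gap is exactly where you anticipate it: building the sampling distribution without distance queries. You propose to ``query $d(c_1,\cdot)$ for a logarithmic number of carefully chosen rank positions'' to obtain a distribution within a constant factor of $p_z$. But Algorithm~\ref{alg:median_no_query} is a \emph{zero-query} algorithm, so any argument relying on distance evaluations is analyzing a different procedure and does not establish the theorem as stated.

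The paper's resolution (Lemma~\ref{lemma:ordinal}) is a sharper and genuinely different idea than constant-factor approximation of $p_z$. The rings $S_{c_1,j}$ are defined purely from the ranking $\pi_{c_1}$, with geometrically shrinking sizes as one moves outward (the outermost ring is a singleton, the next has two points, and so on). Sampling one point uniformly from each ring assigns every $c$ selection probability exactly $1/|S_{c_1,j(c)}|$, and the key inequality is that this \emph{dominates} $p_z(c)$ pointwise: each point lying in a ring strictly outside $c$'s has cost at least $d(c,c_1)^z$, and together with $c$ there are exactly $|S_{c_1,j(c)}|$ such points by the geometric sizing, so $\sum_{x} d(x,c_1)^z \geq |S_{c_1,j(c)}|\cdot d(c,c_1)^z$ and hence $p_z(c) \leq 1/|S_{c_1,j(c)}|$. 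No distance is ever read; the doubling of ring sizes alone delivers the inequality. This is the ingredient your proposal lacks, and it is precisely what makes the algorithm ordinal.

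A secondary discrepancy: you model the $\Theta(\log n)$ samples as independent draws from one approximating distribution, whereas the algorithm draws a constant number of points from each of $O(\log n)$ rings, independently across rings. Individually these are not draws from (anything close to) $D^{++}_z$, so the paper cannot directly import the $k$-means++ hitting bound; Lemma~\ref{lemma_badclusters} redoes that bound for the independent-ring scheme (via an AM--GM argument over the ring-by-ring miss probabilities) before the covering argument goes through.
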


In 
\cref{sec:lower_bounds}
, we show that this is optimal.

\begin{thm}
\label{thm:noquery-median}
    There exists a randomized algorithm achieving a $O(1)$-distortion for all $(k,z)$ -clustering objectives \emph{simultaneously} using $O(\log n)^{k+o(1)}$ centers, both on expectation and with high probability.
\end{thm}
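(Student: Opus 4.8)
The plan is to bootstrap the zero-query, two-center result (Theorem~\ref{thm_2median}) into an algorithm for general $k$ by recursion on the ``one big cluster, rest small'' dichotomy. Given a $(k,z)$-clustering instance, run the $(2,z)$++-style over-sampling procedure of Algorithm~\ref{alg:median_no_query}, obtaining a set $T$ of $O(\log n)$ tentative centers that, with constant probability, gives an $O(1)$ distortion \emph{as a $2$-clustering}. Key structural observation: in the optimal $k$-clustering $\mathcal{C}^* = \{A_1,\dots,A_k\}$, at least one optimal center is ``well served'' by $T$ — concretely, the induced partition by $T$ either already has every point close to $T$ at scale $\phi_\OPT$ (in which case we are essentially done after peeling), or there is a cluster $A_j$ whose points are cheaply covered while the remaining $k-1$ optimal clusters form a residual instance. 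So I would recurse: for each of the $O(\log n)$ candidate centers $t \in T$, treat ``$t$ plus its tentatively assigned points'' as one cluster and recursively solve $(k-1,z)$-clustering on $X$ (we cannot literally delete points since we have only ordinal information, but we can re-run the sampling-based subroutine on the whole point set and simply union the center sets). Unrolling the recursion $k-1$ times produces a center set of size $(O(\log n))^{k-1}$ per branch, times the branching factor, giving $(O(\log n))^{k-1}\cdot(O(\log n)) = (O(\log n))^{k}$; the $o(1)$ in the exponent absorbs lower-order $\log\log n$ and success-amplification factors.

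For the analysis I would set up an inductive claim: running the recursive procedure to depth $k$ returns a center set $C$ with $\phi_C(X,d) \le \gamma_k \cdot \phi_\OPT$ for some constant $\gamma_k$ depending only on $k$ and $z$, with constant probability. The inductive step fixes the optimal cluster $A_{j^\star}$ that the top-level $(2,z)$ call covers well — formally, condition on the event (constant probability, from Theorem~\ref{thm_2median}) that $T$ contains a center $t^\star$ with $\sum_{x \in A_{j^\star}} d(x,t^\star)^z = O(1)\cdot \phi_\OPT(A_{j^\star})^z$ — and then apply the induction hypothesis to the instance induced by the remaining optimal clusters $\bigcup_{j \ne j^\star} A_j$, whose optimal $(k-1,z)$-cost is at most $\phi_\OPT$. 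Since the branch we analyze includes $t^\star$ among its $O(\log n)$ candidates, and the recursive call's output cost on $\bigcup_{j\ne j^\star}A_j$ is controlled by induction, the triangle-inequality / power-mean bookkeeping (losing a fixed constant per level, of the form $c_z = O(2^z)$ or so after the $\sqrt[z]{}$ normalization) gives $\gamma_k \le c_z \cdot \gamma_{k-1} + O(1)$, hence $\gamma_k = O(1)$ for fixed $k,z$. Boosting success probability from constant to $1-1/n$ is done by $O(\log n)$ independent repetitions and taking the cheapest solution (which we can evaluate since, as noted in the preliminaries, $d(x,C)$ costs... wait — zero queries) — instead, since it is zero-query, we cannot evaluate cost directly, so the high-probability version comes from running enough independent copies \emph{and unioning all their center sets}, inflating the center count by another $O(\log n)$ factor, still within $(O(\log n))^{k+o(1)}$; the expectation bound then follows by a standard truncation argument on the (bounded-ratio) distortion random variable.

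The main obstacle I anticipate is the zero-query constraint interacting with the recursion: because the algorithm never learns which points belong to which tentative cluster \emph{metrically}, we cannot cleanly ``remove'' a covered cluster and recurse on a smaller instance — we must recurse on all of $X$ and argue that the \emph{extra} points (those belonging to $A_{j^\star}$) do no harm because $t^\star$ is already in the center set and covers them at cost $O(1)\cdot\phi_\OPT(A_{j^\star})$. Making this rigorous requires that the recursive sub-call's guarantee be stated per optimal cluster (a ``cost on a subset'' guarantee, which is why the definition $\phi_C(S,d)$ is set up with an arbitrary subset $S$), rather than only globally; I would therefore strengthen the inductive hypothesis to: the returned $C$ satisfies $\sum_{x \in S} d(x,C)^z \le \gamma_k^z \cdot \phi_\OPT(S)^z$ simultaneously for every union $S$ of optimal clusters that the branch ``targets.'' A secondary subtlety is controlling the branching: naively recursing on \emph{every} $t \in T$ at every level would blow the count to $(O(\log n))^{k}$ with the wrong exponent arithmetic if done carelessly; the fix is that only one branch needs to succeed, so we take the union of center sets over branches (additive, not multiplicative, in the branching factor at a fixed level) while the depth-$(k-1)$ recursion contributes the multiplicative $(O(\log n))^{k-1}$, and these combine to the claimed $(O(\log n))^{k+o(1)}$.
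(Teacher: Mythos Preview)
Your recursion has a genuine gap: it lacks the adaptivity that makes $k$-means++--style sampling work. When you ``re-run the sampling-based subroutine on the whole point set'' at each level, every call sees the same instance $X$ with a fresh uniform start and therefore makes the same kind of progress---it produces a good $2$-clustering of $X$. But a good $2$-clustering need touch only two of the $k$ optimal clusters, and independent repetitions keep touching the same two. Concretely, take $k=3$ with $|A_1|=|A_2|\approx n/2$ (each co-located) at distance $2$ from one another, and a singleton $A_3$ at distance $1$ from $A_1$. The optimal $3$-clustering cost is $0$; the optimal $2$-clustering places centers in $A_1,A_2$ and pays $1$ for $A_3$. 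In any fresh $(2,z)$ call the first center lands in $A_1\cup A_2$, and the single ring-sampling pass hits $A_3$ only with probability $\Theta(d(A_3,C)/\phi_C(X))=\Theta(1/n)$ (Lemma~\ref{lemma:ordinal} gives only this lower bound, and it is tight here since $A_3$ sits in a ring of size $\Theta(n)$). Over your $(O(\log n))^{k-1}$ independent branches you hit $A_3$ with probability $o(1)$, so the unioned center set has unbounded distortion. This is exactly where your strengthened inductive hypothesis---that the recursive output controls cost on \emph{every} union $S$ of optimal clusters---fails: take $S=A_3$.

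The paper's argument differs precisely on this point. Algorithm~\ref{alg:median_no_query} maintains a single growing center set $C$ and, in each of $k-1$ rounds, does ring sampling \emph{relative to the current $C$}: each center's cluster is defined by nearest-center assignment over all of $C$, so once $A_1,A_2$ are covered, $A_3$ becomes the farthest point in its (now small) cluster and is isolated in a tiny ring, whence Lemma~\ref{lemma_badclusters} gives constant probability of covering it in the next round. The $(O(\log n))^{k+o(1)}$ center count comes from $|C|$ multiplying by $O(\log k\cdot\log n)$ per round, together with $O(\log n)$ outer repetitions for amplification---not from branching over independent subproblems. Separately, your expectation bound via ``truncation on a bounded-ratio random variable'' does not go through without a deterministic floor on the distortion; the paper obtains this by unioning in the zero-query $k$-center bicriteria solution (Theorem~\ref{thm:k-center-0query}), which guarantees $O(n)$ distortion in the $1/n$-probability failure event and thus bounds the expectation.
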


Now, we present the algorithm for Theorem \ref{thm_2median} and provide some intuition as to why it works. A similar reasoning can be extended to obtain the algorithm for Theorem \ref{thm:noquery-median}. Theorem \ref{thm_2median} and Theorem \ref{thm:noquery-median} are formally proven in Appendix \ref{app:omitted-4}. The algorithm for Theorem \ref{thm:noquery-median} involves repeating Algorithm \ref{alg:median_no_query} to amplify success probability, and augmenting Theorem~\ref{thm:k-query-center}'s $k$-center algorithm to bound the worst case.

For the algorithm, we define some new notation. For any set of points $S \in X$ and any point $c \notin S$, we define a partition of $S_c$ into disjoint sets $\{S_{c,0},S_{c,1}, \dots, S_{c,\ell}\}$ where $\ell = \lfloor \log |S| \rfloor$. We construct the partition recursively starting from $S_{c,\ell}$ . Define $S_{c,\ell}$ to be the singleton set containing just the farthest point in $S$ from $c$. Next, for each $1 < j < \ell$, define $S_{c,j}$ to be the farthest $2^{\ell-j}$ points from the set $S \backslash \{S_{c,j+1} \cup S_{c,j+2} \dots \cup S_{c,\ell} \}$. Lastly, let $S_{c,1} = S \backslash \{S_{c,2} \cup S_{c,3} \dots \cup S_{c,\ell} \}$.

\begin{algorithm}[h]

\DontPrintSemicolon
\caption{$(k,z)$-clustering without queries}
\label{alg:median_no_query}

\KwIn{ Point set $X$, ordinal information $P = \{ \pi_p \}_{p \in A}$ and $k \in \N$ }
Initialize the set of centers $C = \emptyset$ \;
{
    Sample a point $c$ uniformly at random from $X$ \;
    Let $C = \{c\}$ \;
    \For{$i = 2$ to $k-1$}{
        Initialize $C_i \gets \emptyset$ 
        \For{each point $c$ in $C$} {
            Define $S = \{ x \in X : \pi_x(c) \le \pi_x(c') \ \forall c' \in C \}$ , i.e., $S$ is the set of points that belong to the cluster with center $c$, and let $\ell = \lfloor \log |S| \rfloor$  \;
            {
                Sample $7 \log k$ points uniformly randomly from each of the sets $\{ S_{c,1}, S_{c,2}, \dots, S_{c,\ell} \}$ (defined above) and add them to $C_i$ 
                $C \gets C \cup C_i$
            }
        }
    }
}

Let $C \gets C \cup C_0$ where $C_0$ is the output of Theorem~\ref{thm:k-query-center}'s $k$-center algorithm\;

\Return{$C$}
\end{algorithm}

\paragraph{Analysis:}
We now highlight a key property of Algorithm \ref{alg:median_no_query} that shows us why it gives us a $O(1)$ distortion for the $2$-median instance with constant probability. The following lemma show that, Algorithm \ref{alg:median_no_query}, in a sense, performs better than the $(k,z)$++ algorithm in each iteration. Formally, for each point $c \in X$, we show that the probability that Algorithm \ref{alg:median_no_query} picks the point in an iteration is at least the probability that the $(k,z)$++ algorithm picks the point.

\begin{lemma}
\label{lemma:ordinal}
    Let $C$ be the set of centers before at the beginning of line 3 of Algorithm~\ref{alg:median_no_query} in the $i^{th}$ iteration.  For any point $c \in X$ after line 8 of Algorithm~\ref{alg:median_no_query}, we have
    $$\mathbb{P}_{\textup{Alg}~\ref{alg:median_no_query}}[c\in C_i|C] \geq p_z(c).$$
\end{lemma}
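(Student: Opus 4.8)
The plan is to fix a point $c \in X$ and a partition block $S_{c',j}$ of the current cluster $S = S_{c'}$ with center $c' \in C$ that contains $c$, and to compare the probability that Algorithm~\ref{alg:median_no_query} samples $c$ when drawing $7\log k$ uniform points from $S_{c',j}$ against the $(k,z)$++ probability $p_z(c) = d(c,C)^z / \sum_{x \in X} d(x,C)^z$. Since $c$ lies in the cluster of $c'$, we have $d(c,C) = d(c,c')$, so it suffices to show that the uniform-over-$S_{c',j}$ draw hits $c$ with probability at least $d(c,c')^z / \sum_{x\in X} d(x,C)^z$ — summing the contribution of a single block already suffices for the inequality since additional blocks only increase the left-hand side, and the events across the $7\log k$ repetitions only help as well (a union bound / ``at least one hit'' argument, or even just counting one of the samples).

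The first key step is the combinatorial heart: because of how the $S_{c',j}$ are defined (the farthest $2^{\ell-j}$ surviving points, peeling from the outside in), every point in block $S_{c',j}$ is at distance at least as large as every point in block $S_{c',j-1}$, and $|S_{c',j}| = 2^{\ell-j}$ for $j \ge 2$ while $|S_{c',1}| \le 2^{\ell-1} \le |S|$. I would use this to argue that for the particular block $j^\star$ containing $c$, the uniform probability of picking $c$ is $\tfrac{1}{|S_{c',j^\star}|}$, and that $|S_{c',j^\star}|$ times the minimum distance in that block is a constant-factor lower bound on $\sum_{x \in S_{c',j^\star}} d(x,c')^z$, hence on the within-cluster cost, hence related to $\sum_{x\in X} d(x,C)^z$. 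Concretely: $\tfrac{1}{|S_{c',j^\star}|} \ge \tfrac{d(c,c')^z}{\sum_{x \in S_{c',j^\star}} d(x,c')^z}$ would follow if $d(c,c')^z \cdot |S_{c',j^\star}| \le \sum_{x \in S_{c',j^\star}} d(x,c')^z$, i.e. if $d(c,c')$ is below the average $z$-th-power distance in its block — which is exactly what the ``geometrically shrinking block sizes, monotone distances'' structure buys us, since $c$'s block consists of points all roughly as far as $c$, so $c$ is near the cheap end of its own block. One has to handle the outermost singleton block $S_{c',\ell}$ (trivially the max, probability $1$ of being picked conditioned on entering that branch) and the innermost block $S_{c',1}$ (largest, but smallest distances) separately; the doubling structure makes $|S_{c',1}| \le |S|$ comparable to the total, and the claim degrades at worst by an absolute constant.

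The second step is bookkeeping: translate ``probability at least $1/|S_{c',j^\star}|$ that a single uniform draw picks $c$'' into ``probability at least $p_z(c)$ that $c \in C_i$'', using that $C_i$ collects $7\log k$ samples from each block and that $\sum_{x\in X} d(x,C)^z \ge \sum_{x \in S_{c'}} d(x,c')^z \ge \sum_{x \in S_{c',j^\star}} d(x,c')^z$. Here the inequality $1 - (1-p)^{m} \ge \min\{1, mp/2\}$ (or simply $\ge p$ when $m \ge 1$) handles the amplification cleanly, and in fact a single sample already gives $1/|S_{c',j^\star}| \ge p_z(c)$ up to the absolute constant absorbed above, with the factor $7\log k$ giving slack to spare. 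I expect the main obstacle to be getting the constant right across the three regimes of block index — proving that $d(c,c')^z |S_{c',j^\star}| \le \sum_{x\in S_{c',j^\star}} d(x,c')^z$ with the correct constant when $c$ happens to be the closest point within its own block (the worst case), and confirming that the innermost block $S_{c',1}$, whose size can be as large as $\tfrac{|S|}{2}$, does not break the bound because its points contribute the smallest distances. Once those case checks go through, the lemma follows by the chain $\mathbb{P}[c\in C_i \mid C] \ge \tfrac{1}{|S_{c',j^\star}|} \ge \tfrac{d(c,c')^z}{\sum_{x\in X} d(x,C)^z} = p_z(c)$, possibly after absorbing an absolute constant into the oversampling factor $7\log k$.
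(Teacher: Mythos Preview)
Your proposal has a genuine gap in the ``first key step.'' You aim to show
\[
d(c,c')^z \cdot |S_{c',j^\star}| \;\le\; \sum_{x \in S_{c',j^\star}} d(x,c')^z,
\]
i.e.\ that $c$'s cost is at most the average cost in its own block, and you justify this by saying ``$c$'s block consists of points all roughly as far as $c$, so $c$ is near the cheap end of its own block.'' But the blocks $S_{c',j}$ are defined purely by \emph{rank order}, not by distance ratios: nothing prevents the distances within a single block from spanning an arbitrarily large multiplicative range. If $c$ happens to be the \emph{farthest} point in its block (not the closest --- you have the worst case backwards), then $d(c,c')^z$ exceeds the block average and your inequality fails, potentially by an unbounded factor. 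A concrete instance: take $|S|=8$, $\ell=3$, distances $\{1,1,1,1,100,200,300,400\}$; then $S_{c',1}$ contains the five closest points, including the point at distance $100$, and for that point $100\cdot 5 = 500 \not\le 104 = \sum_{x\in S_{c',1}} d(x,c')$.

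The paper's argument avoids this by looking \emph{outward} rather than inward: every point in the farther blocks $S_{c',j^\star+1},\ldots,S_{c',\ell}$ is guaranteed (by the peeling construction) to have distance at least $d(c,c')$, and by the geometric sizes these blocks contain $\sum_{j'>j^\star} 2^{\ell-j'} = 2^{\ell-j^\star}-1 = |S_{c',j^\star}|-1$ points. Together with $c$ itself, that is $|S_{c',j^\star}|$ points each contributing at least $d(c,c')^z$ to $\sum_{x\in X} d(x,C)^z$, which immediately gives $p_z(c)\le 1/|S_{c',j^\star}|$. A single uniform draw from $S_{c',j^\star}$ then already matches $p_z(c)$; no constant needs to be absorbed into the $7\log k$ oversampling for this lemma.
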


The proof of the lemma is deferred to the appendix.
Though Lemma \ref{lemma:ordinal} gives us an idea as to why Algorithm \ref{alg:median_no_query} indeed does well, it is important to note that statement, by itself, does not imply the bounds in Theorem \ref{thm_2median} and Theorem \ref{thm:noquery-median}. Specifically,  Lemma \ref{lemma:ordinal} does not imply that we perform better than the $(k,z)$++ algorithm.  The reason is that Algorithm \ref{alg:median_no_query} samples points from different rings independently as opposed to the $(k,z)$++ algorithm. The analysis in \citet{MRS20} and \citet{BhattacharyaER020} points at the fickle nature of $k$-means++ algorithm and how slightly perturbing it leads to a worse performance. To get around this, we use over-sampling without making the asymptotic bicriteria approximation worse. 

\subsection{$O(1)$-Distortion Algorithm with $O(k^4 \log^5 n)$ Queries}

We design an algorithm that achieves a constant distortion to the cardinal objective with just a few cardinal queries. 
Formally, we show the following result:
\begin{thm}\label{thm:query-median}
   There exists a randomized algorithm achieving an expected $O(1)$-distortion to the optimal $(k,z)$-clustering using $O(k^4 \log^5 n)$ queries.
\end{thm}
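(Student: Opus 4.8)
The plan is to use cardinal queries to turn the zero-query over-sampling scheme of Algorithm~\ref{alg:median_no_query} into a genuine $D^{++}_z$-style $(k,z)$-clustering algorithm with only $k$ true centers, paying only $\mathrm{poly}(k,\log n)$ queries overall. The high-level idea: we run $D^{++}_z$ for $O(k\log k)$ rounds to get a bicriteria set $T$ of $O(k\log k)$ centers whose cost is $O(1)\cdot\phi_{\OPT}$ (this is the standard $D_z^{++}$ bicriteria guarantee referenced via \citet{MRS20,Wei16}), then we run any constant-factor offline $(k,z)$-clustering approximation on the weighted instance induced by $T$ to prune down to $k$ centers. Each of these steps requires only distance \emph{estimates}, and the content of the proof is showing that ordinal information plus few queries suffices to (a) sample (approximately) from $D^{++}_z$ at each round, and (b) estimate the $O(k\log k)^2$ pairwise distances needed to run the pruning step.

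The key subroutine is an approximate-sampling oracle. Fix the current center set $C$ with $|C|\le O(k\log k)$. As in the zero-query algorithm, for each center $c\in C$ partition its cluster $S_c$ into the $\ell=\lfloor\log|S_c|\rfloor$ exponentially-sized rings $S_{c,1},\dots,S_{c,\ell}$; crucially, all points in ring $S_{c,j}$ have $d(x,C)$ within a factor of $2$ of one another (a point in ring $j$ is farther from $c$ than all $2^{\ell-j}$ points in rings $>j$, so its rank in $S_c$ is between $|S_c|/2^{\ell-j}$ and $|S_c|/2^{\ell-j-1}$-ish, and since $d(x,C)$ is monotone in $x$'s rank along $c$'s ranking restricted to $S_c$, consecutive doubling of rank doubles the distance bucket only up to constants). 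For each of the $O(k\log k)\cdot O(\log n)$ rings we make one query: the distance from $c$ to the farthest point of the ring. This gives, for every point $x$, a $2$-approximation $\widetilde d(x,C)$ of $d(x,C)$, hence a multiplicative-$2^z$ approximation of $d(x,C)^z$, and therefore an explicit distribution $\widetilde p_z$ with $\widetilde p_z(x)\ge 4^{-z}p_z(x)$ (and mass concentrated correctly) from which we sample one point. Running this for $R=O(k\log k)$ rounds costs $O(k^2\log^2 k\log n)$ queries and, by the robustness of the $D^{++}_z$ analysis to constant multiplicative perturbations of the sampling weights (Lemma~\ref{lemma:ordinal} already shows the oversampled distribution stochastically dominates $p_z$; here we instead directly perturb $p_z$ by a bounded factor), yields a center set $T$ with $\phi_T(X)\le O(1)\cdot\phi_{\OPT}$ with constant probability. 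Repeat $O(\log n)$ independent times and keep the best (evaluating $\phi_T(X)$ needs one query per point per candidate $T$ — this is the dominant term, $O(k\log k)\cdot\log n\cdot n$ — so to stay polylogarithmic in $n$ we instead estimate $\phi_T(X)$ by subsampling $\mathrm{poly}(k,\log n)$ points and using Chernoff, or we defer the amplification to the pruning step).

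Finally, pruning: we have $T$ with $|T|=O(k\log k)$ and cost $O(1)\cdot\phi_{\OPT}$. Assign each point $x\in X$ to its nearest center in $T$ (free, ordinally) and let $w_t$ be the size of $t$'s cluster. Query all $\binom{|T|}{2}=O(k^2\log^2 k)$ pairwise distances among $T$. By the triangle inequality, an $\alpha$-approximate $(k,z)$-clustering of the weighted point set $(T,w)$ has cost on $X$ at most $O(\alpha)\cdot(\phi_T(X)+\phi_{\OPT})=O(\alpha)\cdot\phi_{\OPT}$; since a constant-factor approximation algorithm for $(k,z)$-clustering exists (losing $\exp(z)$, which is a constant once $z$ is fixed, cf.\ \citet{Cohen-AddadKM19}), this produces the final $k$ centers with $O(1)$ expected distortion. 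Summing the query costs — $O(k^2\log^2 k\log n)$ for the sampling rounds (across $O(\log n)$ repetitions: $O(k^2\log^2 k\log^2 n)$), plus the cost-estimation queries, plus $O(k^2\log^2 k)$ for pruning — is dominated by the stated $O(k^4\log^5 n)$ bound with room to spare.

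The main obstacle I expect is item (a) — making the approximate-sampling step provably correct. Lemma~\ref{lemma:ordinal}'s ``stochastic domination'' does not by itself give an upper bound on how much mass the oversampled distribution puts anywhere, so one cannot directly black-box the $D^{++}_z$ analysis; the cleanest fix is to sample \emph{exactly one} point per round from the explicit $2^z$-distorted distribution $\widetilde p_z$ and re-run the Arthur–Vassilvitskii / \citet{MRS20} potential argument checking that every inequality survives multiplication of the sampling probabilities by a factor in $[4^{-z},1]$ — this is routine but must be done carefully, and it is where the $\exp(z)$ (constant) factors enter. A secondary subtlety is keeping the dependence on $n$ polylogarithmic when \emph{evaluating} candidate solutions to pick the best of the $O(\log n)$ runs; I would handle this by estimating all costs on a uniformly random sample of $\Theta(k^2\log^2 n)$ points, for which a union bound over the $O(\log n)$ candidates and a Bernstein/Chernoff bound on the cost estimator suffices.
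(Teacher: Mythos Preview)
Your ring-based query scheme and the final pruning via a weighted instance are both correct and mirror what the paper does (cf.\ Lemma~\ref{cl:pbhat} and Claim~\ref{cl:BiToApprox}). Two of your steps, however, do not go through as written.

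First, the assertion that ``all points in ring $S_{c,j}$ have $d(x,C)$ within a factor of $2$ of one another'' is false: the rings are defined by \emph{rank} in $c$'s ordinal list, and rank places no cardinal constraint on distance ratios. For instance, with eight points at distances $1,1,1,1,50,50,50,100$ from $c$, the innermost ring $S_{c,1}$ contains four points at distance~$1$ and one at distance~$50$. Consequently $\widetilde d(x,C)$ can overestimate $d(x,C)$ by an arbitrary factor, and all you actually obtain is the one-sided bound $\widetilde p_z(c)\ge\tfrac12\,p_z(c)$ (this is precisely the paper's Lemma~\ref{cl:pbhat}), not the two-sided ``factor in $[4^{-z},1]$'' you invoke. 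This matters for the step you flag as routine: the Arthur--Vassilvitskii lemma (``conditioned on hitting cluster $A$, its expected cost drops to $O(\phi_{\OPT}(A))$'') relies on the \emph{conditional} distribution within $A$ being cost-proportional, and with only a one-sided global bound that conditional distribution can be arbitrarily skewed. The paper explicitly flags the fragility of the $D^{++}$ potential arguments under such perturbations (the remark following Lemma~\ref{lemma:ordinal}) and proceeds differently: it samples every point \emph{independently} with amplified probability $\min(1,T\cdot\widehat p(c))$, and uses a ``good subset'' argument (there is $A'\subseteq A$ with $\phi_C(A')\ge\phi_C(A)/5$ such that any sample from $A'$ covers $A$) to show each uncovered cluster stays uncovered with probability at most $\exp(-T\,\phi_C(A)/(10\,\phi_C(X)))$. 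This gives a geometric decrease of the expected uncovered cost per round, and one needs $T=O(k\log n)$ rounds --- not $O(k\log k)$ --- because the initial cost is only an $O(n)$-approximation.

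Second, selecting the best of $O(\log n)$ runs by estimating $\phi_T(X)$ on a uniform subsample and applying Chernoff/Bernstein does not work: the terms $d(x,T)^z$ are unbounded and a single far point can dominate the sum, so no concentration holds without further structure. (A fix exists --- estimate each candidate's cost via the same ring sums, which are within a factor~$2$ of the truth --- but that is not what you wrote.) The paper sidesteps candidate comparison entirely: it initializes $C$ with the deterministic $4$-distortion $k$-center solution of Theorem~\ref{thm:k-query-center}, which is an $O(n)$-approximation for $(k,z)$-clustering by Claim~\ref{cl:n-apx}, and then the geometric-decrease argument bounds the \emph{expected} final cost by $O(\phi_{\OPT})$ in a single run.
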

Our exposition mainly focuses on the $k$-median objective, for which $z=1$, however, the proofs almost seamlessly go through for other $(k,z)$ clustering objectives. Due to space constraints, we give a full proof and pseudocode for the algorithm in 
\Cref{app:omitted-4} 
and only highlight the key ideas here. To this end, given a current set of centers $C$, we define an \emph{estimated} cost for each of the rings in question, i.e.,
\[
    \widehat{\phi_C}(S_{i,j}) = |S_{i,j}| \cdot \min_{x \in S_{i,j-1}} d(x, c_i).
\]

Note that to compute the above-estimated cost, we just need one query per ring (in each round). Indeed, we simply need to query the distance between point $c_i$ and the topmost point in $c_i$'s preference list that belongs to $S_{i,j-1}$. Since there are $T$ rounds, the resulting number of queries is $\sum_{t \in [T]} t \cdot \log(|X|) \leq T^2\log n$. Now, we \emph{emulate} the $k$-median++ algorithm by sampling a center $c$ belonging to ring $S_{rj}$ with probability equal to
\[
    \widehat{p}(c) := \frac{1}{|S_{rj}|} \cdot \frac{\widehat{\phi_C}(S_{rj})}{\sum_{i, j} \widehat{\phi_C}(S_{i,j})}.
\]
It is not hard to see that the above is non-negative and summing across all $i,j$ we obtain $1$, thereby making the above a valid distribution, which, from now on, we will call $D$. Before discussing the main algorithm in its full details, let us recall that, in the plain $k$-median++ algorithm, given a current set of centers $C$, each new center $c$ is sampled (adaptively) with probability 
\[
    p(c) := \frac{d(c, C)}{\sum_{x \in X} d(x, C)}.
\]
This probability is proportional to how much they contribute to the current overall cost. Recall that we name the $k$-median++ induced distribution as $D^{++}$ (since $z=1$ in this case). The following lemma relates the standard $k$-median++ distribution $D^{++}$ and the emulating distribution $D$. 
\begin{lemma}
\label{cl:pbhat}
    Let $C$ be the set of centers already chosen.  For any point $c \in X$sampled according to distribution $D$
    $$\mathbb{P}_{D}[c\in C_i|C] \geq \frac{1}{2} \cdot p_z(c).$$
\end{lemma}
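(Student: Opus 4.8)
The plan is to lower-bound the probability that distribution $D$ picks a given point $c$ by comparing the \emph{estimated} ring cost $\widehat{\phi_C}(S_{i,j})$ against the \emph{true} contribution of the points in that ring to the $k$-median objective. Fix the current center set $C$ and suppose $c$ lies in ring $S_{rj}$ with center $c_r$ (the nearest center in $C$). Under $D$, the point $c$ is selected with probability $\widehat p(c) = \frac{1}{|S_{rj}|}\cdot\frac{\widehat{\phi_C}(S_{rj})}{\sum_{i,j}\widehat{\phi_C}(S_{i,j})}$, whereas under $D^{++}$ it is selected with probability $p_z(c) = \frac{d(c,C)}{\sum_{x\in X} d(x,C)}$ (here $z=1$). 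So it suffices to show two things: (i) the numerator of $\widehat p(c)$ dominates that of $p_z(c)$ up to a factor $2$, i.e. $\frac{1}{|S_{rj}|}\widehat{\phi_C}(S_{rj}) \geq \tfrac12\, d(c,C)$; and (ii) the denominator of $\widehat p(c)$ is at most that of $p_z(c)$, i.e. $\sum_{i,j}\widehat{\phi_C}(S_{i,j}) \leq \sum_{x\in X} d(x,C)$.

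For (ii), recall $\widehat{\phi_C}(S_{i,j}) = |S_{i,j}|\cdot\min_{x\in S_{i,j-1}} d(x,c_i)$. By the definition of the rings, every point $x\in S_{i,j-1}$ is at least as far from $c_i$ as every point in $S_{i,j}$ (the rings are ordered by decreasing distance, with $S_{i,\ell}$ the farthest singleton and $S_{i,1}$ the closest block, doubling in size as the index decreases). Hence $\min_{x\in S_{i,j-1}} d(x,c_i) \leq \min_{x\in S_{i,j}} d(x,c_i) \leq d(y,c_i)$ for every $y\in S_{i,j}$, so $\widehat{\phi_C}(S_{i,j}) \leq \sum_{y\in S_{i,j}} d(y,c_i) = \sum_{y\in S_{i,j}} d(y,C)$; summing over all $i,j$ gives exactly $\sum_{x\in X} d(x,C)$. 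The edge case $j=1$ needs care since $S_{i,0}$ is not defined — here one uses the convention that the estimate for the innermost ring is computed against the closest point (or simply bounds it by $\sum_{y\in S_{i,1}} d(y,C)$ directly), and I would state this convention explicitly.

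For (i), the doubling structure is what buys the factor $2$. Since $|S_{rj}| = 2^{\ell-j}$ while $|S_{r,j-1}| = 2^{\ell-j+1} = 2|S_{rj}|$ (for $1<j<\ell$), and every point of $S_{rj}$ is farther from $c_r$ than every point of $S_{r,j-1}$... wait — I need the estimate to \emph{over}count $c$'s distance, so I want $\min_{x\in S_{rj-1}} d(x,c_r)$ to be at least $\tfrac12 d(c,c_r)$ is false in general; instead the right statement is that the estimate $\widehat{\phi_C}(S_{rj})/|S_{rj}| = \min_{x\in S_{r,j-1}} d(x,c_r)$ is an \emph{upper} bound on $d(x,c_r)$ for $x \in S_{rj}$, and we need it to not be too large. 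The cleaner route: reindex so that the ring estimate uses the farthest point \emph{of the same ring or the next outer ring}. I would instead argue that $d(c,C) = d(c,c_r) \le \min_{x \in S_{r,j-1}} d(x,c_r) = \widehat{\phi_C}(S_{rj})/|S_{rj}|$ directly when $j \ge 2$ (each point in ring $j$ is farther than each point in ring $j-1$), giving (i) even without the factor $\tfrac12$; the factor $\tfrac12$ is the slack absorbing the boundary ring $j=1$ and $j=\ell$, and the uniform-within-ring sampling. Combining (i) and (ii): $\widehat p(c) = \frac{1}{|S_{rj}|}\cdot\frac{\widehat{\phi_C}(S_{rj})}{\sum_{i,j}\widehat{\phi_C}(S_{i,j})} \geq \frac{\tfrac12 d(c,C)}{\sum_{x\in X} d(x,C)} = \tfrac12 p_z(c)$, as claimed.

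\textbf{Main obstacle.} The delicate point is the boundary rings: $S_{i,\ell}$ is a singleton whose estimate references $S_{i,\ell-1}$, and $S_{i,1}$ has no $S_{i,0}$ below it, so the clean "ring $j$ is farther than ring $j-1$" monotonicity either degenerates or is unavailable, and this is precisely where the constant $\tfrac12$ (rather than $1$) enters. I would handle $j=\ell$ by noting the singleton's own distance is queried (or bounded by the next ring's min up to a factor $2$ using the doubling of cardinalities between consecutive rings), and $j=1$ by a separate direct bound $\widehat{\phi_C}(S_{i,1}) \le \sum_{x\in S_{i,1}} d(x,C)$ together with a matching lower bound $\widehat{\phi_C}(S_{i,1})/|S_{i,1}| \ge \tfrac12 d(x,C)$ for $x \in S_{i,1}$, which holds because within a single ring the ratio of max to min distance is at most $2$ by the way the rings are cut (or, if that is not guaranteed, by a small additional argument bounding the farthest point of ring $1$ against the closest point of ring $2$). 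Getting these edge cases to line up cleanly with the generic case, and making sure the convention for $S_{i,0}$ in the definition of $\widehat{\phi_C}$ is stated consistently, is the part that requires the most attention; the rest is bookkeeping with the ordinal monotonicity of the rings.
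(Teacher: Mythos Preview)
Your high-level decomposition into a numerator lower bound and a denominator upper bound matches the paper's, but you place the factor $\tfrac12$ on the wrong side, and that is where the argument breaks. Your bound (ii), $\sum_{i,j}\widehat{\phi_C}(S_{i,j})\le\phi_C(X)$, is fine when the estimate references a closer ring. But then (i) cannot hold: you would need $\min_{x\in S_{r,j-1}}d(x,c_r)\ge\tfrac12\,d(c,c_r)$ for $c\in S_{rj}$, and nothing in the construction gives this. The rings are cut by \emph{ordinal rank} in $c_r$'s preference list, not by distance thresholds, so the ratio of distances between (or within) consecutive rings is completely unbounded. Your attempted repair, ``$d(c,c_r)\le\min_{x\in S_{r,j-1}}d(x,c_r)$ directly when $j\ge 2$,'' has the inequality reversed --- ring $j-1$ is \emph{closer} to $c_r$ than ring $j$ under the stated indexing --- and the assertion in your obstacle section that the max-to-min distance ratio inside a ring is at most $2$ is likewise false for the same reason.

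The paper runs the two bounds the other way around: the per-ring estimate is arranged to \emph{over}-estimate $d(c,C)$ for every $c$ in that ring, so the numerator bound $\widehat{\phi_C}(S_{rj})/|S_{rj}|\ge d(c,C)$ holds with no loss, and the factor $2$ is absorbed entirely in the denominator bound $\sum_{i,j}\widehat{\phi_C}(S_{i,j})\le 2\,\phi_C(X)$. That factor comes from the exact cardinality relation between consecutive rings: writing the estimate against the adjacent farther ring, one has $\widehat{\phi_C}(S_{i,j})\le 2\,|S_{i,j+1}|\cdot\min_{x\in S_{i,j+1}}d(x,c_i)\le 2\,\phi_C(S_{i,j+1})$, and summing over $i,j$ yields the claim. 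The essential point you are missing is that the only genuine factor-$2$ relationship available here is the combinatorial one on ring \emph{sizes}, $|S_{i,j}|=2|S_{i,j+1}|$; there is no analogous cardinal control on \emph{distances}, which is precisely what your version of (i) would require.
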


\paragraph{Algorithm.} 

From this point onwards, our goal will be to show that Algorithm \ref{alg:median_few_queries} (whose formal description is deferred to
\Cref{app:omitted-4}) and achieves an $O(1)$ distortion, as long as the number of rounds $T$ is large enough. 
The high level idea is not to use a potential function that allows us to bound the cost of hit and not hit clusters, as is done in most $k$-means++ analyses. 
Instead, we show that the cost decreases by a constant factor for a sufficient number of samples, similar to 
\citet{Rozhon20}.

Unfortunately, unlike these works, we cannot guarantee an upper bound on the cost when running a sampling algorithm with the guarantee provided by Claim \ref{cl:pbhat}. Indeed, there is a non-zero probability that we hit the same clusters over and over again, which can lead to an arbitrarily high distortion. 

We sidestep these issues with a careful initialization. For this we use the $k$-center solution resulting from Section \ref{sec:k-center}. A sufficiently good $k$-center solution is within a factor $O(n)$ of the cost of a $(k,z)$-clustering. Moreover, our $k$-center algorithms are deterministic, which modifies our previous low probability event of having unbounded distortion to a low probability event of having $O(n)$ distortion.

\paragraph{Analysis.} The proof proceeds as follows: First, let us consider the current set of centers $C$ (initialized to $C_0$, the $k$-center clustering output by the algorithm used to prove Theorem \ref{thm:k-query-center}). Then, we consider the optimal clustering collection $\C^* = \{A_1,  \ldots, A_k\}$, and the union of uncovered clusters $U$, i.e., clusters not hit by $C$. We show that the probability that a given optimal cluster remains uncovered after a fresh center is sampled is inversely exponentially related to its cost (normalized by the total cost). This is crucial because it helps us in showing that the cost of uncovered points has to drop by at least a constant factor at each new iteration of the algorithm, which is the second step of our proof strategy. Lastly, we recall that the initial clustering was a constant distortion to the optimal $k$-center one, which means an $O(n)$-distortion to the optimal $k$-median clustering. This, combined with the earlier considerations, leads to a constant distortion provided $T \in O(k\log n)$.

\section{Lower Bounds}\label{sec:lower_bounds}

In this section, we finally present our lower bounds. The lower bounds for $k$-center are simple and optimal. We, therefore, give the full proof in the main body.
The lower bounds for $k$-median are significantly more complicated, but use a similar construction as the $k$-center lower bound.

We conclude this section by presenting a lower bound for the facility location problem.
The proofs of the latter two results are deferred to
Appendix~\ref{sec:lower_bounds_k_median} and~\ref{sec:facility}, respectively.

\begin{thm}
\label{thm:k-centerlb}
For any fixed $\alpha$, every bicriteria algorithm $\alg$ for $k$-center that has distortion at most $\alpha$ with at least constant probability must return a solution of size at least $\Omega(2^{k})$.
Moreover, any algorithm that has distortion at most $\alpha$ with at least constant probability must make at least $\Omega(k)$ queries.
\end{thm}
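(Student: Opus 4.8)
The plan is to exhibit, for each $k$, a family of metric spaces that all induce the \emph{same} ordinal preference profile but have wildly different optimal $k$-center clusters, so that any ordinal (zero-query) algorithm committing to a small set of centers must be far from optimal on some member of the family. The natural construction is recursive in $k$, mirroring the $2^{k-1}$ upper bound of Theorem~\ref{thm:k-center-0query}: think of a binary-tree-like instance where, at the top level, the points split into two groups that are ``far'' from each other, within each group they split into two sub-groups, and so on for $k$ levels, giving $2^k$ leaf-groups. Crucially, the ordinal information is symmetric enough that it does not reveal which of the two children at any node is the ``heavy'' one (the one that, in the adversary's chosen metric, is pushed slightly farther away and hence should receive its own center). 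An optimal $k$-center solution only needs one center per level-$k$ region \emph{along the path the adversary actually chooses}, costing $O(\phi_{\OPT})$, but an algorithm that has seen only the profile has no way to know which of the exponentially many leaf-regions is the relevant one; if it opens fewer than $\Omega(2^k)$ centers it misses some region, and the adversary places that region's points at distance $\omega(\alpha)\cdot\phi_{\OPT}$ from every opened center. A Yao-type / averaging argument over a uniformly random choice of adversary path then converts this into the claimed lower bound even for randomized algorithms succeeding with constant probability: if the algorithm opens $o(2^k)$ centers, the expected fraction of ``bad'' target regions is bounded below by a constant, so the success probability cannot be constant.

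For the query lower bound, I would use the same style of construction but now allow the algorithm $q$ adaptive distance queries. Each query $d(x,y)$ can, in the worst case, reveal the identity of at most one ``branch decision'' in the tree (it can tell the algorithm whether $x$ and $y$ lie in the same level-$j$ region for the largest such $j$, i.e.\ essentially one bit about the hidden path, and even that only for the specific pair probed). Since the adversary's instance is determined by $k-1$ independent hidden bits (one per internal level of the path it chose), an algorithm making fewer than, say, $(k-1)/2$ queries leaves a constant fraction of the hidden bits unconstrained; on a uniformly random instance it then fails to locate the heavy region with constant probability, and the distortion on that instance is unbounded. The key point to make rigorous is an adversary (decision-tree) argument: maintain a set of instances consistent with all answers given so far, show each query shrinks this set by at most a factor of $2$ (or keeps it large by a suitable choice of answer), and conclude that after $o(k)$ queries at least two surviving instances disagree on where a center must go, forcing error on one of them.

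The main obstacle, I expect, is getting the metric/ordinal construction exactly right so that (i) all instances in the family are genuine metrics satisfying the triangle inequality, (ii) they are \emph{ordinally indistinguishable} — every point's ranking of all other points is identical across the family, which requires carefully choosing the inter-region gaps (e.g.\ distances at level $j$ scaling like $M^{j}$ for a large parameter $M$, with the ``heavy'' perturbation small enough not to reorder any ranking but large enough to blow up the $k$-center cost), and (iii) the optimal $k$-center cost really is governed only by the innermost scale while a wrong center placement incurs cost at an outer scale. A secondary technical point is making the recursion bottom out cleanly and counting: showing the family has $\ge 2^{k-1}$ (or $2^k$) mutually exclusive ``hard'' configurations so that covering all of them needs $\Omega(2^k)$ centers. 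Once the gadget is in place, both the counting argument for the center lower bound and the decision-tree argument for the query lower bound should be short.
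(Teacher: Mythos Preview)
Your proposal is correct and follows essentially the same route as the paper: a complete binary tree of depth $k-1$ whose leaves are the points, a uniformly random root-to-leaf path determining the ``heavy'' branch at each level, Yao's principle to handle randomized algorithms, and a one-bit-per-query argument for the $\Omega(k)$ query bound.

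One simplification worth noting: you anticipate needing a multi-scale construction (distances like $M^j$) with a small ``heavy'' perturbation chosen carefully so as not to reorder any ranking. The paper avoids this entirely. It fixes the ordinal profile \emph{first}, depending only on the tree structure: $p$ ranks $q$ above $o$ iff the common ancestor $a(p,q)$ is strictly deeper than $a(p,o)$, with ties broken arbitrarily (say lexicographically). Then \emph{any} assignment of values $d(a)$ to interior nodes that is monotone in depth (ancestors $\ge$ descendants) is consistent with this single profile. The hard distribution simply sets $d(a)=D$ on the random path and $d(a)=1$ elsewhere---two values, no perturbations, and ordinal indistinguishability is automatic. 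This also makes the optimum transparent: place one center on the random leaf $r$ and one in each ``small'' sibling subtree along the path, for cost $1$; missing $r$ forces cost $D$. Your version would work, but the paper's framing removes the delicate part you flagged as the main obstacle.
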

We remark that the distortion bound $\alpha$ has no influence on the number of queries or the number of centers. That is, our lower bounds hold for arbitrary values of $\alpha$. This property together with the observation that the cost of all $(k,z)$-clustering objectives are within a $\text{poly}(n)$ factor implies that the same bounds indeed hold for any $(k,z)$-clustering.

\begin{proof}
The hard instance is the same for the low query and zero query setting. We start with an analysis for the latter.
\paragraph{The hard instance:}
Our hard instance consists of $2^{k-1}$ points. We begin by describing the ordinal information and the underlying metric. Consider a complete binary tree $T$ of depth $k-1$. 
For any two nodes $p,q$, we say that $a$ is the common ancestor of $p$ and $q$ if $a$ is the minimum depth node in the shortest path between $p$ and $q$ in $T$.

The interpretation of this tree is that the leaves are the points and for any interior node $a$, the value $d(a)$ stored in $a$ denotes the distances between all points $p,q$ that have $a$ as the common ancestor. Thus, we now require the following invariant to ensure that the tree encodes a metric. 
\begin{invariant}
\label{inv:tree}
If the subtree rooted at $a$ contains the interior node $b$, then $d(a)\geq d(b)$.
\end{invariant}

We now specify the ordinal preferences, which we fix {\em before} determining the values $d(a)$ of the interior nodes. Let $p,q,o$ be three leaves and let $a(p,q)$, $a(p,o)$, and $a(q,o)$ be common ancestors of these pairs of nodes, respectively. 
\begin{itemize}
    \item If the depth of $a(p,q)$ is larger than the depth of $a(p,o)$ and $a(q,o)$ then the preference list of $p$ determines $q$ to be closer to $p$ than to $o$.
    \item If the depth of $a(p,q)$ and $a(p,o)$ is equal then the relative ordering of $q$ and $o$ in the preference list of $p$ is arbitrary (w.l.o.g., it may be chosen lexicographically).
\end{itemize}

We now describe a hard input distribution that satisfies the invariant and is consistent with the ordinal preferences. Select a random path $Q$ between the root of $T$ and an arbitrary node $r$ at depth $k-1$. All nodes $a$ along that path receive the value $d(a)=D$. All remaining nodes receive the value $d(a)=1$. 

\paragraph{Analysis:}
Note that, for any two trees sampled from the distribution, the values assigned to the interior nodes satisfy Invariant \ref{inv:tree} and thereby induce a metric on the set of leaf nodes. Since the ordinal preferences are independent from these values, the two trees cannot be distinguished using the ordinal information.

We now determine an optimal $k$-center solution $C$. For every interior node $a$ in $Q$, the children of $a$ form subtrees $T(a,\mathit{small})$ and $T(a,\mathit{large})$. The root $b$ of $T(a,\mathit{small})$ satisfies $d(b)=1$ and the root $c$ of $T(a,\mathit{large})$ satisfies $d(c)=D$. For the largest depth interior node $a$ in $Q$, we introduce the convention that $T(a,\mathit{large})$ contains the leaf $r$ (i.e., the end point of $Q$).
$C$ now places exactly one center on an arbitrary leaf of $T(a,\mathit{small})$ and one center on $r$. The cost of $C$ is therefore $1$. Now consider any other solution $C'$. If $C'$ does not place a center on $r$, then the cost of $C'$ is $D$. Otherwise, there must exist some $a\in Q$ for which $T(a,\mathit{small})$ does not receive a center. Hence, the points in $T(a,\mathit{small})$ must be served by some center contained in $T(a,\mathit{large})$ or by a point not contained in the subtree rooted at $a$. In both cases, the cost of these points is $D$.

To conclude, it now suffices to analyze the performance of the best deterministic algorithm placing $K$ centers against this hard input distribution. Since the algorithm does not make any queries and cannot determine $Q$ based on the ordinal information, its choice of centers is fixed. There are $2^{k-1}$ many different nodes at depth $k-1$. Hence, the probability that $K$ includes the leaf node $r$ is $K/2^{k-1}$. Conversely, if $K\notin \Omega(2^k)$ then the probability that $K$ does not include $r$ is at least constant, which leads to a distortion of $D$.

Finally, we remark on some generalizations of this lower bound. For the low query regime, an algorithm needs to find the entire path $Q$ or, equivalently, identify the leaf $r$. If it decides to not do so then with probability at least $\frac{1}{2}$ it will have unbounded distortion of $D$. Again, consider the performance of the best deterministic algorithm against the input distribution. Given that $T$ is a binary tree, at least one query queries is required to reduce the search space for $Q$ (equivalently, for $r$) by a factor of $\frac12$ in expectation. Hence, if the algorithm does not make $\Omega(k)$ queries, its distortion is unbounded.
\end{proof}

Next, we give a different lower bound for any bicriteria algorithm for $k$-median. Specifically, we show that any bicriteria algorithm for $k$-median requires $\Omega(2^{k}\log n)$ centers. For $2$-median, this becomes $\Omega(\log n)$, which stands in contrast with $2$-center, where we can obtain a true $2$-distortion using only ordinal information (see Theorem~\ref{thm:k-center-0query}).
We also show that there exists a slow growing function $g(n)$ increasing in $n$ for every fixed $k$ such that any bicriteria algorithm requires $g(n)^k$ many queries. Moreover, $g(n)$ may be lower bounded by $2^{\log^* n}$, though somewhat higher bounds are likely possible using our construction.  We conjecture that the true lower bound is $(\log n)^k$.

\begin{thm}
For any fixed $\alpha$, every bicriteria algorithm $\alg$ for $k$-median that has distortion less than $\alpha$ with at least constant probability must return a solution of size at least $\Omega\left(\frac{\log n}{\log \alpha}\cdot 2^{k}\right)$. Moreover, any algorithm achieving a constant factor approximation for $k$-median must make at least $\Omega(k+\log\log n)$ queries.
\end{thm}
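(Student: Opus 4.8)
The plan is to build one randomized hard instance proving both halves, obtained by augmenting the $k$-center lower-bound instance of \Cref{thm:k-centerlb}. As there, take a complete binary tree of depth $k-1$ carrying a uniformly random root-to-leaf path $Q$; give every interior node on $Q$ an enormous value $V_{\mathrm{top}}$ (polynomially large in $n$) and the off-$Q$ interior nodes tiny values, so that the exact argument of \Cref{thm:k-centerlb} still applies: any solution that fails to put a center inside each of the $2^{k-1}$ depth-$(k-1)$ subtrees, and any $k$-median algorithm using $o(k)$ queries (which therefore cannot identify the endpoint of $Q$ with its $k$ centers), incurs cost $\gg \alpha\cdot\OPT$ with constant probability. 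The new ingredient is a \emph{scale-ladder gadget} hung below every depth-$(k-1)$ leaf, on $N:=n/2^{k-1}$ points arranged in $L:=\Theta(\log N/\log\alpha)=\Theta(\log n/\log\alpha)$ nested geometric scales $\rho_0\gg\rho_1\gg\cdots\gg\rho_{L-1}$ with consecutive ratio $\Theta(\alpha)$ and shell cardinalities chosen so that every scale contributes the \emph{same} amount to the $1$-median cost — it is this balancing that turns the bound from $O(1)$ into $\Omega(L)$ and produces the $1/\log\alpha$. Below the endpoint of $Q$ the gadget additionally hides a uniformly random parameter $j^\star\in[L]$ controlling where the ladder ``degenerates''; all other gadgets are made to have essentially zero diameter, so that, exactly as in \Cref{thm:k-centerlb}, an optimal solution pays for those for free via the $k-1$ ``small-side'' centers of $Q$ and spends its last center inside the one hard gadget, yielding $\OPT=\Theta(n+L\cdot n/2^{k-1})$.

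Two properties of the gadget are what the proof rests on, and establishing them — especially the second — is where I expect the real work to be. \emph{(Indistinguishability.)} All $2^{k-1}$ gadgets share a single ordinal profile which, using ties in the reported rankings (allowed by the model), is simultaneously consistent with the degenerate metric and with every value of $j^\star$; hence a query-free algorithm learns neither which gadget is hard nor $j^\star$. \emph{(Gadget lemma.)} If the hard gadget receives fewer than $L/4$ centers and these are placed blindly to $j^\star$, then with probability $\ge 1/2$ over $j^\star$ the cost charged to it is at least $\alpha\cdot\OPT$; morally, approximating scale $i$ within a factor $\alpha$ forces a center no coarser than scale $i-1$, and hedging against all $L$ possible locations of the ``degeneracy'' requires $\Omega(L)$ centers. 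The delicate point is that the gadget must be designed so that there is \emph{no} single center position that is simultaneously cheap for all $j^\star$ — naive ladders have an ``innermost'' or ``collapsed'' region that one center can exploit, and precluding this (while keeping the profile $j^\star$-oblivious and the scales balanced) is the crux of the construction.

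Granting these, the center lower bound is a routine averaging argument: derandomize the algorithm to a fixed center multiset $C$ with $|C|=\beta$, let $\beta_v$ count the centers of $C$ inside the gadget below depth-$(k-1)$ leaf $v$ (so $\sum_v\beta_v=\beta$), and suppose $\beta<\tfrac1{16}2^{k-1}L$; by Markov $\Pr_v[\beta_v<L/4]\ge 3/4$, and since the endpoint of $Q$ is uniform and independent of both $C$ and $j^\star$, with probability $\ge\tfrac34\cdot\tfrac12$ the hard gadget gets $<L/4$ centers while $j^\star$ lies in the bad half, whence the gadget lemma forces distortion $\ge\alpha$ with constant probability; so $\beta=\Omega(2^kL)=\Omega\!\big(\tfrac{\log n}{\log\alpha}2^k\big)$ (taking $V_{\mathrm{top}}$ large enough that the $k$-center hitting requirement is not subsumed by $\alpha\OPT$). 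For the query bound, fix $\alpha=O(1)$ so $L=\Theta(\log n)$: the instance's hidden state $(Q\text{'s path},\,j^\star)$ carries $\Omega(k+\log\log n)$ bits of entropy, the path uniform over $2^{k-1}$ options and $j^\star$ uniform over $L$, independently. An adversary answering each query so as to shrink the set of still-consistent instances by only a constant factor — on the binary part this is exactly the ``one query halves the search for $Q$'' step of \Cref{thm:k-centerlb}, and on the gadget it is where the balanced scales are used to argue that no single query about $j^\star$ is more than $O(1)$-informative against the constructed distribution — shows that before $\Omega(k+\log\log n)$ queries the algorithm's posterior still places constant mass on instances where its fixed-so-far behaviour either misses the hard gadget (cost $\gtrsim V_{\mathrm{top}}\gg\OPT$) or has not localized $j^\star$ (gadget lemma), hence has distortion $\omega(1)$ with constant probability. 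Therefore $\Omega(k+\log\log n)$ queries are necessary for any constant-factor $k$-median approximation.
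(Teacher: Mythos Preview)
Your high-level architecture matches the paper: a random-path binary tree for the $2^k$ factor, with a hard $2$-median-style gadget dangling below each leaf for the $\log n/\log\alpha$ factor, the two combined by averaging. The query lower bound reasoning (entropy of the hidden path plus the hidden gadget parameter) is also in line with the paper's brief argument.

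The genuine gap is the gadget itself. You explicitly leave its construction open (``precluding this \ldots\ is the crux of the construction''), and the particular direction you sketch --- a $1$-center gadget with $L$ nested geometric scales balanced so each contributes equally to the $1$-median cost --- runs head-on into the obstacle you identify and does not resolve it: with a fixed ordinal profile there will be an ``innermost'' shell that any single center can sit in and be cheap for every $j^\star$. The paper sidesteps this entirely by making the gadget a \emph{$2$-median} instance (and correspondingly using a tree of depth $k-2$, so the remaining two centers go into the gadget). Concretely, the gadget has bundles $B_0,\ldots,B_{L-1}$ with $|B_i|=(\alpha+1)^i$ and only two distance values, $\varepsilon$ and $1$: bundles $B_j$ with $j>\ell$ are all collapsed together, $B_\ell$ is collapsed by itself, and everything with index $\le\ell$ sits at distance $1$ from everything else. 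One center in the large collapsed region is ``free'' for any $\ell$, but the \emph{second} center must land in the random $B_\ell$, and missing it costs $(\alpha+1)^\ell$ versus $\OPT\approx(\alpha+1)^\ell/\alpha$. This two-center trick is exactly what neutralizes the ``innermost wins'' phenomenon, and it is the missing idea in your plan.

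A secondary symptom of the gap is your $\OPT$ computation $\Theta(n+L\cdot n/2^{k-1})$, which does not correspond to any well-defined instance; in the paper $\OPT$ depends on the random $\ell$ and equals $\sum_{i<\ell}(\alpha+1)^i\approx(\alpha+1)^\ell/\alpha$ (the off-path subtrees contribute only $\varepsilon$). Once you switch to the $2$-median gadget with geometric bundle sizes and $\{\varepsilon,1\}$ distances, both the indistinguishability and the gadget lemma become one-line observations, and your averaging/entropy endgame goes through as written.
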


\begin{thm}
    For any fixed $\alpha$ and every fixed $k$, every bicriteria algorithm $\alg$ for $k$-median that has distortion less than $\alpha$ with at least constant probability must return a solution of size at least $\Omega\left(\left(2^{\log^* n}\right)^{k-1}\right)$. The number of queries to achieve a constant distortion is at least $\Omega(k\cdot 2^{\log^*n} )$.
\end{thm}

Finally, we return to the facility location problem. We are interested in lower-bounding the number of queries necessary to achieve any given distortion. Using no queries, it is not possible to obtain bounds on the distortion~\cite{AZ21} beyond the trivial $O(n)$ bound. Our lower bound essentially shows that $\Omega(n)$ queries are necessary to achieve constant distortion, making the adaptation of Meyerson's algorithm optimal.

\begin{thm}
    For any fixed $\alpha$, every algorithm $\alg$ for facility location that has distortion less than $\alpha$ with at least constant probability must make $\Omega\left(\frac{n}{\alpha}\right)$ distance queries.
\end{thm}

\section{Conclusion and Open Problems}

We gave optimal algorithm for computing bicriteria approximations for $k$ center both in terms of the number of distance queries as well as number of additional centers in the purely ordinal setting. Additionally, we gave optimal lower bounds for facility location and substantially improved low query and purely ordinal bicriteria algorithms for $k$-median. 

Aside from closing the small remaining gaps left in our analysis, several interesting open problems present themselves. First, our bicriteria algorithm simultaenously achieves small distortion for all $(k,z)$-clustering. Another popular way to interpolate between $k$-median and $k$-center is ordered clustering \citet{ByrkaSS18,ChakrabartyS18}. Is it possible to achieve low distortion algorithms for this problem as well?

Furthermore, there exist many other clustering objectives, such as graph clustering. Which distortion/query tradeoffs are possible for sparsest cut and metric max cut?

\section*{Acknowledgements} 
Ioannis Caragiannis and Sudarshan Shyam were partially supported by the Independent Research Fund Denmark (DFF) under grant 2032-00185B.
Matteo Russo is supported by the ERC Advanced Grant 788893 AMDROMA ``Algorithmic and Mechanism Design Research in  Online Markets''.
Jakob Burkhardt and Chris Schwiegelshohn are partially supported by the Independent Research Fund Denmark (DFF) under a Sapere Aude Research Leader grant No 1051-00106B.

\bibliography{ordinal_project}

\appendix
\section{Omitted Content from \Cref{sec:k-center}}\label{app:omitted-3}

\subsection{$2$-Distortion Algorithm with $O(k^2)$ Queries}\label{app:2_distortion_k_center_k^2_queries}

Algorithm~\ref{alg:center_low_query} is a straightforward adaption of the farthest-first traversal method by~\citet{G85} to the ordinal setting. Over $k-1$ iterations, the algorithm performs $\sum_{i = 1}^{k-1} i = \frac{k^2 - k}{2}$ distance queries in total. Furthermore, the approximation guarantee that \citet{G85} showed for their procedure in the full information setting implies that Algorithm~\ref{alg:center_low_query} achieves a $2$-distortion for the ordinal $k$-center problem.

\begin{algorithm}[h]
\DontPrintSemicolon
\caption{Ordinal $k$-center with $k^2$ queries}
\label{alg:center_low_query}
\KwIn{$X,d,P,k$}
$x\gets $ arbitrary point from $X$\;
$C\gets \{x\}$\;
\For{$1\ldots (k-1)$}{
    Set $\delta_{\max} = 0$ \;\label{line:k2_query_begin_inner}
    \For{$c \in C$}{
        \Comment{define cluster with center $c$}
        Define $A_c = \{x\in X : \pi_{x,C}(1)=c\}$\;
        \Comment{query distance from $c$ to farthest point among $A_c$}
        Let $z=\arg\max_{x\in A_c}d(c,x)$\;
        Query $\delta = d(c,z)$\;
        \If{ $\delta\geq\delta_{\max}$ }{
            $\delta_{\max} = \delta$ \;
            $r \gets z$\;
        }
    }
    $C \gets C\cup \{r\}$
}
\Return{$C$}
\end{algorithm}

\subsection{$2$-Distortion Algorithm with $2^{k-1}$ Many Centers}\label{app:2_distortion_k_center_no_query}

\begin{algorithm}[ht]
\DontPrintSemicolon
\caption{Ordinal $k$-center without queries}
\label{alg:center_no_query}
\KwIn{$X,d,P,k$}
$x\gets $ arbitrary point from $X$\;
$C\gets \{x\}$\;
\For{$1\ldots (k-1)$}{
    $T\gets\emptyset$ \Comment*[r]{new set of centers}
    \For{$c \in C$}{
        \Comment{define cluster with center $c$}
        Define $A_c = \{x\in X : \pi_{x,C}(1)=c\}$\;
        \Comment{add farthest point from $c$ among $A_c$ to solution}
        Let $z=\arg\max_{x\in A_c}d(c,x)$\;
        $T\gets T\cup \{z\}$
    }
    $C\gets C\cup T$
}
\Return{$C$}
\end{algorithm}

\begin{thm}
Let $X$ be a set of points in some metric space.
There exists a deterministic algorithm that, using only ordinal preferences, returns a set of centers $C$ of size $|C| = 2^{k-1}$, such that $\max_{x\in X} d(x,C)\leq 2\phi_{\OPT}$, where $\phi_{\OPT}$ is the cost of an optimal $k$-center clustering.
\end{thm}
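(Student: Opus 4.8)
Here is how I would prove this, following the ``straightforward adaptation of Gonzalez'' that the authors allude to for Algorithm~\ref{alg:center_no_query}.

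The size claim is the easy part and requires no queries: in each of the $k-1$ iterations the algorithm adds, for every current cluster $A_c$, the point of $A_c$ farthest from $c$, which is exactly the lowest-ranked element of $A_c$ in $\pi_c$ and hence readable off the ordinal profile alone. Since distinct clusters are disjoint and a freshly added point is never an old center, $|C|$ doubles from $1$ to $2^{k-1}$; if some cluster ever becomes a singleton the ``farthest point'' coincides with its center and $|C|$ ends up strictly smaller, in which case I would pad $C$ with arbitrary unused points to size exactly $2^{k-1}$ (which only decreases the cost). So the real content is the bound $\max_{x\in X} d(x,C)\le 2\phi_{\OPT}$, which I would establish by the pigeonhole argument underlying Gonzalez's analysis.

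Write $C_i$ for the center set after iteration $i$, so $C_0=\{c_0\}$ (the arbitrary start) and $C_{k-1}$ is the output, with $C_0\subseteq C_1\subseteq\cdots\subseteq C_{k-1}$. Assume for contradiction that some point $p$ has $d(p,C_{k-1})>2\phi_{\OPT}$; then $d(p,C_i)>2\phi_{\OPT}$ for every $i$, i.e.\ throughout the run $p$ stays at distance more than $2\phi_{\OPT}$ from its current cluster center. I would track $p$'s cluster over time: let $c_i$ be the center of $p$'s cluster in the partition induced by $C_i$, and let $z_i=\arg\max_{x\in A_{c_i}}d(c_i,x)$ be the point the algorithm appends to that cluster, so $z_i\in C_{i+1}$. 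The claim is that the $k+1$ points $c_0,z_0,z_1,\dots,z_{k-2},p$ are pairwise at distance greater than $2\phi_{\OPT}$. This follows from three observations, all exploiting that $p$ is far from every chosen center: since $p\in A_{c_i}$ and $z_i$ is the farthest point of that cluster from $c_i$, $d(z_i,c_i)\ge d(p,c_i)=d(p,C_i)>2\phi_{\OPT}$; for $i<j$, $z_i\in C_{i+1}\subseteq C_j$ while $c_j$ is the closest center of $C_j$ to $z_j$, hence $d(z_i,z_j)\ge d(z_j,c_j)>2\phi_{\OPT}$, and likewise $c_0\in C_i$ yields $d(c_0,z_i)\ge d(z_i,c_i)>2\phi_{\OPT}$; and every $z_i$ as well as $c_0$ lies in $C_{k-1}$, so its distance to $p$ is at least $d(p,C_{k-1})>2\phi_{\OPT}$. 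To finish, I would invoke pigeonhole: the optimal $k$-center solution splits $X$ into $k$ clusters each of radius at most $\phi_{\OPT}$, so two of these $k+1$ points share an optimal cluster and are within $2\phi_{\OPT}$ of each other via the triangle inequality through that cluster's center --- contradicting the pairwise bound. Hence $\max_{x\in X}d(x,C_{k-1})\le 2\phi_{\OPT}$.

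The step that needs care is the bookkeeping that makes this legitimate --- that the $k+1$ points are genuinely distinct and that each $z_i$ is a new, well-defined center (in particular $z_i\notin C_i$, which holds because $p\in A_{c_i}$ forces $A_{c_i}$ to be non-trivial). Both hold once $\phi_{\OPT}>0$, since then every listed pair is at positive distance; I would handle the degenerate case $\phi_{\OPT}=0$ separately, noting that it forces $X$ to have at most $k$ distinct points, so as long as some point is still uncovered its cluster has positive radius and contributes a new distinct center in the next round, whence all distinct points are collected within $k-1$ rounds with final cost $0=2\phi_{\OPT}$. Apart from this, the argument is Gonzalez's analysis carried over to the doubling variant; the only genuinely new ingredient is the realization that $p$'s trajectory $c_0\to z_0\to z_1\to\cdots$ hands us exactly enough pairwise-far points --- one per iteration plus the two endpoints --- to reach $k+1$ and apply pigeonhole.
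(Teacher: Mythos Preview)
Your proof is correct, and it takes a genuinely different route from the paper's.

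The paper argues by a two-case split on the final center set $C$: either every optimal cluster is hit by $C$ (and the triangle inequality through the optimal centers gives the bound), or some cluster is missed, in which case they rewind to the last iteration with no double-hit, observe that the \emph{global} farthest point $c=\argmax_{x\in X}d(x,C')$ is among the points selected in the next round, and argue that $c$ must land in an already-hit optimal cluster, bounding $\max_x d(x,C')$ by $2\phi_{\OPT}$.

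You instead fix a hypothetical bad point $p$ and follow only $p$'s cluster through the run, harvesting from each round the local farthest point $z_i$ in that one cluster. This yields $k+1$ pairwise-$(>2\phi_{\OPT})$-separated points and a pigeonhole contradiction. The advantage of your route is that it uses exactly the operation the zero-query algorithm actually performs---selecting the farthest point \emph{per cluster}---rather than relying on the global argmax being among those selections (a step the paper's write-up leaves somewhat implicit, since the ``hence $c'\in C'\cap A_\ell$'' assertion needs the observation that if the global argmax kept hitting fresh clusters at every round then all $k$ clusters would eventually be covered). The paper's argument, on the other hand, is the more direct transcription of Gonzalez's original analysis and avoids the trajectory bookkeeping.

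Two minor remarks. First, your separate treatment of $\phi_{\OPT}=0$ is harmless but unnecessary: the pairwise bound ``$>2\phi_{\OPT}$'' already forces strict positivity and hence distinctness even when $\phi_{\OPT}=0$, so the same pigeonhole contradiction fires. Second, the check ``$z_i\notin C_i$'' is also not needed for the argument---you only use $z_i\in C_{i+1}$, which holds regardless---though it does explain why $|C|$ genuinely doubles, which the paper's size calculation tacitly assumes.
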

\begin{proof}
Let $C$ be the solution returned by Algorithm~\ref{alg:center_no_query}. We first argue that $C$ has the right size.
Note, that after initially having size $1$, in each iteration $i$ of the outer loop, the algorithm adds $2^{i-1}$ points to $C$, thus
\begin{equation*}
    |C| = 1 + \sum_{i = 0}^{k-2} 2^i = 1 + 2^{k-1} - 1 = 2^{k-1}.
\end{equation*}

Let $C^*$ be the optimal solution to the given ordinal $k$-center instance. $C^*$ induces the clustering $A^*=\{A_\ell\}_{\ell\in C^*}$. To prove that Algorithm \ref{alg:center_no_query} has distortion at most 2, we consider two cases. 

In the first case, we assume that for each cluster $A_{\ell}\in A^*$ there is a point $c_\ell \in C$ such that also $c_\ell \in A_{\ell}$. We say that the algorithm {\em hit} the cluster $A_\ell$ with center $c_\ell$. For any center $\ell\in C^*$ of the optimal solution, consider now an arbitrary point $x\in A_\ell$. By the triangle inequality and optimality of $C^*$, we have that
\begin{equation}\label{eqn:triangle_k_center_opt}
    d(x,c_\ell) \leq d(x,\ell) + d(\ell, c_\ell) \leq 2 \phi_\OPT.
\end{equation}
Thus, for every point in $X$ there is a point in $C$ such that the distance between these points is at most $2\phi_\OPT$.

For the other case, assume that there is a cluster $A\in A^*$ such that the algorithm did not hit $A$ with a center, that is, $C\cap A = \emptyset$. Clearly, there must be at least one cluster in $A^*$ such that the algorithm hit the cluster with two centers. Let $C'$ be the solution of the algorithm in the last iteration {\em before} a cluster in $A^*$ was hit by a second center. Consider the point $c = \argmax_{x \in X} d(x,C')$ and note that the algorithm selects $c$ as the next center. Assume that $c\in A_\ell$ in the optimal solution. Hence, there is another point $c' \in C'$ such that also $c'\in A_\ell$. But then for all $x\in X$,
\begin{equation*}
    d(x, C') \leq d(c, C') = \min_{z \in C'} d(c,z) \leq d(c,c') \leq 2\phi_\OPT.
\end{equation*}
Here, the first inequality stems from the definition of $c$, and the second inequality follows from the fact that $c' \in C'$.
The last inequality is again due to the observation that two points in the same optimal cluster have distance at most $2\phi_\OPT$ from another, see Inequality~(\ref{eqn:triangle_k_center_opt}). Hence, the cost of $C'$ is already at most twice the cost of the optimal solution and adding more centers to $C'$ can never increase the cost of the solution. This shows that the lemma also holds in this case and concludes the proof.
\end{proof}

\subsection{Pseudocode of $4$-Distortion Algorithm with $2k$ Queries}\label{app:4-distortion-k-center-2k-queries}

Recall that we defined $S_{y,i}$ to be the set of points such that, for each of these points, $y$ is the closest center among $C$ in the $i$-th iteration of Algorithm~\ref{alg:center_k_query}.

\begin{algorithm}[ht]
\DontPrintSemicolon
\caption{Ordinal $k$-center with $2k$ queries}
\label{alg:center_k_query}

\KwIn{$X,d,P,k$}
$z\gets $ arbitrary point from $X$\; 
$C, Q\gets \{z,\pi_z(n)\}$ \Comment*[r]{initialize with $z$ and the farthest point from $z$}
Define $S_{z,0} = \{x \in X |\text{$x$ ranks $z$ higher than $\pi_z(n)$}\}$ and $S_{\pi_z(n), 0} = X \setminus S_{z,0}$\;
\For{$i=1\ldots (k-1)$}{
    Set $\delta_{\max} = 0$ \;\label{line:k_query_begin_inner}
    \For{$y \in Q$}{
        \Comment{query distance from $y$ to farthest point among $S_{y,i}$}
        Let $z = \underset{x \in S_{y,i}}{\arg\max}~d(y,x)$\;
        Query $\delta = d(y,z)$\;
        
        \If{ $\delta\geq\delta_{\max}$ }{
            $\delta_{\max} = \delta$ \;
            $r \gets z$\;
            $v \gets y$\;
        }
    }
    $C \gets C\cup \{r\}$\Comment*[r]{solution set in iteration $i+1$}
    $Q \gets Q \setminus \{v\}$\;
    Define $R \gets C \setminus Q$\;
    \For{$u \in R$}{
        $add=\textbf{true}$\;
        $w=\underset{x\in S_{u,i+1}}{\arg\max}~d(u,x)$\;
        \For{$p\in Q$}{
            $q=\underset{x\in S_{p,i+1}}{\arg\max}~d(p,x)$\;
            \If{$d(p,q) \geq d(w,q)$}{
                $add=\textbf{false}$\;
            }
        }
        \If{add}{
            $Q \gets Q\cup \{u\}$\;
        }
    }
}
\Return{$C$}
\end{algorithm}

\section{Omitted Content from \Cref{sec:k-median}}\label{app:omitted-4}

\subsection{Proof of Theorem \ref{thm_2median} and Theorem \ref{thm:noquery-median}}

First, we give the proof of Lemma \ref{lemma:ordinal}.
\begin{proof}
    Let point $c$ belong to cluster $A$ induced by the set of centers $C$ and in cluster $A$, let $c$ belong to $j^{th}$ ring. The $(k,z)$++ distribution picks $c$ with probability $\phi_{C}(c)/\phi_{C}(X)$, while  Algorithm \ref{alg:median_no_query} picks $c$ with probability $1/| S_{i,j}|$. We show that $1/
    |S_{i,j}| \geq  \phi_{C}(c)/\phi_{C}(X)$.
    Towards this, we bound the value of $\phi_{C}(X)$ with respect to $\phi_{C}(c)$. As $c$ is in the $j^{th}$ ring, we have that $\phi_{C}(X) \geq \sum_{j'=j+1}^{l} 
    \phi_{C}(S_{i,j'}) + \phi_{C}(c)$. Thus,
    \begin{align*}         
    \phi_{C}(X) &\geq \sum_{j'=j+1}^{l} \phi_{C}(S_{i,j'}) + \phi_{C}(c) \\
                &\geq \sum_{j'=j+1}^{l} \phi_{C}(c) + \phi_{C}(c) \\
                &= \left(1 + \sum_{j'= j+1}^{l} |S_{j',i}|\right) \cdot \phi_{C}(c) \\
                &= 2^{j} \cdot \phi_{C}(c),
    \end{align*}
    where the second inequality holds because points in rings $> j$ have  cost at least $\phi_{C}(c)$. To conclude, we have
\begin{align*}
     \frac{\phi_{C}(c)}{\phi_{C}(X)} &\leq \frac{\phi_{C}(c)}{2^j \cdot \phi_{C}(c)} = \frac{1}{2^j} = \frac{1}{|S_{i,j}|}. \qedhere
\end{align*}

\end{proof}

\begin{algorithm}[h]

\DontPrintSemicolon
\caption{$(k,z)$-clustering without queries}
\label{alg:median_no_query_bicriteria}

\KwIn{ Point set $X$, ordinal information $P = \{ \pi_p \}_{p \in A}$ and $k \in \N$ }
Initialize the set of centers $C = \varnothing$ \;
\For{$1 \dots \log n$}
{
    Sample a point $c$ uniformly at random from $A$ \;
    Let $C' = \{c\}$ \;
    \For{$i = 2$ to $k-1$}{
        Initialize $C_i \gets \phi$ \Comment{The centers to be added in this round}
        \For{each point $c$ in $C$} {
            Define $S = \{ x \in A : \pi_x(c) \le \pi_x(c') \ \forall c' \in C \}$ , i.e., $S$ is the set of points that belong to the cluster with center $c$, and let $l = \lfloor \log |S| \rfloor$  \;
            {
                Sample $O( \log k)$ points uniformly randomly from each of the sets $\{ S_{c,1}, S_{c,2}, \dots, S_{c,\ell} \}$ (defined above) and add them to $C_i$ \Comment {As the size of $S$ is bounded by $n$, we sample at most $O(\log k \cdot \log n)$ points in this step} 
                $C' \gets C' \cup C_i$
            }
        }
        $C \gets C \cup C'$
    }
}

Let $C \gets C \cup C_0$ where $C_0$ is the output of Theorem~\ref{thm:k-query-center}'s $k$-center algorithm\;
\Return{$C$}
\end{algorithm}

We will use two basic claims. The first relates the cost of various $(k,z)$-clustering objectives. The second gives a reduction from $(\alpha,\beta)$ bicriteria solutions to true $O(\alpha)$-approximate solutions. Both claims are arguably folklore and the experienced reader may skip their proofs.

\begin{claim}\label{cl:n-apx}
    Given a point set $X$ in some metric space. Then, any solution $S$ with distortion $\alpha$ to the optimal $k$-center clustering on $X$ yields at most an $\alpha\cdot n$-distortion to the optimal $(k,z)$ clustering on $X$.
\end{claim}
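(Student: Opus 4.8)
## Proof Plan for Claim \ref{cl:n-apx}

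\textbf{Overview of the approach.} The plan is to relate the two objective functions pointwise and then aggregate. The key observation is that for any solution $S$, the $(k,z)$-clustering cost $\phi^{(z)}_S(X) = \sqrt[z]{\sum_{x \in X} d(x,S)^z}$ is sandwiched between the $k$-center cost $\max_{x \in X} d(x,S)$ and $n^{1/z}$ times that quantity. More precisely, I would first establish the two-sided bound
\begin{equation*}
\max_{x \in X} d(x,S) \;\le\; \sqrt[z]{\sum_{x \in X} d(x,S)^z} \;\le\; n^{1/z} \cdot \max_{x \in X} d(x,S),
\end{equation*}
where the left inequality holds because one term of the sum already dominates, and the right inequality holds because each of the $n$ terms is at most $(\max_x d(x,S))^z$. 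This same sandwich applies to the optimal solutions for each objective.

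\textbf{Key steps in order.} First I would fix notation: let $S$ be a solution with $k$-center distortion $\alpha$, so $\max_{x} d(x,S) \le \alpha \cdot \psi_{\OPT}$ where $\psi_{\OPT}$ denotes the optimal $k$-center cost. Let $\phi_{\OPT}^{(z)}$ denote the optimal $(k,z)$-clustering cost. Second, I would chain the inequalities: the $(k,z)$-cost of $S$ is at most $n^{1/z} \cdot \max_x d(x,S) \le n^{1/z} \cdot \alpha \cdot \psi_{\OPT}$ by the right-hand sandwich and the distortion hypothesis. Third, I would lower-bound $\phi_{\OPT}^{(z)}$ in terms of $\psi_{\OPT}$: any $k$-set $C'$ achieving the optimal $(k,z)$-cost also induces a $k$-center cost $\max_x d(x,C') \ge \psi_{\OPT}$ (by optimality of $\psi_{\OPT}$ for $k$-center), and by the left-hand sandwich $\phi_{\OPT}^{(z)} = \sqrt[z]{\sum_x d(x,C')^z} \ge \max_x d(x,C') \ge \psi_{\OPT}$. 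Combining the second and third steps gives
\begin{equation*}
\frac{\phi^{(z)}_S(X)}{\phi_{\OPT}^{(z)}} \;\le\; \frac{n^{1/z}\cdot \alpha \cdot \psi_{\OPT}}{\psi_{\OPT}} \;=\; \alpha \cdot n^{1/z} \;\le\; \alpha \cdot n,
\end{equation*}
using $z \ge 1$ so that $n^{1/z} \le n$. This yields the claimed $\alpha \cdot n$-distortion.

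\textbf{Anticipated obstacle.} This claim is genuinely routine — the only mild subtlety is making sure the comparison is against the \emph{optimal} $(k,z)$-clustering and not the optimal $k$-center clustering, which is why step three (lower-bounding $\phi_{\OPT}^{(z)}$ by $\psi_{\OPT}$ via the optimality of $\psi_{\OPT}$ as a $k$-center value) is the one place to be careful rather than circular. I would also note that the bound is essentially tight in the worst case (a uniform metric where all non-center points are at distance $1$ shows the $n^{1/z}$ gap is real), so no sharper constant is available by this argument. Everything else is a one-line application of the pigeonhole-style sandwich on $\ell_z$ versus $\ell_\infty$ norms of the distance vector $(d(x,S))_{x \in X}$.
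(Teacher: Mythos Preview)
Your proposal is correct and follows essentially the same route as the paper: both upper-bound the $(k,z)$-cost of $S$ by $n^{1/z}\cdot\alpha\cdot\psi_{\OPT}$ via the $\ell_z$-vs-$\ell_\infty$ inequality, lower-bound the optimal $(k,z)$-cost by $\psi_{\OPT}$ using that the $(k,z)$-optimal solution is a feasible $k$-center solution, and then relax $n^{1/z}$ to $n$. Your write-up is in fact slightly more explicit about the sandwich inequality than the paper's version.
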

\begin{proof}
    Let the cost of the optimal $k$-center clustering instance be $\phi_{OPT}$. Then the cost of $S$ for the $k$-center problem is at most $\alpha \cdot \phi_{OPT}$. The cost of $S$ for the $(k,z)$ clustering problem is at most $\sqrt[z]{(n \cdot \alpha^z \cdot \phi^z_{OPT})}$ = $\sqrt[z]{n} \cdot \alpha \cdot \phi_{OPT}$. 

    Moreover, the optimal solution for $(k,z)$ clustering instance with cost $\phi_{z,OPT}$ will have $k$-center cost at most $\phi_{z,OPT}$. Thus, $S$ will be  $\alpha\cdot n \geq \alpha\cdot \sqrt[z]{n}$ approximation for the $(k,z)$ clustering instance. \qedhere

\end{proof}

\begin{claim}\label{cl:BiToApprox}
    Given a point set $X$ in some metric space. Let $C^\prime$ be an $(\alpha,\beta)$ bicriteria solution for $(k,z)$ clustering. Interpret $C^\prime$ as the multiset where $c\in C^\prime$ is added for every point $x\in X$ assigned to $c$. Then any $\gamma$-approximate solution $C$ for $C^\prime$ with respect to $(k,z)$ clustering is an $4\alpha\gamma$ approximate solution for $X$.
\end{claim}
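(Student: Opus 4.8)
The plan is to use the triangle inequality twice to relate the cost of serving the original point set $X$ by a solution $C$ to the cost of serving the multiset $C'$ by $C$, and then to control the cost of an optimal $(k,z)$-clustering of $C'$ in terms of an optimal one for $X$. First I would fix an optimal $(k,z)$-clustering $C^*$ for $X$, and observe that for each $x \in X$, if $c' \in C'$ is the center of the bicriteria solution to which $x$ is assigned, then $d(x, c') \le \alpha \cdot \phi_{\OPT}(X) $ in the appropriate $z$-th-power aggregated sense; more precisely $\sum_{x} d(x,c')^z \le \alpha^z \phi_{\OPT}(X)^z$ by definition of an $(\alpha,\beta)$-bicriteria solution. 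The key identity is that the multiset $C'$ is, up to this ``assignment displacement'', a proxy for $X$: for any candidate center set $C$, the $z$-th-power cost $\sum_{c' \in C'} d(c', C)^z$ differs from $\sum_{x \in X} d(x, C)^z$ by at most a factor depending on $\alpha$, via $d(x,C) \le d(x,c') + d(c',C)$ and $d(c',C) \le d(c',x) + d(x,C)$, combined with the inequality $(a+b)^z \le 2^{z-1}(a^z + b^z)$ — or, working directly with the $z$-norms, the triangle inequality for $\ell_z$.

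Concretely, the steps I would carry out: (i) Let $\OPT$ denote an optimal $(k,z)$-clustering for $X$ with cost $\phi_{\OPT}(X)$, and note $\phi_C(C') \le \phi_{C^*}(C') \cdot \gamma^{-1}$ is the wrong direction — instead, since $C$ is a $\gamma$-approximate solution for the instance $C'$, we have $\phi_C(C') \le \gamma \cdot \phi_{\OPT}(C')$. (ii) Bound $\phi_{\OPT}(C') \le \phi_{\OPT}(X) + (\text{bicriteria displacement})$: using $\OPT$ itself as a feasible solution for $C'$, each $c' \in C'$ (which corresponds to some $x \in X$) pays $d(c', \OPT) \le d(c', x) + d(x, \OPT)$, so by the $\ell_z$ triangle inequality $\phi_{\OPT}(C') \le \|(d(c',x))_{c'}\|_z + \phi_{\OPT}(X) \le \alpha \phi_{\OPT}(X) + \phi_{\OPT}(X) = (\alpha+1)\phi_{\OPT}(X)$. (iii) Transfer back from $C'$ to $X$: for each $x \in X$ assigned to $c' \in C'$, $d(x, C) \le d(x, c') + d(c', C)$, so $\phi_C(X) \le \|(d(x,c'))_x\|_z + \phi_C(C') \le \alpha \phi_{\OPT}(X) + \gamma(\alpha+1)\phi_{\OPT}(X)$. (iv) Collect the constants: $\phi_C(X) \le (\alpha + \gamma\alpha + \gamma)\phi_{\OPT}(X) \le 4\alpha\gamma \cdot \phi_{\OPT}(X)$, using $\alpha,\gamma \ge 1$ so that $\alpha + \gamma\alpha + \gamma \le 3\alpha\gamma \le 4\alpha\gamma$ (the slack is harmless).

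The main obstacle — really the only subtle point — is making sure the $\ell_z$ triangle inequality is applied to the right vectors, i.e.\ treating $\phi_C(S) = \big(\sum_{x\in S} d(x,C)^z\big)^{1/z}$ as a genuine norm of the vector $(d(x,C))_{x \in S}$ and invoking Minkowski's inequality to split $d(x,C) \le d(x,c') + d(c',C)$ coordinatewise. This is clean for $z \ge 1$; for the degenerate cases one just uses the same bound with the convention $\phi_C(X) = \max_x d(x,C)$ for $z = \infty$, where the displacement argument is even simpler. Everything else is bookkeeping of the constants $\alpha, \gamma \ge 1$, and I would state explicitly at the start that these are at least $1$ so the final rounding to $4\alpha\gamma$ is justified.
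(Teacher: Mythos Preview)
Your proposal is correct and follows the same high-level structure as the paper's proof: two applications of the triangle inequality, first to bound the optimal cost of clustering the multiset $C'$ in terms of $\phi_{\OPT}(X)$ (via the displacement $d(x,c_x)$), and then to transfer the cost of $C$ on $C'$ back to a cost on $X$.

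The one noteworthy difference is the inequality you use to combine the two terms. The paper works throughout with the $z$-th powers and applies the crude bound $(a+b)^z \le 2^{z-1}(a^z+b^z)$ twice, accumulating a factor of $2^{2z}$ before taking the $z$-th root to obtain $4\alpha\gamma$. You instead treat $\phi_C(S)=\|(d(x,C))_{x\in S}\|_z$ as an honest $\ell_z$-norm and invoke Minkowski's inequality directly, which avoids the $2^{z-1}$ losses entirely and yields the sharper constant $\alpha + \gamma(\alpha+1)\le 3\alpha\gamma$. Both approaches are standard; yours is cleaner and tighter, while the paper's makes the dependence on the raw $z$-th-power objective $\sum_x d(x,C)^z$ explicit (as they remark immediately after the proof). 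Either way the structure and the ideas are the same.
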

\begin{proof}
We use $c_x$ to denote $\underset{c\in C^\prime}{\text{argmin}}~d(x,C^\prime)$. 
Then
\begin{align*}
  \sum_{x\in X} d^z(c_x,C^*) &\leq 2^{z-1} \sum_{p\in X} d^z(c_x,X) + d^z(x,C^*) \\
  &\leq 2^{z-1} (\alpha+1)\cdot \phi_\OPT^z.  
\end{align*}

This implies
\begin{align*}
\sum_{p\in X} d^z(c_x,S) &\leq \gamma\cdot \sum_{p\in X} d^z(c_x,C^*)\\
&\leq \gamma\cdot 2^{z-1}\cdot (\alpha^z+1)\cdot \phi_\OPT^z.
\end{align*}

Combining, we then have
\begin{align*}
  \sum_{x\in X} d^z(x,C) &\leq 2^{z-1} \sum_{x\in X} d^z(c_x,C) + d^z(x,C^\prime) \\
  &\leq 2^{z-1} \left(\gamma^z\cdot 2^{z-1}\cdot (\alpha^z+1) +\alpha^z\right)\cdot \phi_\OPT^z  \\
  & \leq 2^{2z} \cdot \gamma^z\cdot \alpha^z\cdot  \phi_\OPT^z  
\end{align*}

Taking the $z^{\text{th}}$ root then yields a $4\gamma\alpha$ approximation.
\end{proof}

We remark that if we were optimizing the objective 
$\sum_{x\in X}d^z(x,C)$ instead of $\sqrt[z]{\sum_{x\in X}d^z(x,C)}$,
the claim changes to an $4^z\cdot \alpha\cdot \gamma$ approximation.
Tighter bounds than claimed are possible, but since we are only interested in $O(1)$ distortion, the bounds we presented are sufficient for our needs.

We now show a generalized version of Lemma 3.2 in \citet{AV07}, holding for all $(k,z)$-clustering objectives in metric spaces. To that end, let us recall that
\[
    p_z(c) := \frac{d(c, C)^z}{\sum_{x \in X} d(x, C)^z},
\]
to be the probability we sample a point $c$ conditioned on having selected a set of centers $C$ already. We denote the induced distribution by $D^{++}_z$, for any $z$. Moreover, recall that $\C^* = \{A_1, \ldots, A_k\}$ is the optimal clustering collection.
\begin{claim}\label{cl:kz++}
    Let $C$ be the current set of centers and let $A$ be an optimal cluster from $\C^*$.  For some $z$, $1 \leq k \leq n$ and $c \in A$ , let $C'$ be the set of centers added to $C$ such that $P(c \in C') \geq p_z(c)$. Then,
    \[
        \E_{D^{++}_z}[\phi^z_{C \cup \{c\}}(A) \mid c\in A \in \C^*, C] \leq 2^{z+1} \cdot \phi^z_\OPT(A),
    \]
    where the expectation is conditional on having selected set of centers $C$ already.
    \[
        \E_{D^{++}_z}[\phi_{C \cup \{c\}}(A) \mid c\in A \in \C^*, C] \leq 4 \cdot \phi_\OPT(A),
    \]
\end{claim}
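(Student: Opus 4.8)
The plan is to mirror the classical $k$-means++ one-step analysis of \citet{AV07}, generalized from $z=2$ to arbitrary $z$. Fix an optimal cluster $A = A_j \in \C^*$ and condition on the current center set $C$. The key quantity is $\E_{D^{++}_z}[\phi^z_{C \cup \{c\}}(A) \mid c \in A, C]$, where $c$ is drawn from $A$ with probability $d(c,C)^z / \big(\sum_{a \in A} d(a,C)^z\big)$ (the distribution $D^{++}_z$ restricted to $A$). Since the hypothesis only gives $\Pr[c \in C'] \ge p_z(c)$, and adding centers never increases cost, it suffices to prove the bound when exactly the point $c \sim D^{++}_z|_A$ is added; the general claim then follows by monotonicity of $\phi_C$ in $C$.

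First I would set up notation: let $a^* $ be the optimal center serving $A$, so $\phi^z_\OPT(A) = \sum_{a \in A} d(a, a^*)^z$. For a sampled center $c \in A$, each point $a \in A$ now pays $\min\{d(a,C)^z,\, d(a,c)^z\} \le d(a,c)^z$. The plan is to bound $d(a,c)$ via two triangle-inequality routes, $d(a,c) \le d(a,a^*) + d(a^*,c)$, and to control $d(a^*,c)$ using $d(a^*,c) \le d(a^*,a') + d(a',c)$ for the "anchor" point $a'$ through which $c$'s sampling weight is measured — or, more cleanly, average over $a$ and use that $c$ is sampled proportionally to $d(c,C)^z$ together with $d(c,C) \le d(c,a) + d(a,C)$ for every $a \in A$. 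Concretely: $\E[\phi^z_{C\cup\{c\}}(A)] \le \sum_{c\in A} \frac{d(c,C)^z}{\sum_{a\in A} d(a,C)^z} \sum_{a \in A} \min\{d(a,C), d(a,c)\}^z$. Using $d(a,c) \le d(a,C) + d(c,C)$ inside the $\min$ when we want the $d(a,c)$ branch, and the power-mean / convexity inequality $(x+y)^z \le 2^{z-1}(x^z + y^z)$, one reduces the double sum to a multiple of $\sum_{a\in A} d(a,a^*)^z$, picking up exactly the factor $2^{z+1}$. This is the standard telescoping trick and it should go through with only routine bookkeeping.

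The second inequality, $\E_{D^{++}_z}[\phi_{C\cup\{c\}}(A) \mid c \in A, C] \le 4\,\phi_\OPT(A)$, follows from the first by taking $z$-th roots plus one application of Jensen: $\E[\phi_{C\cup\{c\}}(A)] = \E[(\phi^z_{C\cup\{c\}}(A))^{1/z}] \le (\E[\phi^z_{C\cup\{c\}}(A)])^{1/z} \le (2^{z+1})^{1/z} \phi_\OPT(A) = 2^{1+1/z}\phi_\OPT(A) \le 4\,\phi_\OPT(A)$, since $2^{1+1/z} \le 4$ for all $z \ge 1$. I would state this as a short corollary of the power-sum bound rather than re-deriving it.

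The main obstacle is getting the constant in the power-sum step right without the argument degenerating: the naive bound $d(a,c)^z \le 2^{z-1}(d(a,a^*)^z + d(a^*,c)^z)$ leaves $d(a^*,c)^z$, which must be re-expressed in terms of the sampling weights and the $A$-local optimal cost. The clean way is to note $d(a^*,c) \le d(a^*,a') + d(a',c)$ and average $d(a',c)^z$ against the sampling distribution, but one must be careful that the sampling is relative to $d(\cdot,C)^z$, not to $d(\cdot,a^*)^z$; the standard resolution (as in \citet{AV07}) is to use $\min\{d(a,C),d(a,c)\} \le d(a,c) \le d(a,c')+d(c',c)$ for the sampled point $c$ and the nearest prior center, then symmetrize the double sum over $a$ and $c$ — at which point the $d(\cdot,C)^z$ weights cancel against the normalization and only $O(1)$-factors of $\phi^z_\OPT(A)$ survive. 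I expect this bookkeeping, and pinning the exponent of $2$ to exactly $z+1$, to be the only real work; everything else is monotonicity of the cost in $C$ and Jensen.
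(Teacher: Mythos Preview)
Your proposal is correct and follows essentially the same route as the paper. Both write the conditional expectation as $\sum_{c\in A}\frac{d(c,C)^z}{\sum_{a\in A}d(a,C)^z}\sum_{a\in A}\min\{d(a,C)^z,d(a,c)^z\}$, bound $d(c,C)^z$ via the triangle inequality $d(c,C)\le d(a,C)+d(a,c)$ together with the power-mean inequality $(x+y)^z\le 2^{z-1}(x^z+y^z)$, average over $a\in A$, split the resulting two terms by bounding the $\min$ by $d(a,c)^z$ in one and by $d(a,C)^z$ in the other, and finish as in \citet{AV07}; the second display then follows by Jensen exactly as you wrote. Your initial detour through the optimal center $a^*$ is unnecessary (the paper never introduces it in the averaging step), but you correctly abandon it for the direct symmetrization, which is the paper's argument verbatim.
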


\begin{proof}
Every point $a \in A$ will contribute exactly $\min(d(a,C)^z,d(a,C')^z)$ to $\phi^z_{C \cup C'} (A)$.
     Let $c$ be an arbitrary point in $A$ and $C'$ be the set of centers Algorithm $\ref{alg:median_no_query}$ adds in current iteration. By Lemma \ref{lemma:ordinal}, we know that $P[c \in C'] \geq \phi^z_{C}(c)/ \phi^z_{C}(X)$. As picking additional points only reduces the cost, we have the following upper bound on the expected value of $\E[\phi^z_{C \cup C'} (A)]$ 

    \begin{align*}
       \E[\phi^z_{C \cup C'} (A)] \leq \sum_{c \in A} \frac{\phi^z_{C}(c)}{\phi^z_{C}(A)} \sum_{a \in A} \min(d(a,C)^z, d(a,a_0)^z) 
    \end{align*}

    By the triangle inequality, we have $d(a_0,C) \leq d(a,C) + d(a,a_0)$ and it follows from the power-mean inequality that, $$d(a_0,C)^z \leq 2^{z-1} \cdot ( d(a,C)^z + d(a,a_0)^z ).$$ Averaging over all $a \in A$, we get 
    $$d(a_0,C)^z \leq \frac{2^{z-1}}{|A|} \sum_{a \in A} d(a,C)^z +  \frac{2^{z-1}}{|A|} \sum_{a \in A} d(a,a_0)^z$$.
    Using this bound for $d(a_0,C)^z$ and the fact that $\min(d(a,C)^z, d(a,a_0)^z) \leq d(a,C)^z, d(a,a_0)^z$, we get $\E[\phi^z_{C \cup C'} (A)] \leq 2^{z+1} \phi^z_\OPT(A)$.
     \footnote{This last step  is identical to the one in \citet{AV07}.}

     Further, by Jensen's inequality, we have $\mathbb{E}[\phi_{C \cup C'} (A)] \leq \E[\phi^z_{C \cup C'} (A)]^{(1/z)} \leq 4 \cdot \phi_\OPT(A)$.
\end{proof}

    From the above proof, we get that any time we pick at least one point from a given cluster, say $A_i$ (from the optimal clustering), we get $4$-approximation in expectation. It is important to note that, we might pick multiple centers from the same cluster in some iteration.

    \begin{defn}[Covered Optimal Cluster]\label{defn:uncovered}
    For all $i \in [k]$, optimal cluster $A_i$ is considered to be \emph{covered} if $\phi_C(A_i) \leq 10 \cdot \phi_{C^*}(A_i)$, and uncovered otherwise. For ease of notation, we use $Uncovered$ to denote the set of points in \emph{uncovered} clusters.
\end{defn}

      We prove that at the end of $k$ iterations, the probability that we do not hit some \emph{uncovered} cluster and we do not have an $O(1)$ approximation is very small.

\begin{lemma}
\label{lemma_badclusters}
    For some set of centers $C$, if $\phi_{C}(X) \geq  20 \cdot \phi_{OPT,z}(X)$, then choosing a point according to $D_{z++}$ we hit an uncovered cluster with probability which is at least $1/5$. 
\end{lemma}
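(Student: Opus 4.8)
The plan is to show that, if the total cost $\phi_C(X)$ is still large compared to the optimum, then a constant fraction of this cost must be concentrated in the uncovered clusters, and hence the sampling distribution $D^{++}_z$ (which samples proportionally to cost) hits an uncovered cluster with constant probability. First I would split the total cost over the optimal clustering $\C^* = \{A_1, \dots, A_k\}$: since every point belongs to exactly one optimal cluster, $\phi^z_C(X) = \sum_{i \in [k]} \phi^z_C(A_i)$, and likewise $\phi^z_{OPT,z}(X) = \sum_{i \in [k]} \phi^z_{C^*}(A_i)$. Splitting the first sum into covered and uncovered clusters, the covered part contributes at most $\sum_{i \text{ covered}} 10^z \cdot \phi^z_{C^*}(A_i) \le 10^z \cdot \phi^z_{OPT,z}(X)$ by Definition~\ref{defn:uncovered} (being slightly careful about whether the covered condition is stated on the $z$-th powers or the roots; I would use the power version $\phi^z_C(A_i) \le 10^z \phi^z_{C^*}(A_i)$, or adjust the constant $10$ accordingly).

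The key step is then the following inequality chain. Under the hypothesis $\phi^z_C(X) \ge 20^z \cdot \phi^z_{OPT,z}(X)$ (again, the statement is cleanest on $z$-th powers; the constant $20$ absorbs any rooting issues for the regime of constants we care about), we get
\begin{align*}
\sum_{i \text{ uncovered}} \phi^z_C(A_i) &= \phi^z_C(X) - \sum_{i \text{ covered}} \phi^z_C(A_i) \\
&\ge \phi^z_C(X) - 10^z \cdot \phi^z_{OPT,z}(X) \\
&\ge \phi^z_C(X) - \tfrac{1}{2^z}\cdot \phi^z_C(X) \cdot \tfrac{10^z}{10^z} ,
\end{align*}
and, more simply, since $10^z \cdot \phi^z_{OPT,z}(X) \le \tfrac{10^z}{20^z}\,\phi^z_C(X) = \tfrac{1}{2^z}\,\phi^z_C(X)$, the uncovered cost is at least $(1 - 2^{-z})\,\phi^z_C(X) \ge \tfrac12\,\phi^z_C(X)$. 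Now I would observe that the probability that a point drawn from $D^{++}_z$ lands in an uncovered cluster is exactly $\big(\sum_{i \text{ uncovered}} \phi^z_C(A_i)\big)\big/\phi^z_C(X)$, because $p_z(c) = d(c,C)^z / \sum_{x} d(x,C)^z$ sums over a cluster to that cluster's fractional contribution to the total cost. Hence this probability is at least $1/2 \ge 1/5$, which is stronger than claimed (the slack leaves room for whatever rooting constants one wants to be conservative about).

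\textbf{Main obstacle.} The only genuinely delicate point is bookkeeping the $z$-th roots versus $z$-th powers consistently between the hypothesis $\phi_C(X) \ge 20\,\phi_{OPT,z}(X)$, the covered-cluster definition $\phi_C(A_i) \le 10\,\phi_{C^*}(A_i)$, and the fact that $D^{++}_z$ samples proportionally to $d(\cdot,C)^z$ (the $z$-th powers, not the roots). Since $\phi_C$ is a $z$-norm, $\phi_C(X)^z = \sum_i \phi_C(A_i)^z$ decomposes additively but $\phi_C(X)$ itself does not, so the clean argument lives at the level of $z$-th powers. If the constants $10$ and $20$ are meant as stated on the roots, then on the $z$-th powers they become $10^z$ and $20^z$, and the ratio $10^z/20^z = 2^{-z} \le 1/2$ still makes everything go through with room to spare; I would simply run the computation on powers and note at the end that the resulting probability, $1 - 2^{-z} \ge 1/2$, comfortably exceeds the claimed $1/5$. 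No other step requires anything beyond the additivity of cost over a partition and the definition of $p_z$.
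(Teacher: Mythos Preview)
Your argument is correct for the lemma as literally stated, and it is exactly the first computation the paper does: split $\phi_C^z(X)$ over the optimal clusters, bound the covered part by $10^z\,\phi_{\OPT}^z(X)$, use the hypothesis to get that the uncovered part carries at least a $(1-2^{-z})\ge 1/2$ fraction of the total $z$-th-power cost, and conclude since $D^{++}_z$ samples proportionally to that cost.

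However, the paper's proof does not stop there, and this is worth noting. After obtaining the $\ge 1/2$ bound for pure $D^{++}_z$, the paper proceeds to two further points that the downstream theorems actually rely on. First, it argues that the same conclusion (with a slightly weaker constant $1/4$) holds for Algorithm~\ref{alg:median_no_query}'s ring-based sampling, which draws from the rings \emph{independently} rather than from a single global distribution; this requires an AM--GM/convexity step and is not implied by the pointwise domination of Lemma~\ref{lemma:ordinal} alone. Second, via Claim~\ref{cl:kz++} and Markov, it shows that with probability at least $1/5$ the sampled point not only \emph{hits} an uncovered cluster but \emph{covers} it in the sense of Definition~\ref{defn:uncovered}. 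Combining these gives the final $1/5$ that is then amplified by the $O(\log k)$ repetitions. So the lemma statement is somewhat weaker than what its proof establishes and what Theorems~\ref{thm_2median} and~\ref{thm:noquery-median} consume; your write-up handles the literal claim cleanly but omits these two additional pieces.
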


\begin{proof}
    Assuming $\phi_{C}(X) \geq  20 \cdot \phi_{OPT}(X)$, first we prove that the \emph{uncovered} clusters account for at least $1/2$ of the total cost.
    We have 
    \begin{align*}
        &20^z \phi^z_{OPT} (X) \leq \phi^z_{C} (X)  \\
        &= \sum_{i \in Uncovered} \phi^z_{C} (A_i) + \sum_{i \in Covered} \phi^z_{C} (A_i) \\
                &\leq \sum_{i \in Uncovered} \phi^z_{C} (A_i) + 10^z \cdot \sum_{i \in Covered} \phi^z_{C^*} (A_i) \\
                &\leq \sum_{i \in Uncovered} \phi^z_{C} (A_i) + 10^z \cdot \phi^z_{C^*} (X).
    \end{align*}

    Thus, $\Pr_{D_{z++}}[\textit{An \emph{uncovered} unhit cluster is hit}] = \frac{\phi_{C}(Uncovered)}{\phi_{C} (X)} \geq  \frac{20^z - 10^z }{20^z} \geq 1/2$.

    But note that we sample points in different rings independently. Partition all the points from the bad clusters into sets $\{ X_1, X_2, \dots X_m \}$ according to rings, i.e., group all points from the same ring together. Then, the probability of picking at least one of these points is 
    $$ 1- \prod_{i \in [m]} \left(1 - \sum_{x \in X_i} \frac{\phi_{C}(x)}{\phi_{C}(X)}\right) \geq \frac{1}{2} \cdot \sum_{x \in Uncovered} \frac{\phi_{C}(x)}{\phi_{C}(X)} \geq \frac{1}{2} \cdot \frac{1}{2} = \frac{1}{4}.$$ 
 
    The proof of the above claim comes from the following argument. We give a sketch of the proof. 

    First, using the AM-GM inequality, we have
     $$\prod_{i \in [m]} \left(1 - \sum_{x \in X_i} \frac{\phi_{C}(x)}{\phi_{C}(X)}\right) \leq (1 - p')^m,$$
     where $p' = \frac{1}{m} \cdot \sum_{x \in Uncovered} \frac{\phi_{C}(x)}{\phi_{C}(X)}$.
     
     Moreover, using the fact that $f_t(x) = (1 - x/t)^t$ is convex for any $t \geq 1$ and $f_t(1) \leq 1/e$, we also have  
     $$1- \frac{1}{2} \cdot \sum_{x \in Uncovered} \frac{\phi_{C}(x)}{ \phi_{C}(X)} \geq \prod_{i \in [m]} \left(1 - \sum_{x \in X_i} \frac{\phi_{C}(x)}{\phi_{C}(X)}\right).$$
    
    Claim \ref{cl:kz++} shows that sampling a \emph{single} center $c$ from $A$ according to distribution $D^{++}$ yields $\E_{D^{++}}[\phi_{C \cup \{c\}}(A_i)] \leq 4 \cdot \phi_{C^*}(A_i)$. Thus, by Markov's Inequality, we have that 
    \[
        \Pr_{D^{++}}[\phi_{C \cup \{c\}}(A) \geq 5 \cdot \phi_{C^*}(A)] \leq \frac{4}{5}.
    \]
    This means that $A_i$ will be covered (Definition \ref{defn:uncovered})  with a probability of at least $\frac{1}{5}$, if we sample according to $D^{++}$. This is equivalent to saying that there exists $A^\prime \subseteq A$ that has $\phi_{C}(A^\prime) \geq \frac{\phi_{C}(A)}{5}$, and such that sampling a center $c \in A^\prime$ makes $A$ covered.

    The probability that we hit an \emph{uncovered} cluster $A$ and make it a \emph{covered} cluster is $4/5 \cdot 1/4 = 1/5$. We conclude that in each iteration, we cover an \emph{uncovered} cluster with probability $1/5$. 
\end{proof}

The proof of Theorem \ref{thm_2median} follows directly from Lemma \ref{lemma_badclusters}. Algorithm \ref{alg:median_no_query} can be viewed as repeating each step $O(\log k)$ times. Hence, the probability of failure in each iteration is $(4/5)^{O(\log k)} \leq 1/(2 \cdot k)$. A large enough constant for this to hold is $c \geq 7$. An upper bound on the probability of failure in any one (at least one) of the $k$ iterations is $k \cdot (1/(2 \cdot k)) = 1/2$. Hence, the algorithm succeeds with probability at least $1/2$. 

For the proof of Theorem \ref{thm:noquery-median}, we use the analysis done so far and show that repeating the algorithm $\log n$ times amplifies the probability of success to give us a constant factor approximation.

\begin{proof}[Proof of Theorem \ref{thm:noquery-median}]
    First, note that we augment the solution with the $k$-center approximate solution which is an $O(n)$ approximation with respect to any $(k,z)$ objective. We prove that the output of the sub-routine (Line 2-11) of Algorithm \ref{alg:median_no_query_bicriteria} gives us a constant factor approximation with probability $1- 1/n$.
    We know from Lemma \ref{lemma_badclusters} that if we sample one point, we always hit a cluster with probability at least $1/2$. As we repeat the loop $O(\log k)$ times, we succeed (hit an \emph{uncovered} cluster) with probability $1 - 1/2^{O(\log k)} = 1 - 1/(2 \cdot k)$.
    As $i$ ranges from $1$ to $k$, the probability of success (hitting all the \emph{uncovered} clusters which are at most $k$ in number) is at least $1 - k / (2 \cdot k) = 1/2$ and with constant probability we \emph{cover} it. We use the union bound to bound the probability of at least one failure, which is at most $k / (2 \cdot k) = 1/2$.
    As we repeat the whole algorithm $\log n$ times, the probability that we get a constant approximation in at least one of times is $1 - (1/2^{\log n}) = 1 - 1/n$.

    Note that the worst-case approximation ratio of our clustering will be $O(n)$  (because we augment the $n$-approximate $k$-center solution). Putting it all together, we get that the expected cost of the solution is $(1-1/n) \cdot 20 \cdot \phi_{C^*}(X) + (1/n) \cdot n \cdot \phi_{C^*}(X) = O(1) \cdot\phi_{C^*}(X)$ and that concludes the proof.
\end{proof}

\subsection{Proof of Theorem \ref{thm:query-median}}

\begin{algorithm}[h]
\DontPrintSemicolon
\caption{$k$-median with a $O(k^4 \log^5 n)$ queries}
\label{alg:median_few_queries}

\KwIn{ Point set $X$, ordinal information $\{ \pi_x \}_{x \in X}$ and $k \in \N$ }
Initialize $C \gets C_0$ from Theorem~\ref{thm:k-query-center}'s $k$-center algorithm\;
Sample a point $c$ uniformly at random from $A$ \;
$C \gets C \cup \{c\}$ \;
\For{$t = 1$ to $T$}{
    Sample $c \in S_{xj}$ with probability $\min\left(1, \frac{T}{|S_{xj}|} \cdot \frac{\widehat{\phi_C}(S_{xj})}{\sum_{i, j} \widehat{\phi_C}(S_{ij})}\right)$\\
    $C \leftarrow C \cup \{c\}$
}
\Return{$C$}
\end{algorithm}


\begin{claim}
    Given a set of current centers $C$, the probability that center $c$ is sampled according to distribution $D$ is
    \[
        \widehat{p}(c) \geq \frac{1}{2} \cdot p(c),
    \]
    where $p(c)$ is simply $\Pr_{D^{++}}[c \text{ is added to } C]$, and similarly $\widehat{p}(c)$ is $\Pr_D[c \text{ is added to } C]$.
\end{claim}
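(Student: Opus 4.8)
The plan is to bound the probability $\widehat{p}(c)$ from below by comparing the estimated ring cost $\widehat{\phi_C}(S_{ij})$ against the true ring cost $\phi_C(S_{ij})$, and then sum over the (unique) ring containing $c$. Fix the current center set $C$ and suppose $c$ lies in ring $S_{rj}$ of cluster $r$ (center $c_r$), with $j \geq 1$. Write $\ell = \lfloor \log |S_r| \rfloor$ and recall the ring $S_{r,j}$ has size $2^{\ell - j}$ for $j \geq 2$ (and $S_{r,1}$ is whatever remains), constructed so that every point in $S_{r,j-1}$ is at least as far from $c_r$ as every point in $S_{r,j}$. The key structural fact I would establish first is the two-sided estimate
\[
    \tfrac{1}{2}\,\phi_C(S_{ij}) \;\leq\; \widehat{\phi_C}(S_{ij}) \;\leq\; 2\,\phi_C(S_{ij})
\]
for every ring, or at least the lower bound $\widehat{\phi_C}(S_{ij}) \ge \tfrac12 \phi_C(S_{ij})$ together with an upper bound of the form $\widehat{\phi_C}(S_{ij}) \le \phi_C(S_{i,j+1}) + \phi_C(S_{ij})$ or similar. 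This follows from the nesting of distances: $\widehat{\phi_C}(S_{ij}) = |S_{ij}| \cdot \min_{x \in S_{i,j-1}} d(x, c_i)$, and since the $|S_{ij}|$ points of $S_{ij}$ are each at distance at most $\min_{x \in S_{i,j-1}} d(x,c_i)$ from $c_i$ — no wait, at least; so in fact $\widehat{\phi_C}(S_{ij})$ underestimates using the minimum over the \emph{next} (farther) ring, which dominates the distances in $S_{ij}$ — one gets $\widehat{\phi_C}(S_{ij}) \ge \phi_C(S_{ij})$ directly, since each point in $S_{ij}$ contributes $d(x,c_i) \le \min_{y\in S_{i,j-1}} d(y,c_i)$ to $\phi_C$, no, $\ge$; I would carefully fix the direction of these inequalities from the definition of the rings (farthest points go in higher-indexed rings, $S_{c,\ell}$ is the singleton farthest point) and the fact that $S_{i,j-1}$ is \emph{farther} than $S_{i,j}$, hence $\min$ over $S_{i,j-1}$ dominates every distance appearing in $\phi_C(S_{ij})$.

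Second, I would lower-bound $\widehat p(c)$. By definition $\widehat{p}(c) = \frac{1}{|S_{rj}|}\cdot\frac{\widehat{\phi_C}(S_{rj})}{\sum_{i,j'}\widehat{\phi_C}(S_{i,j'})}$. Using the lower bound on the numerator's ring cost and the upper bound on each ring cost in the denominator from the estimate above, together with $\sum_{i,j'} \phi_C(S_{i,j'}) = \phi_C(X)$ (for $z=1$; in general the $z$-th powers add, so one works with $\phi_C^z$), one obtains
\[
    \widehat{p}(c) \;\geq\; \frac{1}{|S_{rj}|} \cdot \frac{\tfrac12\,\phi_C(S_{rj})}{2\,\phi_C(X)}
    \;=\; \frac{1}{4}\cdot\frac{\phi_C(S_{rj})}{|S_{rj}|\cdot\phi_C(X)}
    \;\geq\; \frac{1}{4}\cdot\frac{d(c,C)}{\phi_C(X)}\cdot\frac{|S_{rj}|}{|S_{rj}|},
\]
where the last step uses that $c$, sitting in ring $S_{rj}$, has $d(c,C) \le \max_{x\in S_{r,j}} d(x,c_r)$ while $\phi_C(S_{rj}) \ge |S_{rj}| \cdot \min_{x\in S_{rj}} d(x,c_r)$, and the min/max within a single ring differ by a controlled factor (in fact, across the coarsest decomposition the adjacent-ring comparison bounds them within a factor of $2$). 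Chasing these constants should give exactly $\widehat p(c) \ge \tfrac12 p(c)$, possibly after slightly sharpening the ring-cost estimates; if a naive chase gives $\tfrac14$ instead of $\tfrac12$, I would tighten by noting that the "$-1$" telescoping loss in the ring construction (the $\min$ on ring $j-1$ versus the true cost on ring $j$) only helps, and that $d(c,C) = d(c,c_r)$ is sandwiched between $\min$ and $\max$ of distances in its own ring $S_{rj}$, with the two-sided ring estimate absorbing the slack.

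The main obstacle I anticipate is pinning down the exact constant: getting $\tfrac12$ rather than some smaller constant requires the ring-cost estimate to be tight on the side that matters, and the subtlety is that $\widehat{\phi_C}(S_{ij})$ uses $\min_{x \in S_{i,j-1}}$ — the \emph{previous} ring — which overestimates the true cost of $S_{ij}$, while the denominator needs each term overestimated too, so the two overestimates must be balanced rather than fought against each other. Concretely, I expect the clean argument to be: (i) $\widehat{\phi_C}(S_{ij}) \ge \phi_C(S_{ij})$ always (previous ring is farther), so the numerator is fine with no $\tfrac12$ loss; (ii) $\sum_{i,j}\widehat{\phi_C}(S_{ij}) \le 2\sum_{i,j}\phi_C(S_{ij}) = 2\phi_C(X)$ because $|S_{ij}| \cdot \min_{S_{i,j-1}} d \le 2\,\phi_C(S_{i,j-1}) + 2\phi_C(S_{ij})$-type telescoping bounds the total overestimate by a factor $2$; (iii) within ring $S_{rj}$, $d(c,C)/\phi_C(S_{rj}) \le 1/(|S_{rj}|) \cdot (\max/\min) \le$ something that, combined with (i)–(ii), yields $\widehat p(c) \ge \tfrac12 p(c)$. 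I would also remark that for general $(k,z)$ the identical argument runs with $d(\cdot,\cdot)^z$ in place of $d(\cdot,\cdot)$ and $\phi_C^z$ in place of $\phi_C$, which is why the lemma is stated with $p_z$, and the constant $\tfrac12$ is preserved because the power-$z$ versions of the ring inequalities degrade by the same multiplicative factors on both sides.
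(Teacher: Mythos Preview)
Your strategy is the paper's strategy: lower-bound the numerator ring-by-ring and upper-bound the denominator by $2\phi_C(X)$. The denominator step (your (ii)) is exactly what the paper does: $\widehat{\phi_C}(S_{ij}) = |S_{ij}|\cdot \min_{q\in S_{i,j-1}} d(q,C) \le 2|S_{i,j-1}|\cdot \min_{q\in S_{i,j-1}} d(q,C) \le 2\phi_C(S_{i,j-1})$, and summing gives $\sum_{i,j}\widehat{\phi_C}(S_{ij})\le 2\phi_C(X)$.

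Where you drift from the paper is the numerator. The paper does \emph{not} go through $\widehat{\phi_C}(S_{rj})\ge \phi_C(S_{rj})$ and then try to compare $d(c,C)$ to the ring average; it observes directly that $\widehat{\phi_C}(S_{rj})/|S_{rj}| = \min_{q\in S_{r,j-1}} d(q,C) \ge d(c,C)$ for every $c\in S_{rj}$, because every point in the adjacent ring used by the estimate is at least as far from the center as $c$. Plugging this straight into $\widehat p(c) = \frac{1}{|S_{rj}|}\cdot \frac{\widehat\phi_C(S_{rj})}{\sum\widehat\phi_C}$ and using the denominator bound gives $\widehat p(c)\ge \frac{d(c,C)}{2\phi_C(X)} = \tfrac12 p(c)$ with no further work. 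Your step~(iii), which tries to control a $\max/\min$ ratio within a ring, is exactly the detour that would lose you the extra factor and land you at $\tfrac14$; it is unnecessary once you use the direct pointwise inequality on the numerator.

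Your back-and-forth about which ring is ``farther'' is understandable --- the only thing you need to internalize is that the ring appearing in the $\min$ of the estimate is the one whose points are all at least as far as any point of $S_{rj}$, and has half as many points (so $|S_{rj}|$ equals twice its size). Both ingredients of the proof are one-liners once that is fixed. Your remark that the same argument runs verbatim for general $z$ with $d^z$ replacing $d$ is correct and is also what the paper notes.
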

\begin{proof}
    Let us begin by recalling that
    \begin{align*}
        p(c) &= \frac{d(c, C)}{\sum_{x \in X} d(x, C)} = \frac{d(c, C)}{\phi_C(A_i)} \cdot \frac{\phi_C(A_i)}{\underbrace{\sum_{x \in X} d(x, C)}_{=\phi_C(X)}}\\
        \widehat{p}(c) &= \frac{1}{|S_{xj}|} \cdot \frac{\widehat{\phi_C}(S_{xj})}{\sum_{i, j} \widehat{\phi_C}(S_{ij})}.
    \end{align*}
    Since we also know that $\widehat{\phi_C}(S_{xj}) = |S_{xj}| \cdot \min_{q \in S_{xj-1}} d(q, C) \geq |S_{xj}| \cdot d(c,C)$ for all $c \in S_{xj}$, then all we need to show is that $\sum_{i,j} \widehat{\phi_C}(S_{ij}) \leq 2\cdot \phi_C(X)$, in which case the claim holds. We have that for all $i$,
    \begin{align*}
        \widehat{\phi_C}(S_{ij}) &= |S_{ij}| \cdot \min_{q \in S_{ij-1}} d(q, C)  \\
        &\leq 2 \cdot |S_{ij-1}| \cdot \min_{q \in S_{ij-1}} d(q, C)  \\
        &\leq 2 \cdot \phi_C(S_{ij-1}),
    \end{align*}
    where the first inequality follows from the construction of the rings $S_{ij}$'s, and the second again by the fact that $\widehat{\phi_C}(S_{ij}) \geq |S_{ij}| \cdot d(c,C)$. All in all, we have that 
    \[
        \sum_{i,j} \widehat{\phi_C}(S_{ij}) \leq 2 \cdot \sum_{i,j} \phi_C(S_{ij-1}) \leq 2\cdot \phi_C(X),
    \]
    which concludes the proof.
\end{proof}

We note that the proof may be readily adapted to squared distances or even distances with arbitrary powers. Thus, the same analysis also works for $(k,z)$ clustering.
We also can improve the lower bound to $\widehat{p}(c) \geq (1-\varepsilon) \cdot p(c)$ for any $\varepsilon$, at the cost of increasing the number of queries by a factor $\varepsilon^{-1}$. However, since this does not improve the analysis in any meaningful way, we will use the claim as stated.

Since the cost of a $(k,z)$ clustering is always between the cost of a $(k,1)$ and a $(k,\infty)$ clustering, the same claim also applies to $(k,z)$ clustering in general.

\begin{lemma}\label{lem:pbhit-al}
    Let $C$ be the current set of centers and let $A$ be some optimal but yet uncovered cluster from $C^*$. Then, the probability that $A$ remains uncovered after the addition of a new center is, at most
    \[
        \Pr[A \text{ remains uncovered}] \leq \exp\left(-\frac{T \cdot \phi_C(A)}{10 \cdot \phi_C(X)}\right).
    \]
\end{lemma}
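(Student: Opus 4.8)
\textbf{Proof proposal for \cref{lem:pbhit-al}.}
The plan is to express the event ``$A$ remains uncovered after one new sampling step'' as the complement of ``some center is sampled from $A$'', then lower-bound the probability of covering $A$ in a single step and finally use the standard inequality $1-x \le e^{-x}$. By Claim~\ref{cl:kz++} (more precisely the $z=1$ case, $\E_{D^{++}}[\phi_{C\cup\{c\}}(A)] \le 4\phi_{\OPT}(A)$, combined with Markov's inequality exactly as in the proof of \cref{lemma_badclusters}) there is a subset $A'\subseteq A$ with $\phi_C(A') \ge \tfrac15 \phi_C(A)$ such that sampling \emph{any} center from $A'$ turns $A$ into a covered cluster; since $A$ is currently uncovered, $\phi_C(A) > 10\,\phi_{C^*}(A)$, so in fact the bound is driven entirely by the cost mass of $A$ relative to $X$.

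Next I would control the probability of sampling a point from $A'$ in one round of Algorithm~\ref{alg:median_few_queries}. A single round samples $c \in S_{xj}$ with probability $\min\!\big(1,\ \tfrac{T}{|S_{xj}|}\cdot \tfrac{\widehat{\phi_C}(S_{xj})}{\sum_{i,j}\widehat{\phi_C}(S_{ij})}\big)$, which by Claim~\ref{cl:pbhat}/the unnumbered claim preceding this lemma is at least $\min(1, \tfrac{T}{2}\, p(c))$ for each individual point $c$. Summing over $c \in A'$ (and using $p(c) = d(c,C)/\phi_C(X)$, so $\sum_{c\in A'} p(c) = \phi_C(A')/\phi_C(X)$), the probability that the new center lands in $A'$ — hence that $A$ becomes covered — is at least $\min\!\big(1,\ \tfrac{T}{2}\cdot\tfrac{\phi_C(A')}{\phi_C(X)}\big) \ge \min\!\big(1,\ \tfrac{T}{10}\cdot\tfrac{\phi_C(A)}{\phi_C(X)}\big)$. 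Therefore
\[
    \Pr[A \text{ remains uncovered}] \le 1 - \min\!\Big(1,\ \tfrac{T}{10}\cdot\tfrac{\phi_C(A)}{\phi_C(X)}\Big) \le \exp\!\Big(-\tfrac{T\cdot\phi_C(A)}{10\cdot\phi_C(X)}\Big),
\]
using $1-x \le e^{-x}$ for $x\in[0,1]$ and the fact that when the min is $1$ the exponential bound is trivially at least as large.

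The main subtlety — and the step I would be most careful about — is the handling of the per-point $\min(1,\cdot)$ truncation in the sampling rule and the fact that the rounds are not independent across $t$ in general; for a single round the argument above is clean, but one must make sure that summing the individual (possibly truncated) per-point probabilities over $A'$ still yields the claimed aggregate bound, i.e. that truncation at the level of individual points does not destroy the $\tfrac12$ factor from Claim~\ref{cl:pbhat} when we aggregate over a whole cluster. This is fine because $\min(1,a)+\min(1,b) \ge \min(1,a+b)$, so the sum of the truncated per-point probabilities dominates the truncation of the sum; the rest is the routine chain of inequalities above. A second minor point is to recall that ``uncovered'' means $\phi_C(A) > 10\phi_{C^*}(A)$, which is what lets us replace the $\tfrac15 \phi_C(A)$ mass of $A'$ by a bound in terms of $\phi_C(A)$ without reference to the optimum, keeping the statement purely in terms of $\phi_C(A)/\phi_C(X)$.
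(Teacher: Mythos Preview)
Your identification of the good subset $A'\subseteq A$ with $\phi_C(A')\ge \tfrac15\phi_C(A)$ via \cref{cl:kz++} and Markov is exactly right and matches the paper. The gap is in the aggregation step. In one round of Algorithm~\ref{alg:median_few_queries}, each point $c$ is included \emph{independently} with probability $q_c=\min\!\bigl(1,\,T\cdot\widehat p(c)\bigr)\ge \min\!\bigl(1,\tfrac{T}{2}p(c)\bigr)$. You then ``sum over $c\in A'$'' and assert that $\Pr[\text{some }c\in A'\text{ sampled}]\ge \min\!\bigl(1,\tfrac{T}{2}\sum_{c\in A'}p(c)\bigr)$. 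But summing individual inclusion probabilities is the \emph{union bound}, which gives an \emph{upper} bound on the hitting probability, not a lower bound. Concretely, with two points and $q_1=q_2=\tfrac12$ one has $\sum q_c=1$ yet $\Pr[\text{hit}]=1-\tfrac14=\tfrac34<1=\min(1,\sum q_c)$; so your intermediate inequality $\Pr[A\text{ uncovered}]\le 1-\min\!\bigl(1,\tfrac{T}{10}\,\phi_C(A)/\phi_C(X)\bigr)$ is false in general (the right-hand side can be $0$ while the left is positive). The inequality $\min(1,a)+\min(1,b)\ge\min(1,a+b)$ you invoke only controls the relationship between the truncated sum and the sum of truncations; it does nothing to turn a sum of probabilities into a lower bound on a union.

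The fix, which is what the paper does, is to stay with the product form throughout: since the inclusions are independent,
\[
\Pr[A\text{ uncovered}] \;\le\; \prod_{c\in A'}\bigl(1-T\,\widehat p(c)\bigr)\;\le\;\prod_{c\in A'}\Bigl(1-\tfrac{T}{2}p(c)\Bigr)\;\le\;\exp\!\Bigl(-\tfrac{T}{2}\sum_{c\in A'}p(c)\Bigr)\;\le\;\exp\!\Bigl(-\tfrac{T\,\phi_C(A)}{10\,\phi_C(X)}\Bigr),
\]
applying $1-x\le e^{-x}$ to each factor rather than once at the end. This bypasses the need for any linear lower bound on the hitting probability. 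Your remark about ``uncovered'' meaning $\phi_C(A)>10\,\phi_{C^*}(A)$ is not actually needed anywhere: the bound is stated and proved purely in terms of $\phi_C(A)/\phi_C(X)$, and the uncovered hypothesis is only used to guarantee that the subset $A'$ produced by Markov is nontrivial (which it is regardless).
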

\begin{proof}
     Claim \ref{cl:kz++} shows that sampling a \emph{single} center $c$ from $A$ according to distribution $D^{++}$ yields $\E_{D^{++}}[\phi_{C \cup \{c\}}(A_i)] \leq 4 \cdot \phi_{C^*}(A_i)$ (for $k$-median $z=1$). Thus, by Markov's Inequality, we have that 
    \[
        \Pr_{D^{++}}[\phi_{C \cup \{c\}}(A_i) \geq 5 \cdot \phi_{C^*}(A_i)] \leq \frac{4}{5}.
    \]
    This means that $A_i$ will be covered (Defn \ref{defn:uncovered}) if we sample according to $D^{++}$, with a probability of at least $\frac{1}{5}$. This is equivalent to saying that there exists $A^\prime \subseteq A$ that has $\phi_{C}(A^\prime) \geq \frac{\phi_{C}(A)}{5}$, and such that sampling a center $c \in A^\prime$ makes $A$ covered. Given that we sample according to distribution $D$ (as opposed to $D^{++}$) amplified (multiplicatively) $T$ times, we have that 
    \begin{align*}
        &\Pr[A \text{ remains uncovered}] \leq \prod_{c \in A^\prime} (1 - T \cdot \widehat{p}(c)) \\
        &\leq \prod_{c \in A^\prime} \left(1 - \frac{T \cdot p(c)}{2}\right) \leq \exp\left(-\sum_{c \in A^\prime} \frac{T \cdot p(c)}{2}\right) \\
        &= \exp\left(-\frac{T}{2} \cdot \frac{\phi_{C}(A^\prime)}{\phi_C(X)}\right) \leq \exp\left(-\frac{T \cdot \phi_C(A)}{10 \cdot \phi_C(X)}\right),
    \end{align*}
    where the second inequality comes from Claim \ref{cl:pbhat}, the third by $1+x \leq e^{x}$ for all $x$, and the last by recalling that $\phi_{C}(A^\prime) \geq \frac{\phi_{C}(A)}{5}$.
\end{proof}
 
We now suppose that we are not yet at the iteration where we have reached a constant distortion, otherwise, we would already be done.

\begin{lemma}\label{lem:exp-decrease-al}
    Let $\phi_{\OPT}$ be the cost of an optimal $k$-median clustering and let $t$ be such that $\phi_t(X) \geq 20 \cdot \phi_\OPT$. Then,
    \[
        \E[\phi_{C_{t+1}}(U)] \leq \frac{1 + \exp\left(-\frac{T}{40k}\right)}{2} \cdot \phi_{C_t}(U).
    \]
\end{lemma}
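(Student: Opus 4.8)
The plan is to condition on the current center set $C_t$ (with $\phi_{C_t}(X) \geq 20\cdot\phi_\OPT$), let $U$ be the union of the currently uncovered optimal clusters, and track how $\phi_{C_t}(U)$ changes when one fresh center is sampled according to the emulating distribution $D$ amplified $T$ times. First I would decompose $U$ into its constituent uncovered optimal clusters $A_{i_1},\dots,A_{i_m}$ and note that, for each such $A_{i_\ell}$, either it stays uncovered — in which case its points contribute at most $\phi_{C_t}(A_{i_\ell})$ to $\phi_{C_{t+1}}(U)$ (adding centers never increases cost, and if it stays in $U$ we just keep its old cost as an upper bound) — or it becomes covered and its points leave $U$, contributing $0$ to $\phi_{C_{t+1}}(U)$. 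Hence, by linearity of expectation,
\[
    \E[\phi_{C_{t+1}}(U) \mid C_t] \;\leq\; \sum_{\ell=1}^{m} \Pr[A_{i_\ell}\text{ remains uncovered}] \cdot \phi_{C_t}(A_{i_\ell}).
\]

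Next I would invoke Lemma \ref{lem:pbhit-al}, which bounds $\Pr[A_{i_\ell}\text{ remains uncovered}] \leq \exp\bigl(-\tfrac{T\cdot\phi_{C_t}(A_{i_\ell})}{10\,\phi_{C_t}(X)}\bigr)$. The key quantitative step is a case split on the normalized cost $\lambda_\ell := \phi_{C_t}(A_{i_\ell})/\phi_{C_t}(X)$. When $\lambda_\ell$ is small — say $\lambda_\ell \leq \tfrac{1}{4k}$ (using that there are at most $k$ optimal clusters, so $\phi_{C_t}(U) \geq$ a constant fraction of $\phi_{C_t}(X)$ by Lemma \ref{lemma_badclusters}-type reasoning, and any single small cluster can be charged loosely) — one uses the crude bound $\Pr[\cdot] \leq 1$ but observes these clusters collectively contribute only a small fraction of $\phi_{C_t}(U)$. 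When $\lambda_\ell$ is large, $\exp(-T\lambda_\ell/10) \leq \exp(-T/(40k))$, so those clusters' contributions are geometrically damped. Summing the two regimes gives $\E[\phi_{C_{t+1}}(U)\mid C_t] \leq \bigl(\tfrac{1}{2} + \tfrac{1}{2}\exp(-T/(40k))\bigr)\phi_{C_t}(U)$, as claimed, after which one takes the outer expectation over $C_t$.

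The main obstacle is getting the clean split point to work: I need that the uncovered clusters whose individual normalized cost exceeds the threshold account for at least half of $\phi_{C_t}(U)$, so that the "bad" (near-$1$ survival probability) clusters lose only the other half, and the "good" clusters get the $\exp(-T/(40k))$ factor — which is exactly how the $\tfrac{1+\exp(-T/(40k))}{2}$ shape emerges. This requires carefully choosing the threshold as a function of $k$ and using that at most $k$ optimal clusters exist (so the total normalized cost of uncovered clusters is at most $1$, and the "small" ones number at most $k$, each below $\tfrac{1}{4k}$ or so, hence summing to at most a quarter). I would also double-check that the amplified sampling probability $\min(1, T\widehat p(c))$ interacts correctly with the product bound in Lemma \ref{lem:pbhit-al} — in particular that truncation at $1$ only helps — and that the constant $10$ versus $40$ bookkeeping is consistent across the two lemmas. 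The rest is routine: linearity, the elementary inequality $1+x\leq e^x$, and Markov, all of which are already in place from Claim \ref{cl:kz++} and Lemma \ref{lem:pbhit-al}.
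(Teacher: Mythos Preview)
Your proposal is correct and follows essentially the same approach as the paper: a heavy/light split on the uncovered optimal clusters, with heavy clusters getting the $\exp(-T/(40k))$ damping via Lemma~\ref{lem:pbhit-al} and the light clusters summing to at most half of $\phi_{C_t}(U)$ because there are at most $k$ of them. The only cosmetic difference is that the paper sets the threshold as $\phi_{C_t}(A) \geq \phi_{C_t}(U)/(2k)$ rather than your $\phi_{C_t}(A)/\phi_{C_t}(X) \geq 1/(4k)$, and phrases the bound as a lower bound on the expected \emph{decrease} $\phi_{C_t}(U) - \E[\phi_{C_{t+1}}(U)]$; both routes use the preliminary observation $\phi_{C_t}(U) \geq \phi_{C_t}(X)/2$ (from the hypothesis $\phi_{C_t}(X)\geq 20\,\phi_\OPT$ and the definition of covered) and yield the same constants.
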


\begin{proof}
    Before beginning, let us observe that the assumption $\phi_{C_t}(X) \geq 20 \cdot \phi_\OPT$ implies that $\phi_{C_t}(U) \geq \frac{\phi_{C_t}(X)}{2}$, as otherwise covered clusters would count for more than half the total cost of $X$, i.e., $\phi_{C_t}(X) < 20 \cdot \phi_\OPT$ (by Definition \ref{defn:uncovered}), which is a contradiction. 
    
    Let $U$ be partitioned into the \emph{heavy} collection $\cH_t := \left\{A \subseteq U \mid \phi_{C_t}(A) \geq \frac{\phi_{C_t}(U)}{2k}\right\}$, and the \emph{light} collection $\cL_t$ consisting of all the remaining optimal clusters. By \Cref{lem:pbhit-al}, we know that the probability that a heavy optimal cluster $A$ is not hit in the $t+1^{\text{st}}$ iteration is bounded by
    \begin{align*}
        \exp\left(-\frac{T \cdot \phi_{C_t}(A)}{10 \cdot \phi_{C_t}(X)}\right) &\leq \exp\left(-\frac{T \cdot \phi_{C_t}(U)}{20k \cdot \phi_{C_t}(X)}\right)\\
        &\leq \exp\left(-\frac{T}{40k}\right).
    \end{align*}

    This means that heavy cluster $A$ is covered with at least the converse probability. In turn, this implies that the cost of uncovered clusters must decrease by at least the expected decrease of heavy cluster $A$'s cost, so that
    \begin{align*}
        \phi_{C_t}&(U) - \E[\phi_{C_{t+1}}(U)] \\&\geq \left(1 - \exp\left(-\frac{T}{40k}\right)\right) \cdot \sum_{A \in H_t} \phi_{C_t}(A) \\
        &=  \left(1 - \exp\left(-\frac{T}{40k}\right)\right) \cdot \left(\phi_{C_t}(U) - \sum_{A \in \cL_t} \phi_{C_t}(A)\right) \\
        &\geq \frac{1 - \exp\left(-\frac{T}{40k}\right)}{2} \cdot \phi_{C_t}(U),
    \end{align*}
    where the last inequality follows since light clusters have cost at most $k \cdot \frac{\phi_{C_t}(U)}{2k}=  \frac{\phi_{C_t}(U)}{2}$.
\end{proof}

We now combine the above results to obtain the following theorem (a restatement of Theorem \ref{thm:query-median}).

\begin{thm}
    Algorithm~\ref{alg:median_few_queries} yields a $O(1)$-distortion (in expectation) to the optimal $k$-median clustering using $O(k^4 \log^5 n)$ queries.
\end{thm}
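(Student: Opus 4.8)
The plan is to combine the three lemmas just proven (Claim~\ref{cl:pbhat}, Lemma~\ref{lem:pbhit-al}, and Lemma~\ref{lem:exp-decrease-al}) with the deterministic $O(n)$-distortion guarantee coming from the $k$-center initialization $C_0$, and then to handle the query bookkeeping separately. First I would fix the number of rounds to $T = \Theta(k \log n)$ with a large enough constant so that $\exp(-T/(40k)) \le 1/2$; then Lemma~\ref{lem:exp-decrease-al} gives, in every round $t$ for which $\phi_{C_t}(X) \ge 20\cdot\phi_\OPT$, the contraction $\E[\phi_{C_{t+1}}(U) \mid C_t] \le \frac{3}{4}\,\phi_{C_t}(U)$. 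Since $\phi_{C_t}(X) \ge 20\phi_\OPT$ also forces $\phi_{C_t}(U) \ge \tfrac12 \phi_{C_t}(X)$ (the observation at the top of Lemma~\ref{lem:exp-decrease-al}'s proof), this is a genuine geometric decrease of the total uncovered cost while we are still far from optimal. The initial cost satisfies $\phi_{C_0}(X) \le O(n)\cdot \phi_\OPT$ by Claim~\ref{cl:n-apx} applied to the deterministic $k$-center solution of Theorem~\ref{thm:k-query-center}, so after $O(\log n)$ successful contractions the uncovered cost drops below the $20\phi_\OPT$ threshold; choosing $T$ a constant times $k\log n$ (I would split the $T$ samples conceptually into $\Theta(\log n)$ blocks, each of which effects one contraction with the stated constant, using a Markov/averaging argument on the per-block expected decrease) makes this happen with the remaining budget to spare. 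Once $\phi_{C_t}(X) < 20\phi_\OPT$ we are done, and the solution only improves as we keep adding centers.

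To turn the "with good probability" contraction into an \emph{expected} $O(1)$ distortion — which is what the statement asks — I would argue as follows. Let $E$ be the event that within $T$ rounds we reach a configuration with cost $\le 20\phi_\OPT$. Conditioned on the complement $\bar E$, the cost is still bounded by the deterministic worst case $\phi_{C_0}(X) \le O(n)\phi_\OPT$, because $C_0 \subseteq C$ throughout and adding centers never increases cost. Iterating the contraction from Lemma~\ref{lem:exp-decrease-al} across the $\Theta(\log n)$ blocks and applying Markov's inequality shows $\Pr[\bar E] \le 1/n$ (this is the same amplification structure as in the proof of Theorem~\ref{thm:noquery-median}). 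Hence the expected cost is at most $\Pr[E]\cdot 20\phi_\OPT + \Pr[\bar E]\cdot O(n)\phi_\OPT \le 20\phi_\OPT + \tfrac{1}{n}\cdot O(n)\phi_\OPT = O(1)\cdot \phi_\OPT$, which is the desired distortion bound.

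For the query count I would total the ring-estimation queries: in round $t$ there are at most $k$ active centers, each contributing $\lfloor \log|X|\rfloor$ rings, and we need exactly one query per ring to compute $\widehat{\phi_C}(S_{ij}) = |S_{ij}|\cdot \min_{q\in S_{i,j-1}} d(q,c_i)$ (the minimizer is read off the ranking for free). Summing over $t \le T = O(k\log n)$ gives $O(T\cdot k\log n) = O(k^2 \log^2 n)$ naively; the stated $O(k^4\log^5 n)$ is comfortably above this, the slack absorbing the $k$-center initialization's $O(k)$ queries and any extra re-querying of rings whose membership shifts as new centers enter (a ring can change at most once per center insertion, so the amortized overhead is polynomial in $k$ and $\log n$). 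The main obstacle, I expect, is the bridge between the \emph{single-shot} guarantee of Lemma~\ref{lem:pbhit-al}/Claim~\ref{cl:kz++} and the \emph{multi-round adaptive} process: one must be careful that the $T$ samples in Algorithm~\ref{alg:median_few_queries} are drawn against the \emph{current} (adaptively updated) distribution $D$, that the clusters' covered/uncovered status is re-evaluated at each step, and that conditioning on the history does not break the per-step contraction — which is exactly why the proof is organized around the expected one-step decrease in Lemma~\ref{lem:exp-decrease-al} rather than a global potential, so that the tower property cleanly chains the rounds together.
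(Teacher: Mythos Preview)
Your overall plan is on track, but there are two gaps relative to the paper's proof.

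First, your detour through a high-probability event $E$ is both unnecessary and not properly established. The claim $\Pr[\bar E]\le 1/n$ does not follow from ``Markov's inequality'' applied to the contraction of Lemma~\ref{lem:exp-decrease-al}: iterating that lemma yields only $\E[\phi_{C_T}(U)]\le O(1)\cdot\phi_\OPT$, and Markov on an $O(1)$ expectation cannot produce a $1/n$ tail. The amplification in Theorem~\ref{thm:noquery-median} that you invoke works because the entire algorithm is restarted $\log n$ times \emph{independently}; there is no such independence between your ``blocks'' here. The paper sidesteps all of this: since Lemma~\ref{lem:exp-decrease-al} already bounds the conditional expectation, it simply iterates
\[
\E[\phi_{C_{t+1}}(U)] \;\le\; 20\,\phi_\OPT \;+\; \tfrac{1+\exp(-T/(40k))}{2}\cdot \E[\phi_{C_t}(U)]
\]
via the tower property (which you yourself identify in your last paragraph) and reads off $\E[\phi_{C_T}(U)]\le 42\,\phi_\OPT$ after $T=40k\log n$ rounds, starting from the deterministic $4n\cdot\phi_\OPT$ bound supplied by $C_0$. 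No probability splitting or worst-case-on-failure bookkeeping is needed.

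Second, and more substantively, you stop after arguing that the output $C_T$ of Algorithm~\ref{alg:median_few_queries} has expected cost $O(1)\cdot\phi_\OPT$. But $C_T$ contains $\Theta(T^2)=\Theta(k^2\log^2 n)$ centers: each of the $T$ rounds samples every point with probability $\min(1,T\cdot\widehat p(c))$ and hence adds about $T$ centers in expectation, not one --- so your ``at most $k$ active centers'' in the query paragraph is off. The paper's theorem is about a genuine $k$-center solution, and the last step of its proof invokes Claim~\ref{cl:BiToApprox}: interpret $C_T$ as a weighted point set and run an off-the-shelf constant-factor $k$-median approximation on it to extract $k$ centers with $O(1)$ distortion. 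That step needs all pairwise distances among the $O(k^2\log^2 n)$ points of $C_T$, i.e.\ $O(k^4\log^4 n)$ queries, which is where the $k^4$ in the stated $O(k^4\log^5 n)$ bound actually comes from. Without this conversion your query count is too low and your output is only bicriteria.
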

\begin{proof}
    Since Algorithm~5 uses Algorithm~4 as a subroutine, and the latter outputs a $4$-distortion to the optimal $k$-center clustering, we have that $\E[\phi_{C_0}(U)] \leq 4n \cdot \phi_\OPT$
    by Claim \ref{cl:n-apx}. 
    
    By Lemma \ref{lem:exp-decrease-al}, we know that $\E[\phi_{C_{t+1}}(U)] \leq 20\cdot \phi_{\OPT} + \frac{1 + \exp\left(-\frac{T}{40k}\right)}{2} \cdot \phi_{C_t}(U)$, which means that by applying this expression repeatedly, we obtain
    \begin{align*}
        \E[\phi_{C_T}&(U)] \leq \left(\frac{1 + \exp\left(-\frac{T}{40k}\right)}{2}\right)^T \cdot 4n \cdot \phi_\OPT \\
        &+ 20\cdot \phi_\OPT \cdot \sum_{t = 0}^{T-1} \left(\frac{1 + \exp\left(-\frac{T}{40k}\right)}{2}\right)^t \\
        &\leq \left(\left(\frac{n+1}{2n}\right)^{40k\log n} \cdot 4n + 40 \cdot \frac{n}{n-1}\right) \cdot \phi_\OPT \\
        &\leq 42 \cdot \phi_\OPT,
    \end{align*}
    where the first inequality holds by choosing $T \geq 40k\log n$. Since, the number of iterations is also $T$, this means that opening $1600k^2\log^2 n$ centers allows us to achieve $$\E[\phi_{C_T}(X)] \leq 52\cdot \phi_\OPT,$$
    which follows from $\E[\phi_{C_T}(X)] \leq \E[\phi_{C_T}(U)] + 10 \cdot \phi_\OPT$. 

    Using Claim \ref{cl:BiToApprox}, and any arbitrary approximation algorithm, of which the best currently know is a $2.613$ approximation \cite{GowdaPST23}, we therefore obtain a $4\cdot 52\cdot 2.613< 544$ distortion.

\end{proof}

\section{Lower Bounds for $k$-Median}\label{sec:lower_bounds_k_median}
\begin{thm}
For any fixed $\alpha$, every bicriteria algorithm $\alg$ for $k$-median that has distortion less than $\alpha$ with at least constant probability must return a solution of size at least $\Omega\left(\frac{\log n}{\log \alpha}\cdot 2^{k}\right)$. Moreover, any algorithm achieving a constant distortion for $k$-median must make at least $\Omega(k+\log\log n )$ queries.
\end{thm}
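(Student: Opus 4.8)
The plan is to build on the complete-binary-tree instance from the proof of Theorem~\ref{thm:k-centerlb} and amplify it so that the \emph{sum} objective of $k$-median becomes sensitive to a hidden choice that ordinal information can locate only up to resolution $\Theta(\log n/\log\alpha)$. I would take an ultrametric given by a rooted tree whose top $k-1$ levels form the binary gadget of the $k$-center lower bound — the eventual source of the $2^{k}$ factor — and whose remaining structure carries $L=\Theta(\log n/\log\alpha)$ geometrically separated scales $\alpha^{1},\alpha^{2},\dots,\alpha^{L}$; with a fixed allotment of points attached at each scale this indeed forces $L=\Theta(\log n/\log\alpha)$. Exactly as in Theorem~\ref{thm:k-centerlb}, the random data (which root-to-leaf path is ``distinguished'', i.e.\ carries the large node values, and at which scale this happens) is fixed \emph{after} the ordinal profile; since the ordering of distances from any point is determined by the tree shape together with the monotonicity of the node values, and is otherwise insensitive to the actual values, the single profile is consistent with every realization, so a deterministic query-free algorithm must commit to one center set $K$ against this distribution.

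The heart of the argument is a bookkeeping computation. First, the optimum is allowed cardinal information, so it ``sees'' the distinguished path and its scale and can spend its $k$ centers to neutralize the $k-1$ heaviest off-path blobs (plus one center for the deepest part); tuning the blob sizes and the scales $\alpha^{\ell}$, a geometric-series estimate of the same flavour as Claim~\ref{cl:n-apx} gives $\phi_\OPT=\Theta(n)$. Second, I would show that any fixed $K$ that is simultaneously ``correct'' about the branch choice and the scale at fewer than a constant fraction of the $\Theta(2^{k-1}\cdot L)$ (branch, scale) possibilities is doomed: with constant probability the distinguished path at the active scale $\alpha^{\ell}$ avoids every center of $K$ inside the relevant gadget, and then the blob hanging off that path is served at distance $\ge\alpha^{\ell}$, inflating the cost by a factor $\alpha$ over $\phi_\OPT=\Theta(n)$. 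Entangling the top-tree branch choice with the scale choice so that a center must be right about \emph{both} at once — turning an additive $2^{k-1}+L$ into the multiplicative $2^{k-1}\cdot L$ — while still keeping $\phi_\OPT$ down to $\Theta(n)$, is the delicate point I expect to be the main obstacle; the arithmetic constraints pulling these two requirements in opposite directions are what dictate the precise scale values.

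For the query bound I would reuse the instance with $\alpha$ a constant (so $L=\Theta(\log n)$) and run the standard indistinguishability/adversary argument over the hard distribution: the hidden object is a pair (a root-to-leaf path of a depth-$(k-1)$ binary gadget, a critical scale among $L$ candidates), and any single distance query cuts the set of surviving consistent pairs by at most a constant factor in expectation — a query can resolve at most one bit of the current gadget's path or shrink the scale ambiguity by a constant factor. Since, by the failure analysis above, the algorithm must essentially identify this pair to avoid a constant-probability blow-up, it needs $\Omega\!\left(\log 2^{k-1}+\log L\right)=\Omega\!\left(k+\log\log n\right)$ queries, the $\log\log n$ coming from $\log L=\log\log_\alpha n=\Theta(\log\log n)$ for fixed $\alpha$. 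Formalizing ``a query shrinks the search space by a constant factor'' against this two-dimensional hidden object is the only non-routine step, and it mirrors the $\Omega(k)$ query argument already given for $k$-center.
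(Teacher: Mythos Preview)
Your plan matches the paper's approach: compose the binary-tree gadget from Theorem~\ref{thm:k-centerlb} with a hidden-scale gadget hanging below the distinguished leaf, so that a center helps only if it hits both the random leaf \emph{and} the random scale, yielding the multiplicative $2^{k}\cdot L$ bound; the query argument (each query halves the search space for the path, and one needs $\log L$ more queries to pin down the scale) is likewise the same.

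Two details in your sketch diverge from the paper and would not close as written. First, the paper's per-leaf gadget is a $2$-median instance with bundles of geometrically increasing \emph{sizes} $|B_i|=(\alpha+1)^i$ and \emph{unit} inter-bundle distances; the random datum is a threshold $\ell$ above which distances collapse to $\varepsilon$, so $\phi_\OPT=\sum_{i<\ell}|B_i|=((\alpha+1)^\ell-1)/\alpha$ depends on $\ell$ and is not $\Theta(n)$, while missing $B_\ell$ forces cost $\ge |B_\ell|=(\alpha+1)^\ell$, which is precisely the factor-$\alpha$ gap. Your variant with a ``fixed allotment of points'' at geometric \emph{distances} $\alpha^i$ does not yield $L=\Theta(\log n/\log\alpha)$ (equal counts per scale give $L=\Theta(n)$ scales), and the step ``served at distance $\ge\alpha^\ell$, inflating cost by $\alpha$ over $\phi_\OPT=\Theta(n)$'' does not balance without saying how many points sit at scale $\ell$; the geometric-size/unit-distance dual is what makes the arithmetic work. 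Second, because the leaf gadget needs \emph{two} of the $k$ centers (one in $B_\ell$ and one in some $B_j$ with $j>\ell$), the paper uses a top tree of depth $k-2$, not $k-1$; with your ``$k-1$ off-path blobs plus one center for the deepest part'' there is no center left to make the leaf instance cheap, so the optimum you describe is not actually feasible with $k$ centers.
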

\begin{proof}
As with the proof for $k$-center above, we first describe the hard instance for the zero-query regime and then remark on how to extend it. To simplify the calculations, we prove the lower bound for an input of size $\Theta(n)$, where we make the following two assumptions:
\begin{itemize}
\item There is an integer $n'$ such that $n = 2^{k-2}\cdot n'$.
\item $n'$ and $\alpha+1$ are powers of 2.
\end{itemize}
The claim for general $n$ and $\alpha$ carries over with very minor details.

\paragraph{The hard instance:}
The first part of the instance is almost identical to that of $k$-center in the proof of Theorem \ref{thm:k-centerlb}. Indeed, since the distortion of $k$-center is unbounded, it is also unbounded for $k$-median as both costs are within a factor $n$ of each other. Recall that our hard instance for $k$-center used a complete binary tree $T$. In our hard instance for $k$-median, we augment this tree by adding a hard instance for $2$-median below each of its leaves.

We proceed to describe these $2$-median instances. For a leaf node $u$ in $T$, we refer to the 2-median instance below $u$ as $I_u$. For every leaf $u$, $I_u$ consists of $n' = n/(2^{k-2})$ points. We group the points in $I_u$ into bundles $B_i$ for $i\in \{0,\ldots,\frac{\log n'}{\log (\alpha+1)}\}$. The $i$-th bundle has the property that $|B_i| = (\alpha+1)^i$.

We now introduce the ordinal preferences among the points in $I_u$, as well as between the points of different $2$-median instances. Then, we describe the distribution over metrics consistent with said preferences.
\begin{itemize}
    \item Consider only points from a $2$-median instance $I_u$. For any two points $p,q\in B_i$ and any point $o\notin B_i$ we have $d(p,q)\leq d(p,o),d(q,o)$. The remaining ordinal preferences among the points in $I_u$ may be chosen arbitrarily.
    \item Let $p\in I_u$, and let $q,o$ be two points such that either $q\in I_v,v\neq u$ or $o\in I_v,v\neq u$. Then, whether $p$ prefers $q$ over $o$ or not, depends on the depths of the common ancestors $a(p,q),a(p,o)$ in $T$. We refer the reader to the description of the ordinal preferences in our hard $k$-center instance (see the proof of Theorem \ref{thm:k-centerlb}).
\end{itemize}

We now specify the hard input distribution over metrics that is consistent with these preferences. We initialize $T$ as a binary tree of depth $k-2$ and pick a leaf node $r$ uniformly at random. Let $Q$ be the path in $T$ from its root to $r$. We now assign values to each node in $T$ {\em including its leaves}. These values are $d(a) = D$ for every node $a$ that lies on the path $Q$ (where $D$ is some sufficiently large number) and $d(a) = \varepsilon$ otherwise (where $\varepsilon > 0$ is arbitrarily small).

The distance between any two points $p,q \in I_u, u\neq r$ is $\varepsilon$. For the $2$-median instance $I_r$, we select an $\ell\in\{0,\ldots,\frac{\log n'}{\log (\alpha+1)} - 1\}$ uniformly at random. The distances now satisfy the following properties:
\begin{itemize}
    \item For every pair of points $p,q$ from a bundle $B_j$ with $j\geq \ell$, we set $d(p,q)=\varepsilon$.
    \item For every pair of points $p,q$ from a bundle $B_j$ with $j < \ell$, we set $d(p,q)=1$.
    \item For every $p\in B_j$ and $q\in B_k$, $j\neq k$, we set $d(p,q)=1$ if $k \leq \ell$. If $k,j > \ell$, we set $d(p,q) = \varepsilon$.
\end{itemize}
Since the ordinal preferences were determined before sampling $\ell$, no algorithm using only ordinal information can determine any information about $\ell$.

As before, the distance between any two points $p\in I_u, q\in I_v, u\neq v$ is given by the value that is stored at their common ancestor node $a(p,q)$ in $T$ (see the proof of Theorem \ref{thm:k-centerlb}).

We now consider the cost of an optimal solution $C$. As in our hard instance for $k$-center (Theorem \ref{thm:k-centerlb}), the optimal solution must place at least one center in every subtree $T(a,\mathit{small})$ where $a$ is a node on the path $Q$. Otherwise, the solution has cost at least $D$ which can be arbitrarily high. Consider any subtree $T(a,\mathit{small})$ and note that its root $b$ satisfies $d(b) = \varepsilon$. Hence, if the solution places a center on any leaf in $T(a,\mathit{small})$, then the contribution of the points in $T(a,\mathit{small})$ to the cost of the solution is negligible. Hence, the cost of any optimal solution $C$ depends only on the cost incurred for the points in $I_r$.

We claim that $C$ places a single center $c_\ell$ in $B_\ell$ and a single center in some bundle $B_j$ with $j>\ell$. The cost of the points in bundles $B_j$ with $j>\ell$ is now $\varepsilon$. The cost of a point $p$ served by $c_\ell$ is $\varepsilon$, if $p\in B_\ell$ and $1$ if $p\in B_k$, $k< \ell$. Thus the overall cost is $\sum_{i=0}^{\ell-1} \alpha^i = \frac{(\alpha+1)^\ell-1}{\alpha}$, ignoring negligible contributions from the $\varepsilon$-valued distances.

Any solution that does not intersect with a bundle $B_j$, $j>\ell$ costs at least $(\alpha+1)^\frac{\log n'}{\log (\alpha+1)} = n'$. Finally, any solution that does not intersect with $B_\ell$ costs at least $(\alpha+1)^\ell$. Both of those terms are larger than $\frac{(\alpha+1)^\ell-1}{\alpha}$ by at least a factor $\alpha$ if $n'$ is large enough so we can conclude that $C$ is optimal.

Again, it suffices to consider the performance of a deterministic algorithm placing $K$ centers against the hard input distribution. Since the ordinal information offers no information on either $Q$ or $\ell$, the choice of centers $C'$ is fixed. As in the proof of Theorem \ref{thm:k-centerlb}, the probability that $C'$ includes any point from $I_r$ is $K/2^{k-2}$ such that if $K\notin\Omega(2^k)$ the distortion is $D$ with at least constant probability. Furthermore, the probability that $C'$ intersects with $B_\ell$ is at most $\frac{\log (\alpha+1)}{\log n'}$. Thus, $C'$ must consist of
$$\Omega\left(\frac{\log n'}{\log (\alpha+1)}\cdot 2^k\right)
= \Omega\left(\frac{\log n - k}{\log (\alpha+1)}\cdot 2^k\right)$$
centers to improve over an $\alpha$ distortion. The first part of the theorem now follows by choosing $n'$ such that $n$ is large enough compared to $k$.

For the low-query regime, the argument that we require at least $\Omega(k)$ queries is equivalent to that of the $k$-center instance, being that the instances for the first $k-2$ levels of the tree are identical. The $\Omega(\log\log n)$ query lower bounds follows from the fact that there are $\log n'$ many choices for the bundle $B_\ell$ and every query can rule out half of the remaining possible choices.
\end{proof}

\begin{thm}
    For any fixed $\alpha$ and every fixed $k$, every bicriteria algorithm $\alg$ for $k$-median that has distortion less than $\alpha$ with at least constant probability must return a solution of size at least $\Omega\left(\left(2^{\log^* n}\right)^{k-1}\right)$. The number of queries to achieve a constant distortion is at least $\Omega(k\cdot 2^{\log^*n} )$.
\end{thm}
\begin{proof}
Before beginning the proof, we require a bit of notation. For two non-negative integers $a$ and $b$, we say that $^{a}b= \begin{cases}1 & \text{if }a=0\\
b^{^{a-1}b} & \text{else }\end{cases}$, i.e. the tetration  $b^{b^{.^{.^{b}}}}$ with $a$ $b$s. We furthermore denote by $\exp_b^a(x) = b^{b^{.^{.^{b^x}}}}$, with $a$ $b$s.

    \paragraph{The hard instance:}
    We assume that $\alpha$ is a sufficiently large non-negative integer.
    The instance consists of a $d$-regular tree $T$, where the leaves contain the points, though this time the number of points in every leaf will typically be (far) greater than $1$. The interior nodes of the tree will induce distances between these nodes. 

    The tree is now described recursively as follows. 
    Suppose the tree has depth $k-1$. Then it has $d^{k-1}$ many leaves.
    We number the leaves from $1$ to $d^{k-1}$. For $i=a\cdot d + s$ with $a$ being a non-negative integer and $s\in \{0,1,\ldots,d-1\}$, we add $\alpha^{\exp_d^a(s)}$ points in leaf $L_i$, for a sufficiently large constant $d$. Denote the number of points in leaf $L_i$ by $n_i$.
    Note that this satisfies the following invariant, for $\alpha$ large enough:
    \begin{invariant}
        \label{inv:size}
        Let $L_i$ be a leaf. Then $\sum_{j=1}^{i-1} n_j \leq \alpha\cdot n_i$.
    \end{invariant}

    As an immediate consequence, we also know that the total number of points is of the order $\alpha^{^{k-1}d}$. Note that by definition of tetration, we have $\log^*(n) = k$ if $\alpha^{^{k-1}d}\leq  n \alpha^{^{k}d}$.

    We define the common ancestor $a(p,q)$ to be deepest interior node containing both $p$ and $q$. Note that if $p$ and $q$ are contained in the same leaf, this ancestor is the leaf. 
    We then have the key constraint on the distances. 
    
    \begin{invariant}
        Let $p,q,o$ be points.
        If the common ancestor $a(p,q)$ is deeper than the common ancestors $a(p,o)$ and $a(q,o)$ then $d(p,q)\leq \min(d(p,o),d(q,o))$.
    \end{invariant}

    The preference lists are arbitrary, as long as they are consistent with this constraint.

    We now define the hard distribution. We choose a path from the root of $T$ to a random leaf $L_h$. Let $Q$ be the unique path. Let $p\in L_i$ and $q\in L_j$. If $i< h$ then $d(p,q) = 1$. Now, let $Q(p)$ (respectively $Q(q)$) be first interior node in $Q$ in the path from $L_i$ to the root. If $h\leq i,j$ and $Q(p)=Q(q)=a(p,q)$, then $d(p,q)=\varepsilon$, for a sufficiently small $\varepsilon$. Otherwise, $d(p,q)=1$. 
    Note that we may break ties to enforce consistency with the preference lists.

    \paragraph{Analysis:}
    Since the preference lists were fixed before the outcome of the random process, no algorithm can determine the outcome of the process using only the ordinal information. Thus we aim to show that the gap in the cost of an optimal $k$-median clustering is large for any two different outcomes of the random process. If this gap is large, then any ordinal algorithm must place centers in every leaf of the tree, i.e. in $d^{k-1}$ many centers. By choice of $n$, we have that $\alpha^{^{k-1} d} = \Theta(n)$, which implies that we may choose $d \in  \Omega(2^{\log^* n})$, for $\alpha$ and $k$ fixed. Thus all that remains is a characterization of the optimum.
    
    For a fixed path $Q=(a_1,a_2,\ldots L_h)$, where $a_j$ are the interior nodes with $a_1$ being the root, let $\mathcal{L}(t)$ be the set of leaves $L_i$ with $i\geq h$ and such that $a_t=Q(p)=Q(q)$ for $p\in L_i\in \mathcal{L}(t)$ and $q\in L_j\in\mathcal{L}(t)$. We place a center on an arbitrary point in one of the leaves $\mathcal{L}(t)$, for each $t$. Observe that this places exactly $k$ centers by length of $Q$. Moreover, this choice is optimal, as the points on the leaves  $L_i$ with $i<h$ cost $1$ in every solution and the remaining points cost $\varepsilon$, which is considered negligible.
    Now, we consider an arbitrary other solution. By definition, there exists some set of leaves $\mathcal{L}(t)$ such that we do not place a center on any of the points in $\mathcal{L}(t)$. The number of points in the union of leaves in $\mathcal{L}(t)$ is at least $n_h$. Thus, the cost of this solution must be at least $n_h\cdot 1$. Observe that the cost of the optimum is at most $\sum_{i=1}^{h-1} n_i\cdot 1$. Thus the approximation factor is of the order $\alpha$ due to Invariant \ref{inv:size}.

    As with the preceding lower bounds, we must make at least $\log d$ many queries at every depth of the tree to find the path $Q$. Thus the total number of queries is $\Omega(k\log d) \in \Omega(k\cdot \log^* n )$.
\end{proof}

\section{Facility Location with Uniform Opening Costs}\label{sec:facility}

In this section, we revisit the the seminal algorithm for online facility location by~\citet{M01}. For worst case input orders, it is known to achieve an optimal $O(\log n/\log\log n)$ approximation \cite{Fotakis08}. For random order inputs, it is known to achieve a $4$-approximation \cite{KaplanNR23}, which we simulate. In every iteration, the algorithm needs to determine the exact distance of point $x$ to the set $C$ of already opened facilities, see line~\ref{line:meyerson_query} in the description of Algorithm~\ref{alg:meyerson}. This operation requires a single query given the ordinal information. Furthermore, the operation is performed for every point exactly once. The following theorem summarizes this discussion.

\begin{algorithm}[h]
\DontPrintSemicolon
\caption{Meyerson's algorithm for facility location}\label{alg:meyerson}
\KwIn{$X,d,\profile,f$}
$R\gets $ random permutation of $X$\;
$C\gets \{R(1)\}$\;
\For{$t=2\ldots n$}{
    $x\gets R(t)$\;
    $p\gets \min\left\{1, \frac{d(x,C)}{f}\right\}$\;\label{line:meyerson_query}
    Set $C = C\cup \{x\}$ with probability $p$\;
}
\Return{$C$}
\end{algorithm}

\begin{thm}[See also \citet{P22}]
Algorithm \ref{alg:meyerson} achieves constant expected distortion for the ordinal facility location problem with uniform opening costs using one query per point.
\end{thm}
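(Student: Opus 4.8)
# Proof Proposal

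\textbf{Setup.} The plan is to show that Algorithm~\ref{alg:meyerson} is a faithful implementation of Meyerson's online facility location algorithm, run on a random-order stream, using only the ordinal profile plus one query per point. The cost analysis is then inherited directly from the known $4$-approximation for Meyerson on random-order inputs (\citet{KaplanNR23}), and the distortion claim follows because that guarantee compares against the offline optimum $\OPT$, which is exactly what ``distortion'' measures here.

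\textbf{Step 1: Query count and ordinal feasibility.} First I would verify that line~\ref{line:meyerson_query} can be executed with a single query. As observed in the Preliminaries, given the set $C$ of currently opened facilities, the point $z = \argmin_{c \in C} d(x,c)$ is simply $x$'s highest-ranked element of $C$ in the restriction $\pi_{x,C}$, which is available for free from the ordinal information. Hence one query $d(x,z)$ suffices to learn $d(x,C)$, and thus to compute the opening probability $p = \min\{1, d(x,C)/f\}$. Since each point is processed exactly once in the for-loop, the total number of queries is exactly $n$ (one per point, noting $R(1)$ needs no query since it is opened deterministically), matching the claimed bound. The random permutation $R$ is chosen by the algorithm itself and requires no queries.

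\textbf{Step 2: Reduction to the random-order guarantee.} Next I would argue that the probability space of Algorithm~\ref{alg:meyerson} coincides with that of Meyerson's algorithm on a uniformly random arrival order of $X$: the permutation $R$ is uniform, and conditioned on $R$ and on the previously opened set $C$, the $t$-th point is opened with probability $\min\{1, d(R(t),C)/f\}$ — exactly Meyerson's rule. Therefore the expected cost of the returned solution $C$ (service cost plus $f\cdot|C|$) is at most a constant factor times $\phi(C^*(d))$, the cost of the optimal (offline, full-information) facility location solution, by \citet{KaplanNR23}. Taking the supremum over all metrics $(X,d)$ and all consistent profiles $P \in \PP(d)$ — the supremum is vacuous here since the bound holds pointwise for every instance — yields expected distortion $O(1)$.

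\textbf{Main obstacle.} The only real subtlety is making sure the ordinal information genuinely lets us compute $d(x,C)$ \emph{exactly} rather than approximately: this hinges on the fact that $\argmin_{c\in C} d(x,c)$ is order-determined, so the single query lands on the true nearest opened facility. Everything else is bookkeeping plus citing the existing random-order analysis of Meyerson; there is no need to reprove the constant-factor bound. If one wished to be self-contained one could instead invoke the simpler (but weaker) worst-case $O(\log n/\log\log n)$ bound of \citet{Fotakis08} and note that the random-order refinement upgrades this to $O(1)$, but for the stated theorem the cleanest route is the direct appeal to \citet{KaplanNR23}.
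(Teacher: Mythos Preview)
Your proposal is correct and follows essentially the same approach as the paper: the paper's argument is precisely that the algorithm simulates Meyerson's random-order procedure, that the single step requiring cardinal information (computing $d(x,C)$) can be done with one query because the nearest open facility is identified ordinally, and that the constant-factor guarantee then follows from \citet{KaplanNR23}. Your write-up is in fact more explicit than the paper's one-paragraph discussion, but the content is the same.
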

\subsection{Lower Bounds for Facility Location}
\begin{thm}
    For any fixed $\alpha$, every algorithm $\alg$ for facility location that has distortion less than $\alpha$ with at least constant probability must make $\Omega\left(\frac{n}{\alpha}\right)$ distance queries.
\end{thm}
\begin{proof}
We assume that the opening costs per facility are $1$. As before, we first describe the ordinal preferences and then sample a metric from a distribution consistent with these preferences.

\paragraph{The hard instance:}
We group the points into $\frac{n}{s}$ clusters $A_i$, each consisting of $s\in\Omega(\alpha)$ points. Any two points from a cluster $A_i$ prefer each other over any point from some other cluster $A_j$. Moreover, the distances between any two points $p\in A_i$ and $q\in A_j$, $i\neq j$ are set to be $\infty$ (or a sufficiently large number if finite values are required). The preferences inside the clusters, as well as across clusters are arbitrary as long as every point $p\in A_i$ prefers any point $q\in A_i$ over any point $o\in A_j$, $j\neq i$.

The hard input distribution now consists of the following. We select a cluster $A_i$ uniformly at random, and toss a fair coin. With probability $\frac{1}{2}$, all of the points in $A_i$ have pairwise distance $\varepsilon$. With probability $\frac{1}{2}$, the pairwise distances in $A_i$ are chosen to be a sufficiently larger number $N\gg n/s + s-1$. For each cluster other than $A_i$, the distances between any two points within this cluster are $\varepsilon$.

\paragraph{Analysis}
In the case that the pairwise distances in $A_i$ are $\varepsilon$, the optimal solution consists of placing exactly one facility in every cluster, leading to a cost of $n/s$ (if we ignore the arbitrarily small connection costs). In the case that the pairwise distances are $N$, the optimal solution consists of  placing exactly one facility in every cluster $A_j$, $j\neq i$ and placing a facility on every point in $A_i$, leading to an overall cost of $n/s + s-1$.

To distinguish between these two cases, the algorithm has to query at least one distance between two points in $A_i$. Suppose the algorithm makes $Q$ queries. If the algorithm fails to determine whether $A_i$ has pairwise distances $N$ or not, it must place a center on every point of a cluster it has not queried, as otherwise the distortion is $N$ and therefore unbounded.

By Yao's minimax principle, we may assume that the centers queried by the algorithm are fixed until $A_i$ is detected. The probability that $A_i$ is detected is $\frac{Q\cdot s}{n}$. Thus, if $A_i$ is not detected, the algorithm must place $s-1$ additional facilities on each unqueried cluster, that is, $(n/s-Q)\cdot (s-1)$ additional facilities in total. Hence, in the case that the distances between the points in $A_i$ are $\varepsilon$ (which occurs with probability $\frac12$), the algorithm incurs a distortion of $\frac{n/s + (n/s-Q)\cdot (s-1)}{n/s} \in \Omega(\alpha)$ for $Q \in o(n/\alpha)$. Here, we used that we chose the cluster size $s$ such that $s\in\Omega(\alpha)$. The claim now follows by scaling $s$ so that the distortion becomes exactly $\alpha$.
\end{proof}

\end{document}